\title{On the Parameterized Complexity of Grid Contraction} %TODO Please add
\titlerunning{On the Parameterized Complexity Of Grid Contraction}%optional, please use if title is longer than one line
\author{Saket Saurabh}{The Institute Of Mathematical Sciences, HBNI, Chennai, India \\ University of Bergen, Bergen, Norway}{saket@imsc.res.in}{}{
This project has received funding from the European Research Council
\begin{minipage}{0.6\textwidth}
(ERC) under the European Union's Horizon $2020$ research and innovation programme (grant agreement No $819416$), and Swarnajayanti Fellowship (No DST/SJF/MSA01/2017-18).
\end{minipage}
\begin{minipage}{0.3\textwidth}
%    \begin{center}
        \includegraphics[scale=1]{./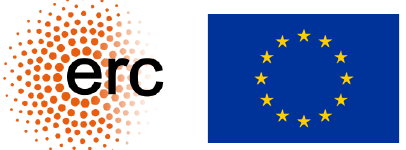}
%    \end{center}
\end{minipage}
}
\author{Uéverton dos Santos Souza}{Fluminense Federal Universidade, Niterói, Brazil}{ueverton@ic.uff.br}{}{}
\author{Prafullkumar Tale}{Max Planck Institute for Informatics, Saarland Informatics Campus,
Saarbr\"ucken, Germany.}{prafullkumar.tale@mpi-inf.mpg.de}{}{This research is a part of a project that has received funding from the European Research Council (ERC) under the European Union's Horizon $2020$ research and innovation programme under grant agreement SYSTEMATICGRAPH (No. $725978$).}
\authorrunning{Saurabh, Souza, and Tale}
\keywords{Grid Contraction, FPT, Kernelization, Lower Bound}%TODO mandatory; please add comma-separated list of keywords
\newcommand{\grid}{\boxplus}
\newcommand{\al}{\alpha}
\newcommand{\bt}{\beta}
\newcommand{\calA}{\mathcal{A}}
\newcommand{\calG}{\mathcal{G}}
\newcommand{\calH}{{\mathcal H}}
\newcommand{\calO}{\ensuremath{{\mathcal O}}}
\newcommand{\OO}{\mathcal{O}}
\newcommand{\calP}{\mathcal{P}}
\newcommand{\calV}{\mathcal{V}}
\newcommand{\calW}{\mathcal{W}}
\newcommand{\WC}{W^c}
\newcommand{\true}{\texttt{True}}
\newcommand{\false}{\texttt{False}}
\newcommand{\ETH}{\textsf{ETH}}
\newcommand{\yes}{\textsc{Yes}}
\newcommand{\no}{\textsc{No}}
\newtheorem{observation}{Observation}[section]
\newtheorem{reduction rule}{Reduction Rule}[section]
\newtheorem{marking-scheme}{Marking Scheme}[section]
\newcommand{\defparproblem}[4]{
  \vspace{1mm}
\noindent\fbox{
  \begin{minipage}{0.96\textwidth}
  \begin{tabular*}{\textwidth}{@{\extracolsep{\fill}}lr} #1  & {\bf{Parameter:}} #3
\\ \end{tabular*}
  {\bf{Input:}} #2  \\
  {\bf{Question:}} #4
  \end{minipage}
  }
  \vspace{1mm}
}
\begin{document}

\maketitle

%TODO mandatory: add short abstract of the document
\begin{abstract}
%Add abstract
For a family of graphs $\mathcal{G}$, the $\mathcal{G}$-\textsc{Contraction} problem takes as an input a graph $G$ and an integer $k$, and the goal is to decide if there exists $F \subseteq E(G)$ of size at most $k$ such that $G/F$ belongs to $\mathcal{G}$.
Here, $G/F$ is the graph obtained from $G$ by contracting all the edges in $F$.
In this article, we initiate the study of \textsc{Grid Contraction} from the parameterized complexity point of view.
We present a fixed parameter tractable algorithm,  running in time $c^k \cdot |V(G)|^{\calO(1)}$, for this problem.
We complement this result by proving that unless \ETH\ fails, there is no algorithm for \textsc{Grid Contraction} with running time $c^{o(k)} \cdot |V(G)|^{\calO(1)}$.
We also present a polynomial kernel for this problem.

\end{abstract}

%\newpage
%\setcounter{page}{1}
\section{Introduction}
Graph modification problems are one of the central problems in graph
theory that have received a lot of attention in theoretical computer
science. Some of the important graph modification operations are
vertex deletion, edge deletion, and edge contraction.
For graph $G$, any graph that can be obtained from $G$ by using these
three types of modifications is called a \emph{minor} of $G$.
If only the first two types of modification operations are allowed then resulting
graph is said to a \emph{subgraph} of $G$.
If the only third type of
modification is allowed then the resulting graph is called a
\emph{contraction} of $G$.

For two positive integer $r, q$, the $(r \times q)$-grid is a graph in
which every vertex is assigned a unique pair of the form $(i, j)$ for $1 \le i \le r$ and $1 \le j \le l$.
A pair of vertices $(i_1, j_1)$ and $(i_2, j_2)$ are adjacent with
each other if and only if $|i_1 - i_2| + |j_1 - j_2| = 1$.
There has been considerable attention to the problem of obtaining a
grid as a minor of the given graph.
We find it surprising that the very closely related question of obtaining a grid as a contraction did not receive any attention.
In this article, we initiate a study of this problem from the parameterized complexity point of view.

The {\em contraction} of edge $uv$ in simple graph $G$ deletes vertices $u$ and $v$ from $G$, and replaces them by a new vertex, which is made adjacent to vertices that were adjacent to either $u$ or $v$.
Note that the resulting graph does not contain self-loops and multiple edges.
A graph $G$ is said to be \emph{contractible} to graph $H$ if $H$ can be obtained from $G$ by edge contractions.
Equivalently, $G$ is contractible to $H$ if $V(G)$ can be partitioned into $|V(H)|$ many connected sets, called witness sets, and these sets can be mapped to vertices in $H$ such that adjacency between witness sets is consistent with their mapped vertices in $H$.
If such a partition of $V(G)$ exists then we call it $H$-witness structure of $G$.
A graph $G$ is said to be \emph{$k$-contractible} to $H$ if $H$ can be obtained from $G$
by $k$ edge contractions.
For a family of graphs $\calG$, the $\calG$-\textsc{Contraction} problem
takes as an input a graph $G$ and an integer $k$, and the objective is
to decide if $G$ is $k$-contractible to a graph $H$ in $\calG$.

\vspace{0.1cm}
\noindent \textbf{Related Work :}
Early papers of Watanabe et al. \cite{contractEarly2, contractEarly3},
Asano and Hirata \cite{asano1983edge} showed
$\calG$-\textsc{Contraction} is \NP-Complete for various class of
graphs like planar graphs, outer-planar graphs, series-parallel graphs, forests, chordal graphs.
Brouwer and Veldman proved that it is \NP-Complete even to determine
whether a given graph can be contracted to a path of length four or
not \cite{brouwer1987contractibility}.
In the realm of parameterized complexity, $\calG$-\textsc{Contraction}
has been studied with the parameter being the number of edges allowed to
be contracted.
It is known that $\calG$-\textsc{Contraction} admits an \FPT\
algorithm when $\calG$ is set of paths \cite{tree-contraction}, trees \cite{tree-contraction}, cactus
\cite{krithika2018fpt}, cliques \cite{cai2013contracting}, planar
graphs \cite{planarContract} and bipartite graphs
\cite{bipartiteContract, bipartiteContract2}.
For a fixed integer $d$, let $\calH_{\ge d}, \calH_{\le d}$ and $\calH_{=d}$ denote the set of graphs with minimum degree at least $d$, maximum degree at most $d$, and $d$-regular graphs, respectively.
Golovach et al.~\cite{golovach2013increasing} and Belmonte et al.~\cite{Belmonte:2014} proved that $\calG$-\textsc{Contraction} admits an \FPT\ algorithm when $\calG \in \{\calH_{\ge d}, \calH_{\le d}, \calH_{= d}\}$.
When $\calG$ is split graphs or chordal graphs, the $\calG$-\textsc{Contraction} is known to be $\W[1]$-hard~\cite{agrawal2019split} and
$\W[2]$-hard \cite{elimNew, cai2013contracting}, respectively.
To the best of our knowledge, it is known that $\calG$-\textsc{Contraction} admits a polynomial kernel only when $\calG$ is a set of paths \cite{tree-contraction} or set of paths or cycle i.e. $\calH_{\le 2}$ \cite{Belmonte:2014}.  
It is known that $\calG$ does not admit a polynomial kernel, under standard complexity assumptions, when $\calG$ is set of trees~\cite{tree-contraction}, cactus~\cite{lossy-fst}, or cliques~\cite{cai2013contracting}.

\vspace{0.1cm}
\noindent \textbf{Our Contribution :}
In this article we study parameterized complexity of \textsc{Grid
  Contraction} problem. We define the problem as follows.
  
\defparproblem{\textsc{Grid Contraction}}{Graph $G$ and integer $k$}{$k$}{Is $G$ $k$-contractible to a grid?}

To the best of our knowledge, the computation complexity of the problem is
not known nor it is implied by the existing results regarding edge
contraction problems.
We prove that the problem is indeed \NP-Complete (Theorem~\ref{thm:lower-bound}). 
We prove that there exists an \FPT\ algorithm which given an instance
$(G, k)$ of \textsc{Grid Contraction} runs in time $4^{6k} \cdot
|V(G)|^{\calO(1)}$ and correctly concludes whether it is a \yes\
instance or not (Theorem~\ref{theorem:grid-contraction}).
We complement this result by proving that unless \ETH\ fails there is no algorithm for \textsc{Grid Contraction} with running time $2^{o(k)}\cdot |V(G)|^{\calO(1)}$ (Theorem~\ref{thm:lower-bound}).
We present a polynomial kernel with $\calO(k^4)$ vertices and edges for \textsc{Grid Contraction} (Theorem~\ref{thm:kernel-grid}).

\vspace{0.1cm}
\noindent \textbf{Our Methods :} Our \FPT\ algorithm for \textsc{Grid
  Contraction} is divided into two phases. In the first phase, we introduce a
restricted version of \textsc{Grid Contraction} problem called \textsc{Bounded Grid Contraction}.
In this problem, along with a graph $G$ and an integer $k$, an input
consists of an additional integer $r$.
The objective is to determine whether graph $G$ can be
$k$-contracted to a grid with $r$ rows.
We present an \FPT\ algorithm parameterized by $(k + r)$ for this problem.
This algorithm is inspired by the exact exponential algorithm for
\textsc{Path Contraction} in \cite{path-contraction}.
It is easy to see that an instance $(G, k)$ is a \yes\ instance of
\textsc{Grid Contraction} if and only if $(G, k, r)$ is a \yes\
instance of \textsc{Bounded Grid Contraction} for some $r$ in $\{1, 2,
\dots, |V(G)|\}$.
In the second phase, given an instance $(G, k)$ of
\textsc{Grid Contraction} we  produce
polynomially many instances of \textsc{Bounded Grid Contraction} such
that -- $(a)$ the input instance is a \yes\ instance if and only if at least one of the produced instances is
a \yes\ instance and
$(b)$ for any produced instance, say $(G', k', r)$, we have $k' = k$ and $r \in \{1, 2,
\dots, 2k + 5\}$.
We prove that all these instances can be produced in time polynomial
in the size of the input.
An \FPT\ algorithm for \textsc{Grid
Contraction} is a direct consequence of these two results.
We use techniques presented in the second phase to obtain a polynomial kernel
for \textsc{Grid Contraction}.

We present a brief overview of the \FPT\ algorithm for \textsc{Bounded
  Grid Contraction}.
\emph{Boundary vertices} of a subset $S$ of $V(G)$ are the vertices
in $S$ which are adjacent to at least one vertex in $V(G) \setminus
S$.
A subset $S$ of $V(G)$ is \emph{nice} if both $G[S], G - S$ are
connected, and $G[S]$ can be contracted to a $(r\times q)$-grid with
all boundary vertices in $S$ in an end-column for some integer $q$.
In other words, a subset $S$ of $V(G)$ is nice if it is a union of witness
sets appearing in first few columns in some grid witness structure of
$G$.
See Definition~\ref{def:nice-subset}.
The objective is to keep building a \emph{special partial solution} for some nice subsets. 
In this special partial solution, all boundary vertices of a
particular nice subset are contained in bags appearing in an end-column. 
This partial solution is then extended to the remaining graph.
The central idea is -- \emph{for a nice subset $S$ of graph $G$, if $G[S]$ can be contracted to a grid such that all boundary vertices of $S$ are in an end bag then how one contract $G[S]$ is irrelevant}.
This allows us to store one solution for $G[S]$ and build a dynamic
programming table nice subsets of vertices.
The running time of such an algorithm depends on the following two quantities $(i)$ the number of possible entries in the dynamic programming table, and $(ii)$ time spent at each entry.
We prove that \emph{to bound both these quantities as a function of $k$, it is sufficient to know the size of neighborhood of $S$ and the size of the union of witness sets in an end-column in a grid contraction of $G[S]$ which contains all boundary vertices of $S$}.

In the second phase, we first check whether a given graph $G$ can be
$k$-contracted to a grid with $r$ rows for $r \in \{1, 2, \dots, 2k +
5\}$ using the algorithm mentioned in the previous paragraph.
If for any value of $r$ it returns \yes\ then we can conclude that
$(G, k)$ is a \yes\ instance of \textsc{Grid Contraction}.
Otherwise, we argue that there exists a \emph{special} separator $S$ in $G$ which
induces a $(2 \times q)$ grid for some positive integer $p$.
We prove that it is safe to contract $q$ vertical edges in $G[S]$.
Let $G'$ be the graph obtained from $G$ by contracting these parallel edges.
Formally, we argue that $G$ is $k$-contractible to a $(r' \times
q)$-grid if and only if $G'$ is $k$-contractible to $((r' - 1) \times
q)$-grid.
We keep repeating the process of finding a special separator and
contracting parallel edges in it until one of the following things
happens --
$(a)$ The resultant graph is $k$-contractible to a $(r' \times
q)$-grid for some $r' < 2k + 5$.
$(b)$ The resultant graph does not contain a special separator.
We argue that in Case~$(b)$, it is safe to conclude that $(G, k)$ is a
\no\ instance for \textsc{Grid Contraction}.

\vspace{0.1cm}
\noindent \textbf{Organization of the paper :} We present some
preliminary notations which will be used in rest of the paper in
Section~\ref{sec:prelims}.
We present a crucial combinatorial lemma in
Section~\ref{sec:comb_lemma}.
As mentioned earlier, this algorithm is divided into two phases.
We present the first and the second phase in
Section~\ref{sec:fpt-bounded-grid} and \ref{sec:fpt-grid}, respectively.
Section~\ref{sec:fpt-grid} also contains an \FPT\ algorithm for \textsc{Grid Contraction}.
We prove that the dependency on the parameter in the running time of
this algorithm is optimal, up to a constant factor, unless ETH fails in
Section~\ref{sec:np-complete-lower-bound}.
In Section~\ref{sec:kernel}, we present a polynomial kernel for
\textsc{Grid Contraction} problem.
We conclude the paper with an open question in Section~\ref{sec:conclusion}.
\section{Preliminaries}
\label{sec:prelims}
For a positive integer $k$, $[k]$ denotes the set $\{1,2,\ldots, k\}$.

\subsection{Graph Theory}
In this article, we consider simple graphs with a finite number of vertices.
%We use standard notation from graph theory \cite{diestel-book}. 
For an undirected graph $G$, sets $V(G)$ and $E(G)$ denote its set of vertices and edges respectively. Two vertices $u, v$ in $V(G)$ are said to be \emph{adjacent} if there is an edge $uv$ in $E(G)$. 
The neighborhood of a vertex $v$, denoted by $N_G(v)$, is the set of vertices adjacent to $v$ and its degree $d_G(v)$ is $|N_G(v)|$.
The subscript in the notation for neighborhood and degree is omitted if the graph under consideration is clear.
For a set of edges $F$, set $V(F)$ denotes the collection of endpoints of edges in $F$.
For a subset $S$ of $V(G)$, we denote the graph obtained by deleting $S$ from $G$ by $G - S$ and the subgraph of $G$ induced on the set $S$ by $G[S]$. 
For two subsets $S_1, S_2$ of $V(G)$, we say $S_1, S_2$ are adjacent if there exists an edge with one endpoint in $S_1$ and other in $S_2$. 
For a subset $S$ of $V(G)$, let $\Phi(S)$ denotes set of vertices in $S$ which are adjacent with at least one vertex outside $S$. Formally, $\Phi(S) = \{s \in S |\ N(s) \setminus S \neq \emptyset\}$. These are also called \emph{boundary vertices} of $S$.

A {\em path} $P=(v_1,\ldots,v_l)$ is a sequence of distinct vertices where every consecutive pair of vertices is adjacent.
For two vertices $v_1, v_2$ in $G$, $dist(v_1, v_l)$ denotes the length of a shortest path between these two vertices.
A graph is called {\em connected} if there is a path between every pair of distinct vertices. It is called {\em disconnected} otherwise.
A set $S$ of $V(G)$ is said to be a \emph{connected set} if $G[S]$ is connected.
For two vertices $v_1, v_2$ in $G$, a set $S$ is called ($v_1$-$v_2$)-\emph{separator}, if any $v_1$-$v_2$ paths intersects $S$. 
If a set is a ($v_1$-$v_2$)-separator as well as ($v_3$-$v_4$)-separator then we write it as \{($v_1$-$v_2$), ($v_3$-$v_4$)\}-separator.

For two positive integer $r, q$, the $(r \times q)$-grid is a graph on $r\cdot q$ vertices.
The vertex set of this graph consists of all pairs of the form $(i, j)$ for $1 \le i \le r$ and $1 \le j \le q$.
A pair of vertices $(i_1, j_1)$ and $(i_2, j_2)$ are adjacent with each other if and only if $|i_1 - i_2| + |j_1 - j_2| = 1$.
We say that such a graph is a grid with $r$ rows and $q$ columns.
It is called a $(r \times q)$-grid and is denoted by $\grid_{r \times q}$.
We use $\grid$ to denote a grid with an unspecified number of rows and columns. 
The vertices in grid $\grid$ are denoted by $\grid[i, j]$ or simply by $[i, j]$.
Note that the grid with exactly one row is a path.
To remove some corner cases, we consider grids that have at least two rows and two columns. 
Any grid contains exactly four vertices that have degree two. These vertices are called \emph{corner vertices}. 
Let $t_1 = [1, 1]$, $t_2 = [1, q]$, $t_3 = [r, q]$, and $t_4 = [r, 1]$ be the corner vertices in grid $\grid_{r \times q}$.

\begin{observation}\label{obs:corner-sep-size} If $\hat{S}$ is a connected \{($t_1$-$t_4$), ($t_2$-$t_3$)\}-separator in $\grid_{r \times q}$ then its size is at least $q$. Moreover, if $|\hat{S}| = q$ then it corresponds to a row in $\grid_{r \times q}$. 
\end{observation}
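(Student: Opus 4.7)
The plan is to first show that $\hat{S}$ must intersect every column of $\grid_{r \times q}$, which immediately yields $|\hat{S}| \ge q$, and then to analyse the equality case.

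For the lower bound, I would argue by contradiction and suppose that $\hat{S}$ misses some column $j^\star$. The crucial structural property of $\grid_{r \times q}$ is that every edge changes the column coordinate by at most one, so any path from a vertex of column index $<j^\star$ to a vertex of column index $>j^\star$ must use a vertex of column $j^\star$. Since $\hat{S}$ is connected in $\grid_{r \times q}$ and entirely avoids column $j^\star$, it must lie on a single side, that is, either $\hat{S}\subseteq\{[i,j] : j<j^\star\}$ or $\hat{S}\subseteq\{[i,j] : j>j^\star\}$. In the first case the entire column $q$ survives in $\grid_{r \times q}-\hat{S}$, so $t_2=[1,q]$ and $t_3=[r,q]$ remain joined by the vertical path along column $q$, contradicting the assumption that $\hat{S}$ is a $(t_2$-$t_3)$-separator. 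In the second case column $1$ survives and yields a $t_1$-$t_4$ path, again a contradiction. Hence $\hat{S}$ meets every one of the $q$ columns, giving $|\hat{S}|\ge q$.

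For the ``moreover'' part, suppose $|\hat{S}|=q$. Then $\hat{S}$ contains exactly one vertex $s_j=[i_j,j]$ per column $j\in[q]$. Two vertices $[i_j,j]$ and $[i_{j'},j']$ are adjacent in the grid iff $|j-j'|=1$ and $i_j=i_{j'}$, so the only candidate edges inside the subgraph of $\grid_{r \times q}$ induced by $\hat{S}$ are between $s_j$ and $s_{j+1}$, and such an edge is present exactly when $i_j=i_{j+1}$. Connectedness of $\hat{S}$ therefore forces every consecutive pair $s_j, s_{j+1}$ to be adjacent, which gives $i_1=i_2=\cdots=i_q=:i^\star$. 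Consequently $\hat{S}=\{[i^\star,j] : j\in[q]\}$, which is precisely the $i^\star$-th row of $\grid_{r \times q}$.

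I do not anticipate a genuine obstacle; the only delicate step is the ``one-side'' claim in the contradiction, which uses crucially both that grid edges move the column coordinate by at most one and that $\hat{S}$ is connected.
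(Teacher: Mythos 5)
Your proof is correct, and it reaches the bound by a somewhat different mechanism than the paper. The paper first uses the separator property to force two \emph{anchor} vertices $[i,1]$ and $[i',q]$ into $\hat{S}$ (one hit on the first-column $t_1$-$t_4$ path, one on the last-column $t_2$-$t_3$ path), and then uses connectedness to produce a path in $\hat{S}$ joining the anchors; such a path has at least $(q-1)+|i-i'|+1 \ge q$ vertices, and equality forces $i=i'$ and the set to be that row. You instead invoke connectedness first: a connected set avoiding some column of $\grid_{r\times q}$ must lie entirely on one side of that column, so one of the two end columns survives intact and the corresponding corner pair remains joined, contradicting the separator property; hence $\hat{S}$ meets all $q$ columns and $|\hat{S}|\ge q$. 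Both arguments are sound and of comparable length. Your version makes the equality case slightly more airtight: with exactly one vertex per column, the only edges available inside $\hat{S}$ join consecutive columns in the same row, so connectedness immediately pins every vertex to a single row; the paper's equality step instead leans on the (true but not spelled out) fact that a $q$-vertex connected set containing both endpoints of a row must be exactly that row.
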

\begin{proof} 
Without loss of generality, we assume that $t_1 \equiv [1, 1], t_2 \equiv [1, q], t_3 \equiv [r, q]$, and $t_4 \equiv [r, 1]$. 
Consider the $(t_1-t_4)$ path which contains vertices in the first column of the grid and the $(t_2-t_3)$ path which contains all vertices in the last column.
Since $\hat{S}$ separates $t_1$ from $t_4$ and $t_2$ from $t_3$, it contains at least two vertices of the form $[i, 1]$ and $[i', q]$ for some $i, i'$ in $\{1, 2, \dots, r\}$. 
Since $\hat{S}$ is connected then it contains a path connecting these two vertices. 
As any path connecting these two vertices contains at least $q - 1 + |i - i'| + 1 \le q$ vertices, the size of $\hat{S}$ is at least $q$. 
If $|\hat{S}| = q$ then $|i - i'| = 0$. 
This implies two endpoints of a row are contained in $\hat{S}$. 
Since $\hat{S}$ is of size $q$, vertices that are present in $\hat{S}$ are from one row.
This proves the second part of the observation.
\end{proof}

\subsection{Graph Contraction}

The {\em contraction} of edge $uv$ in $G$ deletes vertices $u$ and $v$ from $G$, and adds a new vertex, which is made adjacent to vertices that were adjacent to either $u$ or $v$.
Notice that no self-loop or parallel edge is introduced in this process.
The resulting graph is denoted by $G/e$.
For a given graph $G$ and edge $e = uv$, we formally define $G/e$ in the following way: $V(G/e) = (V(G) \cup \{w\}) \backslash \{u, v\}$ and $E(G/e) = \{xy \mid x,y \in V(G) \setminus \{u, v\}, xy \in E(G)\}  \cup \{wx |\ x \in N_G(u) \cup N_G(v)\}$.
Here, $w$ is a new vertex which was not in $V(G)$.
An edge contraction reduces the number of vertices in a graph by exactly one.
Several edges might disappear due to one edge contraction. 
%For a subset of edges $F$ in $G$, graph $G/ F$ denotes the graph obtained from $G$ by repeatedly contracting edge in $F$ until no such edges remain.
For a subset of edges $F$ in $G$, graph $G/ F$ denotes the graph obtained from $G$ by contracting each connected component in the sub-graph $G' = (V(F), F)$ to a vertex.

\begin{definition}[Graph Contraction] \label{def:graph-contractioon} A graph $G$ is said to be \emph{contractible} to graph $H$ if there exists an onto function $\psi: V(G) \rightarrow V(H)$ such that following properties hold.
\begin{itemize}
\item For any vertex $h$ in $V(H)$, graph $G[W(h)]$ is connected and not empty, where set $W(h) := \{v \in V(G) \mid \psi(v)= h\}$.
\item For any two vertices $h, h'$ in $V(H)$, edge $hh'$ is present in $H$ if and only if there exists an edge in $G$ with one endpoint in $W(h)$ and another in $W(h')$.
\end{itemize}
\end{definition}

We say graph $G$ is contractible to $H$ via mapping $\psi$.
For a vertex $h$ in $H$, set $W(h)$ is called a \emph{witness set} associated with/corresponding to $h$. 
We define $H$-\emph{witness structure} of $G$, denoted by $\mathcal{W}$, as a collection of all witness set.
Formally, $\mathcal{W}=\{W(h) \mid h \in V(H)\}$.
A witness structure $\mathcal{W}$ is a partition of vertices in $G$. 
If a \emph{witness set} contains more than one vertex then we call it \emph{big} witness-set, otherwise it is \emph{small/singleton} witness set. 

If graph $G$ has a $H$-witness structure then graph $H$ can be obtained from $G$ by a series of edge contractions.
For a fixed $H$-witness structure, let $F$ be the union of spanning trees of all witness sets.
By convention, the spanning tree of a singleton set is an empty set. To obtain graph $H$ from $G$, it is necessary and sufficient to contract edges in $F$.
We say graph $G$ is \emph{$k$-contractible} to $H$ if cardinality of $F$ is at most $k$.
In other words, $H$ can be obtained from $G$ by at most $k$ edge contractions.
The following observations are immediate consequences of definitions.
\begin{observation}
  \label{obs:witness-structure-property} If graph $G$ is $k$-contractible to graph $H$ via mapping $\psi$ then following statements are true.  
\begin{enumerate}
\item $|V(G)| \leq |V(H)|+ k$. 
%\item For any witness set $W$ in a $H$-witness structure of $G$, cardinality of $W$ is at most $k+1$.
\item Any $H$-witness structure of $G$ has at most $k$ big witness sets.
\item For a fixed $H$-witness structure, the number of vertices in $G$ which are contained in big witness sets is at most $2k$.
\item If $S$ is a $(x_1-x_2)$-separator in $G$ then $\psi(S)$ is a $(\psi(s_1)-\psi(s_2))$-separator in $H$.
\item If $S$ is a separator in $G$ such that there are at least two connected components of $G \setminus S$ which has at least $k + 1$ vertices, then $\psi(S)$ is a separator in $H$.
\end{enumerate}
\end{observation}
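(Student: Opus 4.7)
The plan is to dispatch all five items by exploiting the single arithmetic identity that one edge contraction reduces the vertex count by one, equivalently that collapsing a witness set of size $s$ consumes exactly $s-1$ contractions. Items $(1)$--$(3)$ reduce to one-line counting arguments, item $(4)$ is a standard path-lifting exercise, and item $(5)$ is the place where I expect to need some care.

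For $(1)$ I would simply observe that after at most $k$ contractions the vertex count of $G$ drops by at most $k$, giving $|V(G)| - k \le |V(H)|$. For $(2)$, each big witness set requires at least one contraction internally, so there can be at most $k$ of them. For $(3)$, if the big witness sets have sizes $s_1, \dots, s_b \ge 2$, then $\sum_i (s_i - 1) \le k$ combined with $s_i - 1 \ge s_i/2$ yields $\sum_i s_i \le 2k$, which is exactly the total population of big witness sets.

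For $(4)$, I would argue the contrapositive by path-lifting. Given any $\psi(x_1)$-$\psi(x_2)$ path $h_0, h_1, \dots, h_\ell$ in $H$ that avoids $\psi(S)$, each edge $h_j h_{j+1}$ reflects an actual edge $u_j v_{j+1}$ of $G$ with $u_j \in W(h_j)$ and $v_{j+1} \in W(h_{j+1})$. Using the connectedness of each $W(h_j)$ inside $G$, I would stitch these edges together with intra-witness-set paths to obtain an $x_1$-$x_2$ walk in $G$ that stays inside $\bigcup_j W(h_j)$ and therefore avoids $S$, contradicting the hypothesis that $S$ separates $x_1$ from $x_2$.

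Part $(5)$ is the main obstacle. I would reduce it to the statement that each of the two large components $C_1, C_2$ of $G \setminus S$ contains at least one witness set lying entirely inside it; once that is established, the lifting trick of $(4)$ immediately forces $\psi(S)$ to separate the images of any such pair in $H$. Naively one might try to deploy the ``at most $2k$'' bound from $(3)$, but it is too weak at this scale. The cleaner route is a direct contraction count: assume for contradiction that no witness set is contained in $C_1$. Then every witness set touching $C_1$ must also contain a vertex of $S$ (since it is connected in $G$ and $S$ is the only way out of $C_1$), so each such $W$ satisfies $|W| - 1 \ge |W \cap C_1|$. Summing this inequality over all witness sets touching $C_1$, which together partition $C_1$, yields a lower bound of $|C_1| \ge k+1$ on the contractions already absorbed inside $C_1$ alone, exceeding the global budget of $k$. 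The analogous argument applied to $C_2$ completes the proof.
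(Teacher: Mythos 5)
Your proposal is correct and follows essentially the same route as the paper: items (1)--(3) are the immediate counting facts the paper dismisses as ``directly from the definitions,'' item (4) is the same path-correspondence argument (you lift $H$-paths to $G$-walks where the paper projects $G$-paths to $H$-walks, two phrasings of the same idea), and item (5) is proved exactly as the paper intends, by exhibiting in each large component a vertex whose image avoids $\psi(S)$ and then invoking (4). Your explicit count $\sum_W |W\cap C_1| \le \sum_W(|W|-1)\le k < k+1$ in fact supplies the justification that the paper's one-line assertion leaves implicit, and your remark that the $2k$ bound from (3) would be too weak here is accurate.
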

\begin{proof} The proof of $(1), (2)$ and $(3)$ follows directly from the definitions.

\vspace{0.2cm}
\noindent $(4)$ Consider any ($x_1-x_2$)-path $P$ in $G$.
Note that $\psi(P)$ corresponds to a $(\psi(x_1)$-$\psi(x_2))$-path in $H$ (with possible repetition of vertices).
Since $S$ is a $(x_1-x_2)$-separator, every $(x_1-x_2)$-path intersects with $S$. This implies that $\psi(P)$ intersects $\psi(S)$. Since $P$ is an arbitrary $(x_1-x_2)$-path in $G$, we can conclude that every $(\psi(x_1) - \psi(x_2))$-path in $H$ intersects $\psi(S)$. Hence $\psi(S)$ is a $(\psi(s_1)-\psi(s_2))$-separator in $H$. 

\vspace{0.2cm}
\noindent $(5)$ Let $C_a$ and $C_b$ be two connected components of $G \setminus S$ which has at least $k + 1$ vertices.
Since, $G$ is $k$-contractible to $H$, there exists a vertex $v_a$ in $C_a$ (similarly, $v_b$ in $C_b$) such that $\psi(v_a) \neq \psi(S)$ (similarly, $\psi(v_b) \neq \psi(S)$). Hence, $\psi(S)$ is a $(\psi(v_a) - \psi(v_b))$-separator in $H$.
\end{proof}

\subsection{Preliminary Result Regarding Grid Contraction}

Suppose we are given a graph $G$ with a mapping $\psi$ such that $G$ is $k$-contractible to $\grid_{r \times q}$ via $\psi$.
We define a notation of \emph{partible row} in $\grid_{r \times q}$ using mapping $\psi$.
We argue that if $\grid_{r \times q}$ contains a partible row then one can \emph{un-contract} all vertices in this row to obtain a larger grid from $G$.

\begin{definition}[Partible row] \label{def:partible-row} Consider a graph $G$ which is $k$-contractible to $\grid_{r \times q}$ via mapping $\psi$. The $i_o^{th}$ row in $\grid_{r \times q}$ is said to be \emph{partible} if 
for every $j$ in $[q]$, set 
$\psi^{-1}([i_o, j])$ can be partitioned into non empty sets $U_j$ and $V_j$ which satisfy following properties:
\begin{itemize}
\item $G[U_j]$ and $G[V_j]$ are connected.
\item $U_j$ and $V_{j'}$ are adjacent if and only if $j = j'$.
\item $U_j$ and $U_{j'}$ (similarly $V_j$ and $V_{j'}$) are adjacent if and only if $|j - j'| = 1$. 
\item Let $U = \bigcup_{j \in [q]} U_j$ and $V = \bigcup_{j \in [q]} V_j$. If sets $U, V$ are adjacent with $C_b, C_f$, respectively, then sets $U, C_f$ (sets $V, C_b$) are not adjacent. 
\end{itemize}  
Here, $C_b := \{x \in V(G) |\ \psi(x) = [i, j] \text{ for some } i < i_o \text{ and } j \in [q]\}$; $C_f := \{x \in V(G) |\ \psi(x) = [i, j] \text{ for some } i > i_o \text{ and } j \in [q]\}$.
\end{definition}

\begin{lemma}\label{lemma:partible-row} Consider a graph $G$ which is $k$-contractible to $\grid_{r \times q}$ via mapping $\psi$. If $\grid_{r \times q}$ has a partible row then $G$ is $(k - q)$ contractible to $\grid_{(r + 1) \times q}$.
\end{lemma}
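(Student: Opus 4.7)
The plan is to construct an explicit mapping $\psi':V(G)\to V(\grid_{(r+1)\times q})$ that splits the partible row $i_o$ into two consecutive rows of a taller grid, check that $\psi'$ realizes a valid witness structure for $\grid_{(r+1)\times q}$, and then read off the contraction count from the vertex counts.

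For the construction I would set $\psi'(v) = [i,j]$ whenever $\psi(v) = [i,j]$ with $i < i_o$; $\psi'(v) = [i+1,j]$ whenever $\psi(v) = [i,j]$ with $i > i_o$; $\psi'(v) = [i_o,j]$ whenever $v\in U_j$; and $\psi'(v) = [i_o+1,j]$ whenever $v\in V_j$. Informally, rows above $i_o$ stay fixed, rows below $i_o$ shift down by one, and the partible row cleanly separates into an upper copy $U$ and a lower copy $V$ in the new grid. Well-definedness on the partible row is exactly the statement that the sets $U_j$ and $V_j$ partition $\psi^{-1}([i_o,j])$, so this defines a function on all of $V(G)$.

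The verification splits into two checks, required by Definition~\ref{def:graph-contractioon}. First, every fibre $(\psi')^{-1}([i',j'])$ is non-empty and induces a connected subgraph: outside the two new rows the fibre coincides with a fibre of $\psi$, and on the two new rows it is $U_j$ or $V_j$, which satisfy this by the first bullet of Definition~\ref{def:partible-row}. Second, the adjacency between fibres in $G$ must match adjacency in $\grid_{(r+1)\times q}$. Adjacencies not touching the two new rows are inherited from the $\psi$-witness structure. Horizontal adjacencies inside each new row and the vertical adjacencies between $U_j$ and $V_j$ come verbatim from bullets 2 and 3 of Definition~\ref{def:partible-row}. The non-adjacency of $U_j$ and $V_{j'}$ for $j\neq j'$ is bullet 2; non-adjacencies between $U_j,U_{j'}$ (and $V_j,V_{j'}$) with $|j-j'|\ge 2$ are inherited because the original witness sets $\psi^{-1}([i_o,j])$ and $\psi^{-1}([i_o,j'])$ were already non-adjacent in $G$.

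The main obstacle, and the only place bullet 4 is used, is the vertical interface between each new row and its neighbour inside $C_b$ or $C_f$. I need to show that $U_j$ is adjacent in $G$ to $\psi^{-1}([i_o-1,j])$ and not to $\psi^{-1}([i_o+1,j])$, and symmetrically that $V_j$ is adjacent to $\psi^{-1}([i_o+1,j])$ but not to $\psi^{-1}([i_o-1,j])$. Bullet 4 gives the global non-adjacencies $U\cap N(C_f)=\emptyset$ and $V\cap N(C_b)=\emptyset$ (when the relevant row exists). Combined with the original fact that $\psi^{-1}([i_o,j]) = U_j\cup V_j$ must have at least one edge to each of $\psi^{-1}([i_o\pm 1,j])$ (since these correspond to adjacent grid vertices under $\psi$), the required edges are forced into the correct half and the forbidden edges are ruled out. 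This is the only delicate combinatorial step.

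For the count, any $H$-witness structure of $G$ uses exactly $|V(G)| - |V(H)|$ edge contractions (the spanning-forest edges of the witness sets). From $\psi$ we have $|V(G)| - rq \le k$, so $\psi'$ uses $|V(G)| - (r+1)q \le k-q$ contractions, establishing that $G$ is $(k-q)$-contractible to $\grid_{(r+1)\times q}$.
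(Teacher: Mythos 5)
Your proposal is correct and follows essentially the same route as the paper: the identical re-mapping (rows above $i_o$ fixed, rows below shifted down, $U_j$ and $V_j$ becoming rows $i_o$ and $i_o+1$), the same verification of connectivity and adjacency against Definition~\ref{def:graph-contractioon} using the bullets of Definition~\ref{def:partible-row} (with bullet 4 doing the work at the $C_b$/$C_f$ interface), and the same vertex-count argument for the bound $k-q$.
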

\begin{proof}  
Let $i_o^{th}$ row be a partible row in $\grid_{r \times q}$.
For $j$ in $[q]$, let $U_j$ and $V_j$ be the partition of $\psi^{-1}([i_o, j])$ which satisfy properties mentioned in Definition~\ref{def:partible-row}.
Also, let $U = \bigcup_{j \in [q]} U_j$ and $V = \bigcup_{j \in [q]} V_j$.
Let set $C_b$ (set $C_f$) be the collection of vertices in $G$ which are mapped to vertices in rows $\{1, 2, \dots, i_o-1\}$ (in rows $\{i_o + 1, \dots, r\}$). 
Formally, $C_b = \{x \in V(G) |\ \psi(x) = [i, j] \text{ for some } i < i_o \text{ and } j \in [q]\}$ and $C_f = \{x \in V(G) |\ \psi(x) = [i, j] \text{ for some } i > i_o \text{ and } j \in [q]\}$.
Without loss of generality, we can assume that $U, V$ are adjacent with $C_b, C_f$, respectively. 
Since set $U$ (set $V$) is a separator in $G$, sets $U, C_f$ (sets $V, C_b$) are not adjacent with each other. 
Note that $\{C_b, U, V, C_f\}$ is a partition of $V(G)$.  
We define a function $\phi : V(G) \rightarrow \grid_{(r + 1) \times q}$ on $V(G)$ as follows: for every $x \in C_b$, $\phi(x) = \psi(x)$; for every $x \in C_f$, if $\psi(s) = [i, j]$ then $\phi(x) = [i' + 1, j]$; for every $x \in U_j$, $\phi(x) = [i_o, j]$; and for every $x \in V_j$, $\phi(x) = [i_o + 1, j]$. 
Since $U_j, V_j$ are non-empty sets and $\psi$ is an onto function, $\phi$ is also an onto function. 
We argue that $\phi$ satisfy both the properties mentioned in Definition~\ref{def:graph-contractioon}.

For every vertex $[i, j]$ in $\grid_{r \times q}$, set $\psi^{-1}([i, j])$ is connected in $G$.  
Since $G[U_j], G[V_j]$ are connected, for every $[i, j]$ in $\grid_{(r + 1) \times q}$, set $\phi^{-1}([i, j])$ is connected in $G$.
This proves the first property in Definition~\ref{def:graph-contractioon}.
%Consider two vertices $[i_1, j_1]$ and $[i_2, j_2]$ in $\grid_{(r + 1) \times q}$. 
To prove the second property, we argue that any two vertices, say $[i_1, j_1]$ and $[i_2, j_2]$, in $\grid_{(r + 1) \times q}$ are adjacent with each other if and only if $\phi^{-1}([i_1, j_1])$ and $\phi^{-1}([i_2, j_2])$ are adjacent with each other.
Without loss of generality, we can assume that $i_1 \le i_2$.
Depending on the position of these two vertices in $\grid_{(r + 1) \times q}$, we consider following five cases: 
$(i)$ $i_1, i_2 < i_o$, 
$(ii)$ $i_1 < i_o$ and $i_2 \in \{i_o, i_o + 1\}$,
$(iii)$ $\{i_1, i_2\} \subseteq \{i_o, i_o + 1\}$,
$(iv)$ $i_1 \in \{i_o, i_o + 1\}$ and $i_2 > i_0 + 1$, and
$(v)$ $i_1, i_2 > i_o + 1$.

Consider Case~$(i)$. By definition of $\phi$, for $i_1, i_2 < i_o$ we have $\phi^{-1}([i_1, j_1]) = \psi^{-1}([i_1, j_1])$ and $\phi^{-1}([i_2, j_2]) = \psi^{-1}([i_2, j_2])$.
For $i_1, i_2 < i_o$, by the properties of $\psi$, there is an edge between $[i_1, j_1]$ and $[i_2, j_2]$ if and only if there is an edge between $\psi^{-1}([i_1, j_1])$ and $\psi^{-1}([i_2, j_2])$ in $G$. 
Hence we can conclude that for $i_1, i_2 < i_o$ there is an edge between $[i_1, j_1]$ and $[i_2, j_2]$ if and only if there is an edge between $\phi^{-1}([i_1, j_1])$ and $\phi^{-1}([i_2, j_2])$ in $G$. 
We can argue Case~$(v)$ by similar arguments on $\psi^{-1}([i, j])$ and $\phi^{-1}([i + 1, j])$. 
Consider Case~$(iii)$. Since $U_j, V_{j'}$ are adjacent with each other if and only if $j = j'$ and $U_j, U_{j'}$ (similarly $V_j, V_{j'}$) are adjacent with each other if and only if $|j - j'| = 1$.
Hence, the second property is satisfied.

We now argue Case~$(ii)$.
By the definition of $\phi$, we have $\phi^{-1}([i_1, j_1]) = \psi^{-1}([i_1, j_1])$ and $\phi^{-1}([i_2, j_2]) \subseteq \psi^{-1}([i_2, j_2])$.
If there is an edge between $[i_1, j_1]$ and $[i_2, j_2]$ in $\grid_{(r + 1) \times q}$ then $i_1 = i_o - 1$, $i_2 = i_o$, and $j_1 = j_2 = j$ (say).
By the property of $\psi$, there is an edge between $\psi^{-1}([i_1, j])$ and $\psi^{-1}([i_2, j])$. 
Since $C_b$ is adjacent with $U$ and non adjacent with $V$, we know that $\phi^{-1}([i_1, j])$ ($\subseteq C_b$) is adjacent with $U_{j}$ ($ \subseteq U$) and non-adjacent with $V_j$ ($\subseteq V$).
As $U_j = \phi^{-1}([i_o, j]) = \phi^{-1}([i_2, j_2])$, we can conclude that there exists an edge between $\phi^{-1}([i_1, j_1])$ and $\phi^{-1}([i_2, j_2])$. 
In reverse direction, suppose there exists an edge between $\phi^{-1}([i_1, j_1])$ and $\phi^{-1}([i_2, j_2])$. 
Since  $\phi^{-1}([i_1, j_1]) = \psi^{-1}([i_1, j_1])$ and $\phi^{-1}([i_2, j_2]) \subseteq \psi^{-1}([i_2, j_2])$, this implies there exists an edge between $\psi^{-1}([i_1, j_1])$ and $\psi^{-1}([i_2, j_2])$.
By the property of $\psi$, vertices $[i_1, j_1]$ and $[i_2, j_2]$ are adjacent with each other in $\grid_{r \times q}$.
Since $i_1 < i_o$ and $i_2 \in \{i_o, i_o + 1\}$, we can conclude that $i_2 - i_1 = 1$ and $j_1 = j_2$. This implies $[i_1, j_1], [i_2, j_2]$ are adjacent with each other in $\grid_{(r + 1)\times q}$. We can argue Case~$(iv)$ using similar arguments. 

Hence $\phi : V(G) \rightarrow \grid_{(r + 1) \times q}$ satisfies both the properties mentioned in Definition~\ref{def:graph-contractioon}.
This implies graph $G$ is contractible to $\grid_{(r + 1) \times q}$ via $\phi$. As $G$ is $k$-contractible to $\grid_{r\times q}$ and $|V(\grid_{r \times q})| + q = |V(\grid_{(r + 1)\times q})|$, we can conclude that graph $G$ is $(k - q)$-contractible to $\grid_{(r + 1) \times q}$.
\end{proof}

\subsection{Parameterized Complexity}
An instance of a parameterized problem comprises of an input $I$, which is an input of the classical instance of the problem and an integer $k$, which is called as the parameter.
A problem $\Pi$ is said to be \emph{fixed-parameter tractable} or in \FPT\ if given an instance $(I,k)$ of $\Pi$, we can decide whether or not $(I,k)$ is a \yes\ instance of $\Pi$ in  time $f(k)\cdot |I|^{\OO(1)}$.
Here, $f(\cdot)$ is some computable function whose value depends only on $k$. 
We say that two instances, $(I, k)$ and $(I', k')$, of a parameterized problem $\Pi$ are \emph{equivalent} if $(I, k) \in \Pi$ if and only if
 $(I', k') \in \Pi$.
A \emph{reduction rule}, for a parameterized problem $\Pi$ is an algorithm that takes an instance $(I, k)$ of $\Pi$ as input and outputs an instance $(I', k')$ of $\Pi$ in time polynomial in $|I|$ and $k$.
If $(I, k)$ and $(I', k')$ are equivalent instances then we say the reduction rule is \emph{safe}.
A parameterized problem $\Pi$ admits a kernel of size $g(k)$ (or $g(k)$-kernel) if there is a polynomial time algorithm (called {\em kernelization algorithm}) which takes as an input $(I,k)$, and in time $|I|^{\OO(1)}$ returns an equivalent instance $(I',k')$ of $\Pi$ such that $|I'| + k' \leq g(k)$.
Here, $g(\cdot)$ is a computable function whose value depends only on $k$.
For more details on parameterized complexity, we refer the reader to the books of Downey and Fellows~\cite{DF-new}, Flum  and Grohe~\cite{flumgrohe}, Niedermeier~\cite{niedermeier2006}, and the more recent books by Cygan et al.~\cite{saurabh-book} and Fomin et al.~\cite{fomin2019kernelization}.

\section{Combinatorial Lemma}
\label{sec:comb_lemma}

We introduce the notion of $r$-slabs which can be thought of as connected components with special properties. 
A $r$-slab is a connected set that can be partitioned into $r$ connected subsets such that the adjacency between these parts and their neighbourhood follows a certain pattern.
For an integer $r$ and a set $A$, an \emph{ordered $r$-partition} is a list of subsets of $A$ whose union is $A$. 
We define $r$-slab as follows.  
\begin{definition}[$r$-Slab]\label{def:r-slab} A \emph{$r$-slab} in $G$ is an ordered $r$-partition of a connected set $A$, say $A_1, A_2, \dots, A_r$, which satisfy following conditions. 
  \begin{itemize}
  \item For every $i$ in $[r]$, set $A_i$ is a non-empty set and $G[A_i]$ is connected.
  \item For $i \neq j$ in $[r]$, sets $A_i, A_j$ are adjacent if and only if $|i - j| = 1$.
  \item For every $i$ in $[r]$, define $B_i = N(A_i) \setminus A$. For $i \neq j$ in $[r]$, sets $B_i, B_j$ are mutually disjoint and if $B_i$ and $B_j$ are adjacent then $|i - j| = 1$.
  \end{itemize}
\end{definition}
We denote a $r$-slab by $\langle A_1, A_2, \dots, A_r \rangle$.
For a $r$-slab $\langle A_1, A_2, \dots, A_r \rangle$, set $A$ denotes union of all $A_i$s.
We note that every connected subset of $G$ is an $1$-slab.

For positive integers $\al, \bt$, a connected set $A$ in graph $G$ is called an $(\al, \bt)$-\emph{connected set} if $|A|\leq \al$ and $|N(A)|\leq \bt$.
For a non-empty set $Q\subseteq V(G)$ a connected set $A$ in $G$ is a $(Q)$-\emph{connected set} if $Q \subseteq A$. 
We generalize these notations for $r$-slab as follows.
\begin{definition}[$(\al, \bt)$-$r$-slab] For a graph $G$ and integers $\al, \bt$, a $r$-slab $\langle A_1, A_2, \dots, A_r \rangle$ is said to be an \emph{$(\al, \bt)$-$r$-slab} if $|A| \le \al$ and $|N(A)| \le \bt$.
\end{definition}
For a set $Q$, let $\calP_r(Q) = \{Q_1, Q_2, \dots, Q_r\}$ denotes its ordered $r$-partition. An ordered $r$-partition is said to be \emph{valid} if for any two vertices $u \in Q_i$ and $v \in Q_j$, $u, v$ are adjacent implies $|i - j| \le 1$.
\begin{definition}[$\calP_r(Q)$-$r$-slab]
For a graph $G$, a subset $Q$ of $V(G)$ and its ordered valid partition $\calP_r(Q) = \{Q_1, Q_2, \dots, Q_r\}$, a $r$-slab $\langle A_1, A_2, \dots, A_r \rangle$ in $G$ is said to be a \emph{$\calP_r(Q)$-$r$-slab} if $Q_i$ is a subset of $A_i$ for every $i$ in $[r]$.
\end{definition}
\begin{figure}[t]
  \begin{center}
    \includegraphics[scale=0.5]{./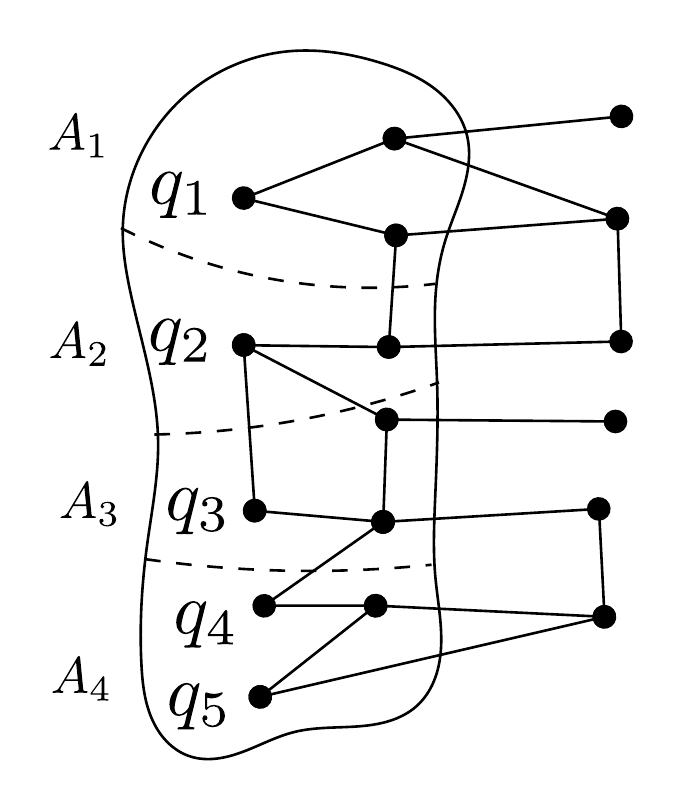}
  \end{center}
  \caption{An example of a $4$-slab. See Definition~\ref{def:r-slab}. For $Q = \{q_1, q_2, q_3, q_4, q_5\}$ and its partition $\calP_4(Q) = \{\{q_1\}, \{q_2\}, \{q_3\}, \{q_4, q_5\} \}$, $A$ is an $(\calP_4(Q), \al, \bt)$-$4$-slab.  \label{fig:slab}}
\end{figure}
See Figure~\ref{fig:slab} for an example. We combine properties mentioned in previous two definition to define specific types of $r$-slabs. 
\begin{definition}[$(\calP_r(Q),\al,\bt)$-$r$-slab] \label{def:r-slab-gen}
For a graph $G$, a non-empty subset $Q$ of $V(G)$, its ordered valid partition $\calP_r(Q) = \{Q_1, Q_2, \dots, Q_r\}$, and integers $\al, \bt$, a $r$-slab $\langle A_1, A_2, \dots, A_r \rangle$ in $G$ is a \emph{$(\calP_r(Q),\al,\bt)$-$r$-slab} if it is an $(\al, \bt)$-$r$-slab as well as a $\calP_r(Q)$-$r$-slab.
\end{definition}

We mention following two observations which are direct consequences of the definition. 
\begin{observation} \label{obs:slab-subgraph} Let $\langle A_1, A_2, \dots, A_r \rangle$ be a $(\calP_r(Q),\al,\bt)$-$r$-slab in graph $G$. If a vertex $v$ is in $N(A)$ then $\langle A_1, A_2, \dots, A_r \rangle$ is a $(\calP_r(Q),\al,\bt - 1)$-$r$-slab in graph $G - \{v\}$.
\end{observation}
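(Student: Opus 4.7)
The plan is to verify that removing the vertex $v \in N(A)$ from $G$ preserves each of the three bullet-point conditions in Definition~\ref{def:r-slab} and the partition condition in Definition~\ref{def:r-slab-gen}, while strictly decreasing the neighborhood size by one. First I would record the trivial but crucial observation that $v \in N(A)$ forces $v \notin A$, so $v$ does not belong to any $A_i$, nor to $Q \subseteq A$. Consequently the induced subgraphs $G[A_i]$ and all edges between pairs $A_i, A_j$ are identical in $G$ and in $G - \{v\}$. This immediately yields the first two bullet-points of Definition~\ref{def:r-slab}: each $G[A_i]$ remains connected and non-empty, and $A_i$ is adjacent to $A_j$ in $G - \{v\}$ if and only if $|i-j|=1$.

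Next I would address the third bullet-point concerning $B_i = N(A_i) \setminus A$. In $G - \{v\}$, the corresponding sets become $B_i' = B_i \setminus \{v\}$. Removing a single element from a family of pairwise disjoint sets preserves pairwise disjointness, and can only delete edges between these sets, so the implication "if $B_i'$ and $B_j'$ are adjacent in $G-\{v\}$ then $|i-j|=1$" follows directly from the analogous property in $G$. The partition condition of Definition~\ref{def:r-slab-gen} is inherited verbatim, since both $Q$ and the inclusions $Q_i \subseteq A_i$ are untouched.

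Finally, for the quantitative part, $|A| \le \al$ is immediate because $A$ itself is unchanged. For the neighborhood bound, the key identity is $N_{G - \{v\}}(A) = N_G(A) \setminus \{v\}$, and since by hypothesis $v \in N_G(A)$, we get
\[
|N_{G - \{v\}}(A)| = |N_G(A)| - 1 \le \bt - 1.
\]
Combining all of the above shows that $\langle A_1, \dots, A_r \rangle$ is a $(\calP_r(Q), \al, \bt-1)$-$r$-slab in $G - \{v\}$. There is no genuine obstacle here: the only point worth stating explicitly is that vertex deletion can only shrink the relevant neighborhoods and can only delete adjacencies, so every "if and only if" and every upper bound in the $r$-slab definition is preserved automatically in the remaining direction.
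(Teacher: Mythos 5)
Your proof is correct and matches the paper's treatment: the paper states this observation without proof, calling it a direct consequence of the definitions, and your argument simply spells out that routine verification (noting $v \notin A$, so all induced structure on $A$ is unchanged, while $N(A)$ shrinks by exactly one).
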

For a graph $G$, consider a vertex $v$ and let $G' = G - \{v\}$. For a non-empty subset $Q'$ of $V(G')$, its ordered partition $\calP_r(Q') = \{Q'_1, Q'_2, \dots, Q'_r\}$, and integers $\al, \bt$, let $\langle A'_1, A'_2, \dots, A'_r \rangle$ be a $(\calP_r(Q'),\al,\bt)$-$r$-slab in $G'$. 
\begin{observation} \label{obs:slab-supergraph}  
If vertex $v$ satisfy following two properties then $\langle A'_1, A'_2, \dots, A'_r \rangle$ is a $(\calP_r(Q'),\al,\bt + 1)$-$r$-slab in $G$.
\begin{itemize}
\item Vertex $v$ is adjacent with exactly one part, say $A'_i$, of the $r$-slab 
\item For any vertex $u$ in $N_G'(A'_j) \setminus A'$, if $u$ and $v$ are adjacent in $G$ then $|i - j| \le 1$.
\end{itemize}
\end{observation}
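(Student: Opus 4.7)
The plan is to verify each of the three conditions of Definition~\ref{def:r-slab} together with the cardinality bounds of Definition~\ref{def:r-slab-gen} for $\langle A'_1, \dots, A'_r \rangle$ viewed as a structure in $G$, using that they already hold in $G'$. The repeatedly invoked fact is that $v \notin V(G')$, and since each $A'_i \subseteq V(G')$ we have $v \notin A' = \bigcup_i A'_i$, so the partition itself is undisturbed when we move from $G'$ to $G$.

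First I would observe that $G[A'_i] = G'[A'_i]$ for every $i$ and $G[A'] = G'[A']$, because $v$ lies in none of these sets; hence each $G[A'_i]$ is non-empty and connected, $G[A']$ is connected, and the adjacency pattern among $A'_1,\dots,A'_r$ in $G$ coincides with that in $G'$. This establishes the first two bullets of Definition~\ref{def:r-slab} and the fact that $A'$ remains a connected set. The validity of $\calP_r(Q')$ and the containments $Q'_i \subseteq A'_i$ transfer directly, since $Q' \subseteq V(G')$ and edges of $G$ with both endpoints in $V(G')$ are exactly those of $G'$.

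The main work is the third bullet of Definition~\ref{def:r-slab} in $G$. By the first hypothesis, $v$ is adjacent in $G$ to exactly one part, say $A'_{i}$; writing $B^G_j := N_G(A'_j) \setminus A'$ and $B^{G'}_j := N_{G'}(A'_j) \setminus A'$ for each $j$, we get $B^G_{i} = B^{G'}_{i} \cup \{v\}$ and $B^G_j = B^{G'}_j$ for every $j \neq i$; pairwise disjointness of the $B^G_j$ then follows from the corresponding property in $G'$ together with $v \notin V(G')$. For the adjacency requirement, consider any edge in $G$ witnessing that $B^G_j$ and $B^G_{j'}$ are adjacent for some $j \neq j'$. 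If this edge avoids $v$, it already lies in $G'$ and so forces $|j - j'| = 1$ by the $r$-slab property in $G'$. Otherwise the edge is incident to $v$, which, up to swapping $j$ and $j'$, forces $j = i$ and places the other endpoint $u$ in $B^G_{j'} = B^{G'}_{j'} \subseteq N_{G'}(A'_{j'}) \setminus A'$; the second hypothesis then yields $|i - j'| \le 1$, which together with $j' \neq i$ gives $|j - j'| = 1$.

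Finally, for the size bounds of Definition~\ref{def:r-slab-gen}: the set $A'$ is unchanged, so $|A'| \le \al$ still holds, and the description above shows $N_G(A') = N_{G'}(A') \cup \{v\}$ with $v \notin N_{G'}(A')$, giving $|N_G(A')| \le \bt + 1$, exactly the claimed parameter. The only step that actually consumes the two hypotheses on $v$ is the neighbourhood-adjacency check in the preceding paragraph, which is therefore the only place requiring genuine care; everything else is a direct transfer from $G'$ to $G$.
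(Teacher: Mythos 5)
Your proof is correct. The paper offers no proof of this observation (it is stated as a direct consequence of the definitions), and your verification --- transferring the first two slab conditions and the part-containments unchanged since $v \notin A' \subseteq V(G')$, then handling the boundary sets via $B^G_i = B^{G'}_i \cup \{v\}$, $B^G_j = B^{G'}_j$ for $j \neq i$, with the two hypotheses used exactly where needed for the adjacency condition on the $B_j$'s and the $\bt+1$ bound --- is precisely the routine argument the authors intended to omit.
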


Definition~\ref{def:r-slab-gen} generalizes the notation of $(Q, \al, \bt)$-\emph{connected set} defined in \cite{path-contraction}.
In the same paper, authors proved that there is an algorithm that given a graph $G$ on $n$ vertices, a non-empty set $Q \subseteq V(G)$, and integers $\al, \bt$, enumerates all $(Q, \al, \bt)$-connected sets in $G$ in time $2^{\al - |Q| + \bt} \cdot n^{\calO(1)}$.
We present similar combinatorial lemma for $(\calP_r(Q), \al, \bt)$-$r$-slabs.
\begin{lemma} \label{lemma:main-nr-r-slabs} There is an algorithm that given a graph $G$ on $n$ vertices, a non-empty set $Q \subseteq V(G)$, its ordered partition $\calP_r(Q) = \{Q_1, Q_2, \dots, Q_r\}$, and integers $\al, \bt$, enumerates all $(\calP_r(Q), \al, \bt)$-$r$-slabs in $G$ in time $4^{\al - |Q| + \bt} \cdot n^{\calO(1)}$.
\end{lemma}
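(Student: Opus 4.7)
I would prove this lemma by extending the branching algorithm that enumerates $(Q,\al,\bt)$-connected sets in \cite{path-contraction}. The algorithm maintains a partial state $(A_1,\dots,A_r;B)$ with $A_i \supseteq Q_i$ and $B \cap \bigcup_i A_i = \emptyset$; initially $A_i = Q_i$ for every $i\in[r]$ and $B = \emptyset$. In each recursive call, if every neighbour of $A := \bigcup_i A_i$ already lies in $A \cup B$, the algorithm outputs $\langle A_1,\dots,A_r\rangle$ provided it satisfies all conditions in Definition~\ref{def:r-slab-gen}; otherwise it picks some $v \in N(A)\setminus(A\cup B)$ (according to a fixed vertex ordering, to avoid repetitions) and branches on the final role of $v$.

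For the branching, let $I_v = \{i\in[r] : v \in N(A_i)\}$, which is nonempty since $v \in N(A)$. If two indices in $I_v$ differ by more than two, no extension of the current state can place $v$ into any $A_j$ without violating the ``adjacent only when $|i-j|=1$'' condition of Definition~\ref{def:r-slab}, and no extension can place $v$ into $B$ without violating disjointness of the $B_i$'s; such a branch is pruned. Otherwise, the algorithm spawns one child for every $j \in [r]$ satisfying $|j-i|\le 1$ for all $i\in I_v$ (placing $v$ into $A_j$) and, only when $|I_v|\le 1$, one additional child placing $v$ into $B$.

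A short case analysis shows the branching factor is at most four: when $|I_v|=1$ we get up to three $A_j$-options plus the $B$-option; when $I_v=\{i,i+1\}$ the only $A_j$-options are $j\in\{i,i+1\}$ (and $B$ is infeasible); and when $I_v=\{i-1,i,i+1\}$ the only option is $j=i$. Each recursion strictly increases $|A|+|B|$ by one, so the potential $(\al-|A|)+(\bt-|B|)$ drops by one per step. Starting from $\al-|Q|+\bt$, the search tree has depth at most $\al-|Q|+\bt$ and branching factor at most four, giving at most $4^{\al-|Q|+\bt}$ leaves; the polynomial per-node bookkeeping for updating $A_i$'s, $B$, and $I_v$ yields the claimed $4^{\al-|Q|+\bt}\cdot n^{\calO(1)}$ running time. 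At each leaf the algorithm verifies all constraints of Definition~\ref{def:r-slab-gen} and outputs the slab iff every check passes.

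The main difficulty is establishing completeness, namely that every $(\calP_r(Q),\al,\bt)$-$r$-slab $\langle A_1^{\star},\dots,A_r^{\star}\rangle$ is produced at some leaf. The argument is inductive on the recursion: whenever the algorithm examines a vertex $v$, the branch consistent with the target slab (placing $v$ in the unique $A_{j^\star}^{\star}$ containing it, or in $B$ when $v$ lies outside $\bigcup_i A_i^{\star}$) is always among those enumerated, because the target slab's adjacency constraints restricted to the current partial state project exactly onto the feasibility conditions used for pruning. A subtle point is that when $v$ is assigned to $A_{j^\star}^{\star}$ but $v$ is not yet adjacent to the partial $A_{j^\star}$, the set $A_{j^\star}$ becomes temporarily disconnected; we allow this in intermediate states and rely on later additions to reconnect it, with the final-leaf connectivity check discarding any branch whose parts never close up into a valid slab.
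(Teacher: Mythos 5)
Your proposal is correct and follows essentially the same strategy as the paper's proof: a branching algorithm that picks an undecided neighbour of the current partial slab and branches (with factor at most four) on whether it joins one of the up to three admissible parts $A_{i-1},A_i,A_{i+1}$ or is excluded, with the measure $\al-|A|+\bt-|B|$ dropping by one per branch to give the $4^{\al-|Q|+\bt}$ bound. The only cosmetic differences are that you track excluded vertices in an explicit set $B$ while the paper deletes them from the graph and decrements $\bt$, and that you prune the infeasible exclusion branch eagerly (when $v$ sees two or more parts) whereas the paper filters the slabs returned from that branch at the end via Observation~\ref{obs:slab-supergraph}.
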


\begin{proof} Let $N(Q) = \{v_1, v_2, \dots, v_p\}$. Arbitrarily fix a vertex $v_l$ in $N(Q)$.
We partition $(\calP_r(Q), \al, \bt)$-$r$-slabs in $G$ based on whether $v_l$ is contained in it or not.
In later case, such $(\calP_r(Q), \al, \bt)$-$r$-slab is also a $(\calP_r(Q), \al, \bt - 1)$-$r$-slab in $G - \{v\}$.
We now consider the first case.
Let $i$ be the smallest integer in $[r]$ such that $v_l$ is adjacent with $Q_i$. Note that, by definition, if $v_l$ is present in a $\calP_r(Q)$-$r$-slab then it can be part of either $A_{i-1}$, $A_i$ or $A_{i+1}$. We encode this fact by moving $v_l$ to either $Q_{i-1}$, $Q_i$ or $Q_{i+1}$. Let $\calP^{i-1}_r(Q \cup \{v_l\}), \calP^{i}_r(Q \cup \{v_l\})$ and $\calP^{i+1}_r(Q \cup \{v_l\})$ be $r$-partitions of $Q \cup \{v_l\}$ obtained from $\calP_r(Q)$ by adding $v_l$ to set $Q_{i-1}, Q_{i}$ and $Q_{i+1}$, respectively. Formally, these three sets are defined as follows.
\begin{itemize}
\item[-] $\calP^{i-1}_r(Q \cup \{v_l\}) := \{Q_1, \dots, Q_{i-1}\cup \{v_l\}, Q_i, Q_{i + 1}, \dots, Q_r\}$
\item[-] $\calP^{i}_r(Q \cup \{v_l\}) \ \ \ := \{Q_1, \dots, Q_{i-1}, Q_i\cup \{v_l\}, Q_{i + 1}, \dots, Q_r\}$
\item[-] $\calP^{i+1}_r(Q \cup \{v_l\}) := \{Q_1, \dots, Q_{i-1}, Q_i, Q_{i + 1}\cup \{v_l\}, \dots, Q_r\}$
\end{itemize}

\noindent \textbf{Algorithm :} We present a recursive enumeration algorithm which takes $(G, \calP_r(Q), \al, \bt)$ as an input and outputs a set, say $\calA$, of all $(\calP_r(Q), \al, \bt)$-$r$-slab in $G$. The algorithm initializes $\calA$ to an empty set. 
The algorithm returns $\calA$ if one of the following statements is true:
$(i)$ $\calP_r(Q)$ is not a valid partition of $Q$, 
$(ii)$ $\alpha - |Q| < 0$ or $\beta < 0$,
$(iii)$ there is a vertex $v_l$ in $N(Q)$ which is adjacent with $Q_{i}$ and $Q_j$ for some $i, j$ in $[r]$ such that $|i - j| \ge 2$.
If $\alpha - |Q| + \beta = 0$, the the algorithm checks if $\calP_r(Q)$ is a $(\calP_r(Q), 0, 0)$-$r$-slabs in $G$. If it is the case then the algorithm returns singleton set containing $\calP_r(Q)$ otherwise it returns an empty set.
If there is a vertex $v_l$ in $N(Q)$ which is adjacent with $Q_{i-1}, Q_i$ and $Q_{i+1}$ for some $i$ in $[r]$ then the algorithm calls itself on instance $(G, \calP_r^{i}(Q \cup \{v_l\}), \al, \bt)$ where $\calP_r^{i}(Q \cup \{v\})$ is $r$-partition as defined above. It returns the set obtained on this recursive call as the output. 
If there are no such vertices in $N(Q)$, then for some $l \in \{1, \dots, |N(Q)| \}$, the algorithm creates four instances viz $(G - \{v_l\}, \calP_r(Q), \al, \bt - 1)$ and $(G, \calP_r^{i_0}(Q \cup \{v_l\}), \al, \bt)$ for $i_0 \in \{i-1, i, i + 1\}$. The algorithm calls itself recursively on these four instances. 
Let $\calA_l^{v}, \calA_l^{i-1}, \calA_l^{i}$, and $\calA_{i+1}$ be the set returned, respectively, by the recursive call of the algorithm.
The algorithm adds all elements in $\calA_l^{i-1} \cup \calA_l^{i} \cup \calA^{i+1}_{l}$ to $\calA$.
For every $(\calP_r(Q), \al, \bt - 1)$-$r$-slabs $\langle A'_1, A'_2, \dots, A'_r \rangle$ in $\calA_l^{v}$, the algorithm checks whether it is a $(\calP_r(Q), \al, \bt)$-$r$-slabs in $G$ using Observation~\ref{obs:slab-supergraph}. If it is indeed a $(\calP_r(Q), \al, \bt)$-$r$-slabs in $G$ then it adds it to $\calA$. The algorithm returns $\calA$ at the end of this process. 

We now argue the correctness of the algorithm. 
For every input instance $(G, \calP_r(Q), \al, \bt)$ we define its measure as $\mu((G, \calP_r(Q), \al, \bt)) = \al -|Q| + \bt $.  
We proceed by the induction hypothesis that the algorithm is correct on any input whose measure is strictly less than $\al -|Q| + \bt$.  
Consider the base cases $\al - |Q| + \bt = 0$. 
In this case, the only possible $(\calP_r(Q), \al, \bt)$-$r$-slab is $\calP_r(Q)$. 
The algorithm checks this and returns the correct answer accordingly.
We consider the case when $\al - |Q| + \bt \ge 1$.  
Every $(\calP_r(Q \cup \{v_l\}), \al, \bt)$-$r$-slab is also a $(\calP_r(Q), \al, \bt)$-$r$-slab. 
The algorithm adds a $r$-slab in $\calA_l^{v}$ to $\calA$ only if it is a $(\calP_r(Q), \al, \bt)$-$r$-slabs in $G$. 
Hence the algorithm returns a set of $(\calP_r(Q), \al, \bt)$-$r$-slabs in $G$.
In remaining part we argue that every $(\calP_r(Q), \al, \bt)$-$r$-slabs is enumerated by the algorithm.

By Definition~\ref{def:r-slab}, no vertex in closed neighbhorhood of a $r$-slab can be adjacent to two non-adjacent parts of a $r$-slab.
Hence, if there is a vertex $v_l$ in $N(Q)$ which is adjacent with $Q_{i}$ and $Q_j$ for some $i, j$ in $[r]$ such that $|i - j| \ge 2$ then the algorithm correctly returns an empty set.
Suppose there exists a vertex $v$ in $N(Q)$ which is adjacent with $Q_{i-1}, Q_i$ and $Q_{i+1}$ for some $i$ in $[r]$. By Definition~\ref{def:r-slab}, any $r$-slab containing $\calP_r(Q)$ must contains $v$ in it. In this case, the number of $(\calP_r(Q), \al, \bt)$-$r$-slab is same as the number of $(\calP_r^{i}(Q \cup \{v\}), \al, \bt)$-$r$-slab where $\calP_r^{i}(Q \cup \{v\})$ is the $r$-partition of $Q \cup \{v\}$ obtained from $\calP_r(Q)$ by adding $v$ to $Q_i$. 
The measure for input instance $(G, \calP_r^{i}(Q \cup \{v\}), \al, \bt)$ is strictly smaller than $\al - |Q| + \bt$. 
Hence by induction hypothesis, the algorithm correctly computes all $(\calP_r(Q), \al, \bt)$-$r$-slab.

Consider the case when there is no vertex which is adjacent with $Q_{i-1}, Q_i$ and $Q_{i+1}$ for any $i$ in $[r]$.
Let $v_l$ be a vertex in $N(Q)$ and there is an integer $i$ in $[p]$ such that $i$ is the smallest integer, and $v_l$ is adjacent with $Q_i$. As mentioned earlier, either $v_l$ is a part of $(\calP_r(Q), \al, \bt)$-$r$-slab or not. In first case, by Definition~\ref{def:r-slab}, $v_l$ can be part of $A_{i-1}$, $A_i$ or $A_{i+1}$ in any $\calP_r(Q)$-$r$-slab. 
The measure of input instance $(G, \calP^{i_0}_r(Q \cup \{v\}), \al, \bt)$ is $\al - |Q| + \bt - 1$.
Hence by induction hypothesis, the algorithm correctly enumerates all $(\calP^{i_0}_r(Q \cup \{v\}), \al, \bt)$-$r$-slabs in $G_l$. 
Consider a $(\calP_r(Q), \al, \bt)$-$r$-slab $\langle A_1, A_2, \dots, A_r \rangle$ in $G$ which does not contain $v_l$. By Observation~\ref{obs:slab-subgraph}, $\langle A_1, A_2, \dots, A_r \rangle$ is a $(\calP_r(Q), \al, \bt - 1)$-$r$-slab in $G - \{v\}$. By induction hypothesis, the algorithm correctly computes all $(\calP_r(Q), \al, \bt - 1)$-$r$-slabs in $G - \{v\}$. Since $\langle A_1, A_2, \dots, A_r\rangle$ is a $(\calP_r(Q), \al, \bt)$-$r$-slab in $G$, vertex $v_{l}$ satisfy both the properties mentioned in Observation~\ref{obs:slab-supergraph}. Hence algorithm adds $\langle A_1, A_2, \dots, A_r\rangle$ to the set $\calA_l$. Hence, we can conclude that the algorithm correctly enumerates all $(\calP_r(Q), \al, \bt)$-$r$-slabs in $G$

Using the induction hypothesis that the algorithm correctly outputs the set of all $(\calP_r(Q), \al, \bt)$-$r$-slabs in time $4^{\al - |Q| + \bt} \cdot n^{\calO(1)}$, the running time of the algorithm follows. This concludes the proof of the lemma.
\end{proof}

We use following corollary of Lemma~\ref{lemma:main-nr-r-slabs}.

\begin{corollary}\label{cor:nr-r-slabs} There is an algorithm that given a graph $G$ on $n$ vertices and integers $\al, \bt$, enumerates all $(\al, \bt)$-$r$-slab in $G$ in time $4^{\al + \bt} \cdot n^{\calO(1)}$.
\end{corollary}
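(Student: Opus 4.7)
The plan is to reduce to Lemma~\ref{lemma:main-nr-r-slabs} by guessing one \emph{seed} vertex together with the index of the part of the slab that contains it. Concretely, fix an arbitrary $(\al,\bt)$-$r$-slab $\langle A_1, A_2, \dots, A_r \rangle$ in $G$. Since each $A_i$ is non-empty, for every vertex $v \in A$ there is a unique index $i^\star \in [r]$ with $v \in A_{i^\star}$. Hence, if we iterate over every vertex $v \in V(G)$ and every index $i \in [r]$, then this slab will be captured by the pair $(v, i^\star)$.

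The algorithm is then as follows. We iterate over all $r \in \{1, 2, \dots, \al\}$ (since $|A| \le \al$), over all $v \in V(G)$, and over all $i \in [r]$. For each such choice, set $Q = \{v\}$ and build the ordered $r$-partition $\calP_r(Q) = \{Q_1, \dots, Q_r\}$ by putting $Q_i = \{v\}$ and $Q_j = \emptyset$ for $j \ne i$. This partition is trivially valid (there is only one vertex in $Q$, and no vertex can be adjacent to itself). We then invoke the algorithm from Lemma~\ref{lemma:main-nr-r-slabs} on the instance $(G, \calP_r(Q), \al, \bt)$ and add every returned $r$-slab to the output set $\calA$. At the end, we remove duplicates in polynomial time, which is trivial because two listed slabs are identical if and only if the underlying ordered partitions coincide.

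For correctness, the containment $\calA \subseteq \{(\al,\bt)\text{-}r\text{-slabs in }G\}$ is immediate since each $(\calP_r(Q), \al, \bt)$-$r$-slab is in particular an $(\al, \bt)$-$r$-slab. Conversely, given any $(\al, \bt)$-$r$-slab $\langle A_1, \dots, A_r \rangle$ in $G$, pick any $v \in A_1$ and set $i = 1$. The slab is a $(\calP_r(\{v\}), \al, \bt)$-$r$-slab for the partition placing $v$ in $Q_1$, and hence will be enumerated by the corresponding invocation of Lemma~\ref{lemma:main-nr-r-slabs}.

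For the running time, each individual call to Lemma~\ref{lemma:main-nr-r-slabs} with $|Q| = 1$ runs in time $4^{\al - 1 + \bt} \cdot n^{\calO(1)}$. We make $\calO(\al^2 \cdot n)$ such calls across the choices of $r$, $v$, and $i$; since $\al \le n$, the total running time is $4^{\al + \bt} \cdot n^{\calO(1)}$, as claimed. There is no real obstacle in this proof beyond carefully checking that the ordered partition $\calP_r(\{v\})$ with $Q_i = \{v\}$ and the other parts empty is admissible as input to the algorithm from Lemma~\ref{lemma:main-nr-r-slabs}; inspection of that algorithm shows that only the validity of the partition and the overall size $|Q|$ matter, so the construction goes through unchanged.
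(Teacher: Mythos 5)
Your proposal is correct, and it is essentially the intended derivation: the paper states the corollary without proof, and the natural way to obtain it is exactly your instantiation of Lemma~\ref{lemma:main-nr-r-slabs} with a singleton seed set $Q=\{v\}$ placed in each possible part, followed by a union over all seeds and deduplication. The running-time bookkeeping ($4^{\al-1+\bt}\cdot n^{\calO(1)}$ per call times polynomially many calls) and the observation that a singleton $Q$ is trivially a valid ordered partition are both accurate.
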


\section{An \FPT\ algorithm for \textsc{Bounded Grid Contraction}}
\label{sec:fpt-bounded-grid}

In this section, we present an \FPT\ algorithm for \textsc{Bounded
  Grid Contraction}. We formally define the problem as follows.

\vspace{0.2cm}
\defparproblem{\textsc{Bounded Grid Contraction}}{Graph $G$ and integers $k, r$}{$k, r$}{Is $G$ $k$-contractible to a grid with $r$ rows?}
\vspace{0.2cm}

We start with a definition of nice subsets mentioned in the Introduction section.
%We present an algorithm to enumerate all nice subsets in a graph.
%Later, we present an \FPT\ algorithm that builds a dynamic programming table on nice subsets.
As mentioned before, vertices of a nice subset correspond to witness sets in the first few columns of a grid-witness structure of the input graph.
Hence boundary vertices of a nice set correspond to witness sets in some column of a grid. 
Note that we are interested in the grids that have exactly $r$-rows.
Hence, we use the notation of $r$-slab defined in previous section to formally define nice sets.
%Recall that for a $r$-slab $\langle D_1, D_2, \dots, D_r \rangle$, set $D = D_1\cup D_2 \cup \cdots \cup D_r$.
Consider a $r$-slab $\langle D_1, D_2, \dots, D_r \rangle$ which corresponds to a column in some grid that can be obtained from the input graph with at most $k$ edge contraction. 
By Observation~\ref{obs:witness-structure-property}, an edge contraction reduces the number of vertices by exactly one.
As there are $3r$ many vertices in three adjacent rows in a grid, the size of a closed neighborhood of $D$ in $G$ is at most $k + 3r$.
Thus, we can focus our attention on $r$-slabs with the bounded closed neighborhood. 
We define $k$-potential $r$-slabs as follows.
\begin{definition}[$k$-Potential $r$-Slab] \label{def:k-poten-r-slab} For a given graph $G$ and integers $k, r$,  a  $r$-slab $\langle D_1, D_2, \dots, D_r \rangle$ is said to be a \emph{$k$-potential $r$-slab} of $G$ if it satisfies following two conditions:
  \begin{itemize}
  \item $|D| + |N(D)| \le k + 3r$; and
  \item $G - D$ has at most two connected components.
  \end{itemize}
  Here, $D = D_1\cup D_2 \cup \cdots \cup D_r$.
\end{definition}
\begin{definition}[Nice Subset]\label{def:nice-subset} A subset $S$ of $V(G)$ is said to be a \emph{nice subset} of $G$ if there exists a $k$-potential $r$-slab, say $\langle D_1, D_2, \dots, D_r \rangle$, such that $D$ is a subset of $S$ and $G[S \setminus D]$ is one of the connected components of $G - D$. We say that $r$-slab $\langle D_1, D_2, \dots, D_r \rangle$ is \emph{responsible} for nice subset $S$.
\end{definition}
Since $\langle D_1, D_2, \dots, D_r \rangle$ is a $k$-potential $r$-slab, both $G[S]$ and $G - S$ are connected. 
There may be more than one $k$-potential $r$-slabs responsible for a nice subset.
We define a pair of nice sets and $k$-potential $r$-slabs responsible for it.
\begin{definition}[Valid Tuple] 
A tuple $(S, \calP_r(D))$ is called a \emph{valid tuple} if $S$ is a nice subset and $\calP_r(D) \equiv \langle D_1, D_2, \dots, D_r \rangle$ is a $k$-potential $r$-slab responsible for it. 
\end{definition}

\begin{figure}[t]
  \begin{center}
    \includegraphics[scale=0.5]{./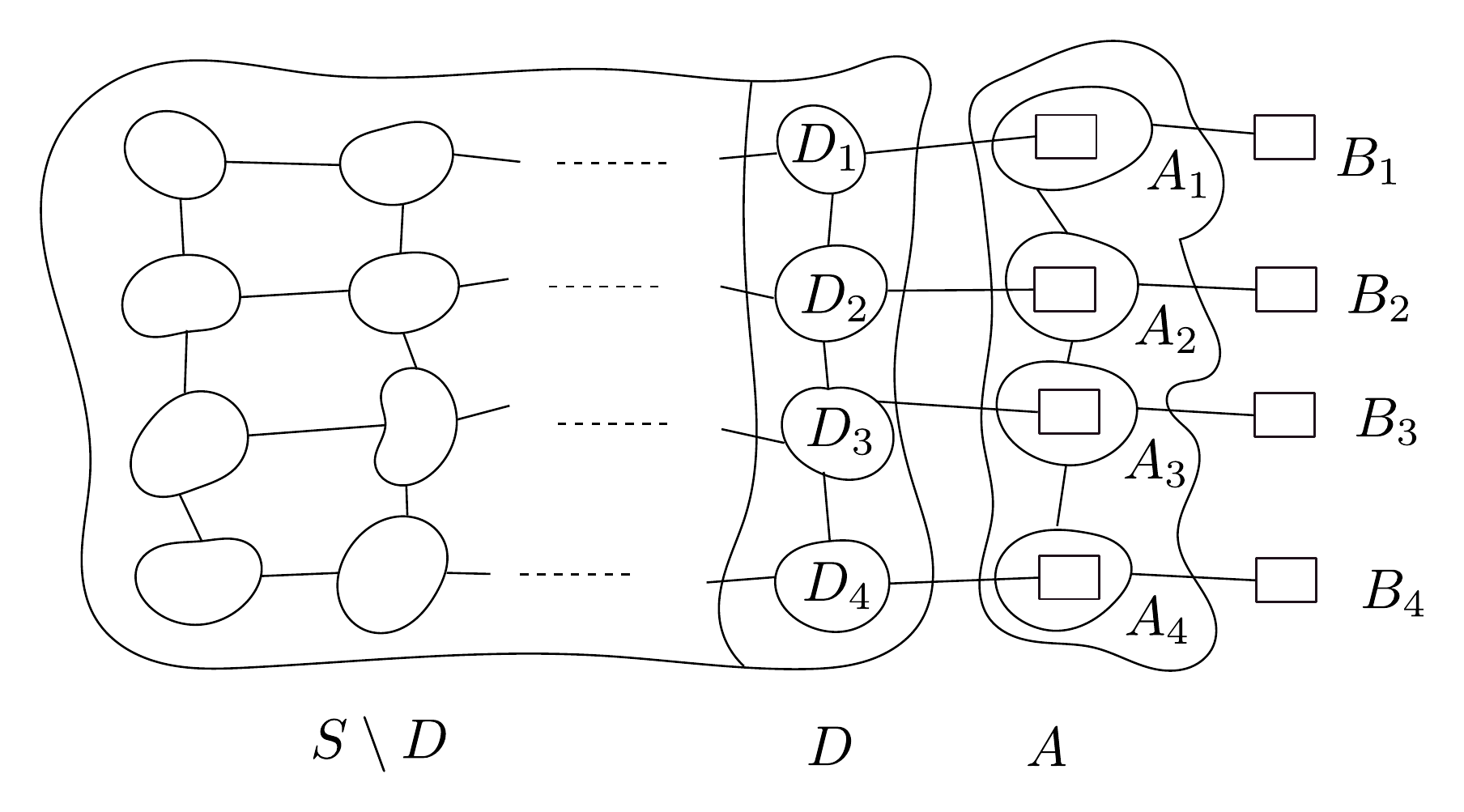}
  \end{center}
  \caption{ All sets with smooth (non-rectangular) boundary are connected. 
    Set $A$ is a possible extension of nice subset $S$. In other words, $A$ is an element in $\calA_{|A|, |B|}[(S, \calP_r(D))]$. See paragraph before Lemma~\ref{lemma:algorithm-correct}. \label{fig:column-expander}}
\end{figure}

Let $\calV_k$ be the set of all valid tuples.
For a valid tuple $(S, \calP_r(D))$ in $\calV_k$, we define a collection of $k$-potential $r$-slabs which is denoted by $\calA[(S, \calP_r(D))]$.
This set can be thought of as a collection of ``potential column extenders'' for $S$. See Figure~\ref{fig:column-expander}.
In other words, we can append a $k$-potential-$r$-slab in $\calA[(S, \calP_r(D))]$ to get a grid witness structure of a larger graphs containing $S$.
Let $\calP_{r}(A)$ be a $k$-potential-$r$-slab in $\calA[(S, \calP_r(D))]$.
Intuitively speaking, $\calP_r(A)$ is the ``new'' column to be ``appended'' to a grid witness structure of $G[S]$, to obtain a grid witness structure for $G[S \cup A]$. 
Hence if $G[S]$ can be $k'$-contracted to a grid then $G[S \cup A]$ can be $k' + (|A| - r)$-contracted to a grid. 
For improved analysis, we concentrate on subset $\calA_{a,b}[(S, \calP_r(D))]$ of $\calA[(S, \calP_r(D))]$ defined for integers $a, b$.
The set $\calA_{a,b}[(S, \calP_r(D))]$ is a collection of $k$-potential $r$-slabs of size at most $a$ which have at most $b$ neighbors outside $S$.
We impose additional condition that $a + b + |D|$ is at most $k + 3r$ for improved analysis.
Formally, $\calA_{a, b}[(S, \calP_r(D))] = \{\langle A_1, A_2, \dots, A_r \rangle \mid |A| \le a, |N(A) \setminus S| \le b, \mbox{ where } A = A_1\cup A_2 \cup \cdots \cup A_r \mbox{ and for every $D_i$ in } \calP_r(D), (N(D_i) \setminus S) \subseteq A_i, \mbox{ and } a + b + |D| \le k + 3r\}$. 

\vspace{0.3cm}

\noindent \textbf{Algorithm :}
The algorithm takes a graph $G$ on $n$ vertices and integers $k, r$ as input and outputs either \true\ or \false. 
The algorithm constructs a dynamic programming table $\Gamma$ in which there is an entry corresponding to every index $[(S, \calP_r(D)); k']$ where $(S, \calP_r(D))$ is a valid tuple in $\calV_k$ and $k'$ is an integer in $\{0\} \cup [k]$.
It initialize values corresponding to all entries to \false. \\
\noindent (\emph{for-loop Initialization}) For a tuple $(S, \calP_r(D)) \in \calV_{k}$ such that $S = D$ and $k' \ge |S| - r = |D| - r$, the algorithm sets $\Gamma[(S, \calP_r(D)); k'] = \true$. \\
\noindent (\emph{for-loop Table})
The algorithm processes indices in the table in chronologically increasing order. 
It first checks the size of $S$, then the size of $D$, followed by $k$.
Ties are broken arbitrarily. 
At table index $[(S, \calP_r(D)); k']$, if $\Gamma[(S, \calP_r(D)); k']$ is \false\ then the algorithm continues to next tuple.
If $\Gamma[(S, \calP_r(D)); k']$ is \true\ then it runs the following for-loop at this index. \\
\noindent (\emph{for-loop at Index}) The algorithm computes the set $\calA_{a, b}[(S, \calP_r(D))]$ for every pair of integers $a\ ( \ge r), b\ (\ge 0)$ which satisfy following properties $(1)$ $a + b + |D| \leq k + 3r$, $(2)$ $k' + a - r \le k$, and $(3)$ $|N(S)| \leq a$.
For every $k$-potential $r$-slab $\calP_r(A)$ in $\calA_{a, b}[(S, \calP_r(D))]$, the algorithm sets $\Gamma[(S \cup A, \calP_r(A)); k_1]$ to \true\ for every $k_1 \ge k' + (a - r)$.\\
If $\Gamma[(V(G), \calP_r(D)); k']$ is set to \true\ for some $\calP_r(D)$ and $k'$ then the algorithm returns \true\ otherwise it returns \false. This completes the description of the algorithm.
\vspace{0.2cm}

Recall that for a given connected subset $S$ of $V(G)$, $\Phi(S)$ denotes its boundary vertices i.e. set of vertices in $S$ which are adjacent with at least one vertex outside $S$.

\begin{lemma}\label{lemma:algorithm-correct} For every tuple $(S, \calP_r(D))$ in $\calV_k$ and integer $k'$ in $\{0\} \cup [k]$,
the algorithm assign $\Gamma[(S, \calP_r(D)); k'] = $ \true\ if and only if $k' + |N(S)| - r \le k$ and there is a $(r\times q)$-grid witness structure of $G[S]$, for some integer $q$, such that $\calP_r(D)$ is collection of witness sets in an end-column and $\Phi(S)$ is in $D$.
\end{lemma}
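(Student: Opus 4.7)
The plan is to prove both directions of the biconditional by induction on the table-index ordering $(|S|, |D|, k')$, which coincides with the chronological order in which the algorithm processes entries. The base case corresponds to the initialization step, while the inductive step analyzes what happens in the ``for-loop at Index.''

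For the forward direction (the algorithm setting true implies the structural conditions), I would distinguish two cases according to where the value was assigned. If $\Gamma[(S, \calP_r(D)); k']$ is set to \true\ at initialization, then $S = D$ and $k' \ge |D| - r$; the $r$-slab $\calP_r(D)$ itself serves as a $(r \times 1)$-grid witness structure of $G[S]$ using exactly $|D| - r$ contractions, and $\Phi(S) \subseteq D$ holds trivially because $S = D$. Otherwise \true\ is set while processing an earlier tuple $(S', \calP_r(D'))$ and some $\calP_r(A) \in \calA_{a,b}[(S', \calP_r(D'))]$ with $\calP_r(A) = \calP_r(D)$ and $S' \cup A = S$. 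By induction, $G[S']$ admits a grid witness structure ending in $\calP_r(D')$, and the definition of $\calA_{a,b}[\cdot]$---in particular the requirement $(N(D'_i) \setminus S') \subseteq A_i$---ensures that appending $\calP_r(D)$ as a new column yields a grid witness structure of $G[S]$ with $\Phi(S) \subseteq D$. The for-loop constraints $a + b + |D'| \le k + 3r$, $k_1 \leq k$ and $|N(S')| \leq a$ together translate into $k' + |N(S)| - r \le k$, using the key identity $N(S) = N(D) \setminus S$ that holds whenever $\Phi(S) \subseteq D$.

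For the backward direction, given a valid grid witness structure of $G[S]$ with last column $\calP_r(D)$, $\Phi(S) \subseteq D$, and a budget $k'$ satisfying $k' + |N(S)| - r \le k$, I would induct on the number of columns $q$ of the witness grid. If $q = 1$, then $S = D$ and initialization handles it. If $q \ge 2$, let $\calP_r(D')$ be the second-to-last column and set $S' = S \setminus D$. The critical verification is that $(S', \calP_r(D'))$ is itself a valid tuple in $\calV_k$: $G[S']$ is connected because it corresponds to the first $q-1$ columns of a grid, the complement $V(G) \setminus S' = D \cup (V(G) \setminus S)$ is connected via $D$, giving at most two components of $G - D'$; and the bound $|D'| + |N(D')| \le k + 3r$ follows from Observation~\ref{obs:witness-structure-property}(3) applied to the three consecutive columns around $D'$ (at most $3r$ witness sets, contributing at most $k$ extra vertices from big sets). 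I would then invoke the induction hypothesis on $(S', \calP_r(D'))$ with $k'' = k' - (|D| - r)$ and verify that $\calP_r(D) \in \calA_{|D|, |N(S)|}[(S', \calP_r(D'))]$ by checking each defining condition, which causes the for-loop to set $\Gamma[(S, \calP_r(D)); k'] = \true$.

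The chief obstacle is the intricate bookkeeping: aligning the somewhat delicate definition of $\calA_{a,b}[\cdot]$ with the structural constraints of appending a column to a grid witness structure, and showing simultaneously that every valid column extension is captured by some choice of $(a, b)$ in the for-loop and that the chosen budget $k'$ obeys all the inequalities required by the $k$-potential $r$-slab condition throughout the induction. The key structural lever is the identity $N(S) = N(D) \setminus S$ available when $\calP_r(D)$ is the last column of a valid grid witness structure for $G[S]$ with $\Phi(S) \subseteq D$; this identity keeps the accounting of $k'$ consistent between successive iterations and ensures that the inductive inequalities line up cleanly.
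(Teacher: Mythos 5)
Your proposal follows essentially the same route as the paper's proof: an induction over the dynamic-programming table (the paper uses the measure $|S| + k'$, you use the chronological index order, which is an inessential difference), with the two directions handled by, respectively, tracing a \true\ entry back to the for-loop update that appended a column $\calP_r(A)$ to a smaller witness structure, and peeling off the last column $\WC_q$ of a given witness structure to land on a valid tuple $(S', \calP_r(D'))$ with budget $k' - (|D| - r)$ and then checking membership of $\calP_r(D)$ in $\calA_{a,b}[(S', \calP_r(D'))]$ with $a = |D|$, $b = |N(S)|$. The key verifications you flag (validity of the smaller tuple via the three-consecutive-columns bound $|D'| + |N(D')| \le k + 3r$, and the bookkeeping identity $N(S) = N(D) \setminus S$ when $\Phi(S) \subseteq D$) are exactly the ones the paper carries out.
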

\begin{proof}
We prove the lemma by induction on $|S| + k'$ for indices $((S, \calP_r(D)); k')$ in the dynamic programming table.
For the induction hypothesis, we assume that for a positive integer $z$ the algorithm computes $\Gamma[(S, \calP_r(D)); k']$ correctly for each $(S, \calP_r(D))$ in $\calV_k$ and $k'$ in $0 \cup [k]$ for which $|S| + k' \leq z$.

Consider the base case when $|S| = |D| = r$ and $k' = 0$. Since $D \subseteq S$, we have $S = D$. This implies $\calP_r(S) = \calP_r(D)$ is a $r$-slab.
Any connected subset of a graph can be contracted to a vertex by contracting a spanning tree.
Hence, $G[S]$ can be contracted to a $(r \times 1)$-grid by contracting $|D| - r$ many edges. 
This implies that the values assigned by the algorithm in (\emph{for-loop Initialization}) are correct.
We note that once the algorithm sets a particular value to \true, it does not change it afterwards.

Assuming induction hypothesis, we now argue that the computation of $\Gamma[\cdot]$ for indices of the form $[(S_1, \calP_r(D_1)); k_1]$ where $|S_1| + k_1 = z + 1$ are correct.
Note that if $[(S_1, \calP_r(D_1)); k_1]$ is an entry in the table then $(S_1, \calP_r(D_1))$ is a valid tuple in $\calV_k$ and $k_1$ is an integer in the set $\{0\} \cup [k]$.

$(\Rightarrow)$ Assume that $G[S_1]$ is $k_1$-contractible to a $(r \times q)$-grid such that all vertices in $\Phi(S_1)$ are in an end-column $\calP_r(D_1)$ and $k_1 + |N(S_1)| - r \le k$. We argue that the algorithm sets $\Gamma[(S_1, \calP_r(D_1); k_1]$ to \true.
Let $G[S_1]$ be $k_1$-contractible to a $(r \times q)$-grid.
If $q = 1$ then $D_1 = S_1$ and in this case algorithm correctly computes $\Gamma[(S_1, \calP_r(D)); k_1]$.
Consider the case when $q \ge 2$.
Let $\calW = \{W_{ij} \mid (i, j) \in [r] \times [q]\}$ be a $(r \times q)$-grid structure of $G$ such that $\calP_r(D)$ is collection of witness sets in an end-column and $\Phi(S_1)$ is a subset of $D$. Define $\WC_j$ as union of all witness sets in column $j$. Formally, $\WC_j = \bigcup_{i=1}^{r}W_{ij}$. Hence, $\calW = \WC_1 \cup \WC_2 \cup \dots \cup  \WC_{q - 1} \cup \WC_q$ and $\calP_r(D) = \WC_q$.
 Consider set $S_0 = \WC_1 \cup \WC_2 \cup \cdots \cup \WC_{q - 1}$. Since $q \ge 2$, $S_0$ is an non-empty set.
 Let $k_0 = k_1 - (|\WC_{q}| - r)$. We argue that $[(S_0, \WC_{q - 1}); k_0]$ is an index in the table and $|S_0| + k_0 \le z$.
 As $\calW$ is a $k_1$-grid witness structure, $|\WC_{q}| - r \le k_1$ and hence $k_0$ is a non-negative integer.
 Since $G[\WC_q]$ is a connected graph, $G - \WC_{q-1}$ has exactly two connected components viz $G[\WC_1\cup \cdots \cup \WC_{q-2}]$ and the component containing $\WC_q$.
 As $\calW$ is a $k_1$-grid witness structure, $|\WC_{q - 2}| + |\WC_{q - 1}| + |\WC_q| \le k_1 + 3r \le k + 3r$ and $N(\WC_{q-1}) \subseteq \WC_{q - 2} \cup \WC_{q}$.
 (We note that $\WC_{q -2}$ may not exists but this does not change the argument. For the sake of clarity, we do not consider this as separate case.)
 Since $|\WC_{q-1}| + |N(\WC_{q-1})| \le k + 3r$ and $G - \WC_{q-1}$ has at most two connected components, $\WC_{q - 1}$ is a $k$-potential $r$-slab.
 Note that $\langle W_{1j}, W_{2j}, \dots, W_{rj} \rangle$ is the $r$-partition of $k$-potential $r$-slab $\WC_{q-1}$.
 Hence $(S_0, \WC_{q -1})$ is a tuple in $\calV_k$ and $((S_0, \WC_{q-1}); k_0)$ is an index in the table. 
 Since $\WC_q$ is not an empty set, $|S_0| + k_0 \le |S_1| - |\WC_q| + k_1 - (|\WC_q| - r) \le z + 1 + r - 2|\WC_q|$ as $|S_1| + k_1 = z + 1$. Since $|\WC_q| \ge r \ge 1$, we conclude $|S_0| + k_0 \le z$.
 Note that $\calW \setminus \{\WC_q\}$ is a $(k_1 - |W_q| + r)$-grid witness structure for $G[S_0]$. This implies that $G[S_0]$ is $k_0$-contractible to a grid with $\WC_{q-1}$ as collection of bags in an end-column and $k_0 + |N(S_0)| - r \le k_1 \le k$. Moreover, $S_0 = S_1 \setminus \WC_{q}$, $\Phi(S_0)$ is contained in $\WC_{q-1}$.
By the induction hypothesis, the algorithm has correctly set $\Gamma[(S_0, \WC_{q-1}); k_0]$ to \true.
Let $x_0 = |\WC_{q-1}|$, $a = |\WC_q|$ and $b = |\WC_q \setminus N(S_0)| = |N(S_1)| $.  We first claim that $x_0 + a + b \le k + 3r$.
Note that $|\WC_{q - 1}| + |\WC_q| \le k_1 + 2r$ and $k_1 + b \le k + r$. Hence $|\WC_{q - 1}| + |\WC_q| + b = x_0 + a + b \le k + 3r$.
At index $[(S_0, \WC_{q-1}); k_0]$, the algorithm computes $\calA_{a,b}[(S_0, \WC_{q-1})]$.
Clearly, $\WC_q$ is one of the sets in $\calA_{a,b}[(S_0, \WC_{q-1})]$ as for every $i$ in $[r]$, $N(W_{i,q-1}) \setminus S_0$ is contained in $W_{iq}$ and $G[W_{iq}]$ is a connected graph. Hence the algorithm sets $\Gamma[(S_1, \WC_q), k_1] = \Gamma[(S_1, \calP_r(D)), k_1]$ to \true.

$(\Leftarrow)$ To prove other direction, we assume that the algorithm sets $\Gamma[(S_1, \calP_r(A)); k_1]$ to $\true$.
We argue that $G[S_1]$ is $k_1$-contractible to a grid such that $\calP_r(A)$ is a collection of witness sets in an end-column in a witness structure; $\Phi(S_1)$ is in $A$; and $k_1 + |N(S_1)| - r \le k$.
If $\Gamma[(S_1, \calP_r(A)); k_1]$ is set to $\true$ in the (\emph{for-loop Initialization}) then, as discussed in first paragraph, this is correct.
Consider the case when the value at $\Gamma[(S_1, \calP_r(A)); k_1]$ is set to \true\ when the algorithm was processing at index $[(S_0, \calP_r(D)); k_0]$.
Note that value at $[(S_0, \calP_r(D)); k_0]$ has been set \true\ by the algorithm as otherwise, it will not change any value while processing this index.
Note that $|A| = a$ and $|N(S_1)| = b$.
Since $a$ is a positive integer and $k_0 + a - r \le k_1$ (because (\emph{for-loop at Index}) updates only for such values), we know $|S_0| + k_0 \le |S_1|  + k_1 - 2a + r = z + 1 - 2a + r$.
Since $a \ge r \ge 1$, we get $|S_0| + k_0 \le z$.
By the induction hypothesis, algorithm has correctly computed value at $[(S_0, \calP_r(D)); k_0]$.
Hence $G[S_0]$ can be $k_0$-contracted to a grid
such that $\Phi(S_0)$ is in $D$ and there exists a grid witness 
structure, say $\calW_0$, such that $\calP_r(D)$ is a collection of witness sets in an end-column. The induction hypothesis also implies and $k_0 + |N(S_0)| - r \le k + 3r$.

Let $\calP_r(A) = \langle A_1, A_2, \dots, A_r \rangle$ be the $r$-partition of $A$ in $\calA_{a, b}[(S_0, \calP_r(A))]$ at which \emph{for-loop at Index} changes the value at $\Gamma[(S_1, \calP_r(A)); k_1]$. 
By construction, every $D_i$ in $\calP_r(D)$, $D_i$ is contained in $A_i$.
Since $\Phi(S_0)$ is contained in $D$, no vertex in $S_0 \setminus D$ is adjacent with any vertex in $A$.
Since $\calP_r(A)$ is a $r$-slab, $\calW_0 \cup \{A_1, A_2, \dots, A_r\}$ is a grid witness structure of $G[S_1]$.
Moreover, since $N(S_0)$ is in $A$, $\Phi(S_1)$ is contained in $A$. 
Hence, $G[S_1]$ can be $k_1$-contractible to a grid with all vertices in $\Phi(S)$ in a $A$ and there exists a witness structure for which $\calP_r(A)$ is a collection of witness sets in an end-columns.
It remains to argue that $k_1 + |N(S_1)| - r \le k$.
We prove this for the case $k_1 = k_0 + a - r$ as $k_1 > k_0 + a - r$ case follows from the definition of $k_1$-contratibility.
Let $x_0 = |D|$.  As $x_0$ is the size of an end-column in $\calW_0$, we have $x_0 - r \le k_0$.
As algorithm only considers $a, b$ such that $x_0 + a + b \le k + 3r$, substituting $a = k_1 - k_0 + r$ and $b = |N(S_1)|$ we get $x_0 + k_1 - k_0 + r + |N(S_1)| \le k + 3r$.  
Using $x_0 - r \le k_0$, we get the desired bound.

 This completes the proof of the lemma.  
\end{proof}

\begin{lemma}\label{lemma:algorithm-time}
Given a graph $G$ on $n$ vertices and integers $k, r$, the algorithm terminates in time $4^{k + 3r} \cdot n^{\calO(1)}$.
\end{lemma}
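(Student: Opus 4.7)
The plan is to bound the running time by separately analyzing the number of indices in the dynamic programming table and the work done at each index. A valid tuple $(S, \calP_r(D)) \in \calV_k$ is determined by a $k$-potential $r$-slab $\calP_r(D)$ together with a choice of one of the at most two connected components of $G - D$ used to form $S = D \cup C$. Since $|D| + |N(D)| \le k + 3r$ for any $k$-potential $r$-slab (Definition~\ref{def:k-poten-r-slab}), Corollary~\ref{cor:nr-r-slabs} bounds the number of such $r$-slabs by $4^{k + 3r} \cdot n^{\calO(1)}$. Combined with a factor of $\calO(k)$ from the $k'$ coordinate, the total number of indices is $4^{k + 3r} \cdot n^{\calO(1)}$.

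Next, at an index $[(S, \calP_r(D)); k']$ whose value is $\true$, the algorithm enumerates $\calA_{a, b}[(S, \calP_r(D))]$ over pairs $(a, b)$ with $a + b + |D| \le k + 3r$. Setting $Q_i = N(D_i) \setminus S$, $Q = \bigcup_i Q_i$, and working in the subgraph $\tilde{G} = G - (S \setminus Q)$, every element of $\calA_{a, b}[(S, \calP_r(D))]$ corresponds to a $(\calP_r(Q), a, b')$-$r$-slab in $\tilde{G}$ for an appropriate $b'$ accounting for the gap between $|N(A) \setminus S|$ and $|N_{\tilde{G}}(A)|$. By Lemma~\ref{lemma:main-nr-r-slabs}, a fixed $(a, b)$ can be enumerated in time $4^{a + b - |Q|} \cdot n^{\calO(1)}$; summing over the polynomially many valid pairs yields $4^{k + 3r - |D| - |Q|} \cdot n^{\calO(1)}$ time at the index, and updating $\Gamma[(S \cup A, \calP_r(A)); k_1]$ for each enumerated $r$-slab and each admissible $k_1$ contributes at most an $\calO(k)$ overhead.

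Naively multiplying the two bounds gives $4^{2(k + 3r)} \cdot n^{\calO(1)}$, so the main obstacle will be to sharpen this to $4^{k + 3r} \cdot n^{\calO(1)}$. To do so, I would replace the product with a direct sum of per-index enumeration times and show that
\[
\sum_{(S, \calP_r(D)) \in \calV_k} 4^{k + 3r - |D| - |Q|} \le 4^{k + 3r} \cdot n^{\calO(1)}.
\]
The key observation is that each pair (source tuple, enumerated $\calP_r(A)$) produces a target tuple $(S \cup A, \calP_r(A)) \in \calV_k$ whose second coordinate is itself a $k$-potential $r$-slab; charging the enumeration work to this target $\calP_r(A)$ (and leveraging the $-|Q|$ savings in Lemma~\ref{lemma:main-nr-r-slabs} to absorb the multiplicity from different sources that can produce the same target) bounds the total by the number of $k$-potential $r$-slabs times a polynomial factor, giving the desired $4^{k + 3r} \cdot n^{\calO(1)}$ bound.
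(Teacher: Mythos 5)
Your overall structure matches the paper's: bound the number of table indices via slab enumeration, bound the per-index work via Lemma~\ref{lemma:main-nr-r-slabs}, observe that the naive product overshoots to $4^{2(k+3r)}$, and then recover $4^{k+3r}$ by exploiting the $-|Q|$ savings in the enumeration cost. You also correctly identify the key inequality $\sum_{(S,\calP_r(D))} 4^{k+3r-|D|-|Q|} \le 4^{k+3r}\cdot n^{\calO(1)}$ (note that $|Q| = |N(D)\setminus S| = |N(S)|$ for a valid tuple, since $\Phi(S)\subseteq D$). Where you diverge is in how this inequality is established, and your proposed mechanism is the weak point. Charging each enumerated slab $\calP_r(A)$ to the target tuple $(S\cup A,\calP_r(A))$ does not obviously control the total: a single target can be produced from many sources, because although $S = (S\cup A)\setminus A$ is determined, the responsible slab $\calP_r(D)$ of the source is not, and the algorithm iterates over \emph{every} valid tuple $(S,\calP_r(D))$, of which there may be exponentially many for a fixed $S$. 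Your parenthetical claim that the $-|Q|$ savings ``absorbs the multiplicity'' is exactly the assertion that needs proof, and you do not supply it.

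The paper closes this step more directly, using only machinery you already invoked in your first paragraph: partition $\calV_k$ into classes $\calV_k^{x,y}$ of tuples with $|D|\le x$ and $|N(S)|\le y$. By Corollary~\ref{cor:nr-r-slabs} (and the fact that each $k$-potential $r$-slab is responsible for at most two nice subsets), $|\calV_k^{x,y}| \le 4^{x+y}\cdot n^{\calO(1)}$, while each tuple in the class costs $\sum_{a,b:\,x+a+b\le k+3r} 4^{a+b-y}\cdot n^{\calO(1)}$. The product telescopes to $\sum_{a,b} 4^{x+a+b}\cdot n^{\calO(1)} \le 4^{k+3r}\cdot n^{\calO(1)}$ per class, and there are only $\calO(k^2)$ classes. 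I would recommend replacing your charging argument with this grouping; it proves your displayed inequality without any need to reason about source--target multiplicities.
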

\begin{proof} We first describe an algorithm that given a graph $G$ on $n$ vertices and integers $k, r$, enumerates all valid tuples in time $4^{k + 3r} \cdot n^{\calO(1)}$. The algorithm computes all $r$-slabs in $G$ which satisfy first property in Definition~\ref{def:k-poten-r-slab} using Corollary~\ref{cor:nr-r-slabs}. 
For every $r$-slabs, it checks whether it satisfy the second property in Definition~\ref{def:k-poten-r-slab} to determine whether it is a $k$-potential $r$-slab or not.
For a $k$-potential $r$-slab $\calP_r(D) \equiv \langle D_1, D_2, \dots, D_r \rangle$, if $G - D$ has exactly one connected component, say $C_1$, the it adds $(V(C_1) \cup D, \calP_r(D))$ and $(D, \calP_r(D))$ to set of valid tuples. 
If $G - D$ has two connected components, say $C_1, C_2$, then it adds $(V(C_1) \cup D, \calP_r(D))$ and $(V(C_2) \cup D, \calP_r(D))$ to the set of valid tuples. This completes the description of the algorithm.
Note that the algorithm returns a set of valid tuples. 
For a $k$-potential $r$-slab $\calP_r(D) \equiv \langle D_1, D_2, \dots, D_r \rangle$, $G - D$ has at most two connected components.
Hence any $k$-potential $r$-slab is responsible for at most two nice subsets. 
By definition of nice subsets, for any nice subset there exists a $k$-potential $r$-slab responsible for it. Hence the algorithm constructs the set of all valid tuples.
The algorithm spends polynomial time for each $r$-slab it constructs. Hence, the running time of the algorithm follows from Corollary~\ref{cor:nr-r-slabs}.

The algorithm can computes the table and completes \emph{for-loop Initialization} in time $4^{k + 3r} \cdot n^{\calO(1)}$ using the algorithm mentioned in above paragraph.
We now argue that the \emph{for-loop Table} takes $4^{k + 3r} \cdot n^{\calO(1)}$ time to complete.
We partition the set of valid tuples $\calV_k$ using the sizes of the neighborhood of connected component and size of $r$-slab in a tuple.
For two fixed integers $x, y$, define $\calV_k^{x, y} := \{(S, \calP_r(D)) \in \calV_k |\ |D| \le x \text{ and } |N(S)| \le y \}$.
In other words, $\calV_k^{x, y}$ collection of all nice subsets whose neighborhood is of size $y$ and there is a $k$-potential $r$-slab of size $x$ responsible for it.
Alternatively, $\calV_k^{x, y}$ is a collection of $k$-nice subsets for which there is a $(x, y)$-$r$-slab is responsible for it.
Since the number of $(x, y)$-$r$-slabs are bounded (Corollary~\ref{cor:nr-r-slabs}) and each $k$-potential $r$-slab is responsible for at most two nice subsets, $|\calV_k^{x,y}|$ is bounded by $4^{x+y} \cdot n^{\calO(1)}$.

For each $(S, \calP_r(D)) \in \calV_k^{x,y}$, the algorithm considers every pair of integers $a (> 0), b (\ge 0)$, such that $x + a + b \leq k + 3$ and $ |N(S)| = y \leq a$, and computes the set $\calA_{a, b}[(S, \calP_r(D))]$.
By Lemma~\ref{lemma:main-nr-r-slabs}, set $\calA_{a, b}[(S, \calP_r(D))]$ can be computed in time $4^{a+b-|N(S)|} \cdot n^{\calO(1)}$.   The algorithm spends time proportional to $|\calA_{a, b}[(S, \calP_r(D))]|$ for \emph{for-loop at Index}.
Hence for two fixed integers $x, y$, algorithm spends 
$$
\sum_{\substack{a,b \\ x+a+b \leq k + 3r}} 
 4^{x+y} \cdot 4^{a+b-y} \cdot n^{\calO(1)} =  
\sum_{\substack{a,b \\ x+a+b \leq k + 3r}} 
 4^{x+a+b} 
 \cdot n^{\calO(1)} = 4^{k + 3r} \cdot n^{\calO(1)}
 $$
time to process all valid tuples in $\calV_k^{x, y}$.  Since there are at most $\calO(k^2)$ feasible values for $x, y$, the overall running time of algorithm is bounded by $4^{k + 3r} \cdot n^{\calO(1)}$. This concludes the proof.
\end{proof}

The following theorem is implied by Lemmas~\ref{lemma:algorithm-correct}, \ref{lemma:algorithm-time}, and the fact that $(V(G), \calP_r(D))$ is a tuple in $\calV_k$ for some $D$. 

\begin{theorem}\label{lemma:bounded-grid-contraction}
\sloppy There exists an algorithm which given an instance $(G, k, r)$ of \textsc{Bounded Grid Contraction} runs in time $4^{k + 3r} \cdot n^{\calO(1)}$ and correctly determines whether it is a \yes\ instance or not. Here, $n$ is the number of vertices in $G$.
\end{theorem}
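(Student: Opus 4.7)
My plan is to assemble the theorem as a direct corollary of Lemmas~\ref{lemma:algorithm-correct} and \ref{lemma:algorithm-time}, so the proof is mostly about verifying that the ``final output'' step of the dynamic program correctly certifies a YES instance of \textsc{Bounded Grid Contraction}.

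First, I would recall the algorithm: it builds the table $\Gamma$ indexed by $[(S, \calP_r(D)); k']$ with $(S, \calP_r(D)) \in \calV_k$ and $k' \in \{0\} \cup [k]$, initialises via \emph{for-loop Initialization}, fills the table in the prescribed chronological order, and returns \true\ iff $\Gamma[(V(G), \calP_r(D)); k'] = \true$ for some $\calP_r(D)$ and some $k' \le k$. The first thing to check is that such indices actually exist in the table: when $S = V(G)$ we have $N(S) = \emptyset$, so any $r$-slab $\calP_r(D)$ with $|D| + |N(D)| \le k + 3r$ and $G - D$ having at most two components yields a valid tuple in $\calV_k$, and in particular the column appearing at an end of any grid witness structure of $G$ realising a $k$-contraction to an $r$-row grid is such a tuple.

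Next, correctness. By Lemma~\ref{lemma:algorithm-correct}, $\Gamma[(V(G), \calP_r(D)); k']$ is assigned \true\ if and only if $k' + |N(V(G))| - r \le k$, i.e.\ $k' \le k + r$, and $G = G[V(G)]$ has some $(r \times q)$-grid witness structure in which $\calP_r(D)$ is the collection of witness sets in an end-column and $\Phi(V(G)) = \emptyset \subseteq D$ (trivially satisfied). Thus the algorithm outputs \true\ iff there exist $\calP_r(D)$ and $k' \le k$ with $G$ being $k'$-contractible to an $r$-row grid whose end-column is $\calP_r(D)$, which is exactly the condition that $(G, k, r)$ is a YES instance of \textsc{Bounded Grid Contraction}.

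Finally, the running time bound follows verbatim from Lemma~\ref{lemma:algorithm-time}, which already establishes the $4^{k+3r} \cdot n^{\calO(1)}$ bound for the entire table construction; the final output scan over table entries of the form $[(V(G), \calP_r(D)); k']$ is dominated by this cost. The main (very minor) obstacle is just the formal verification that at least one tuple $(V(G), \calP_r(D))$ lies in $\calV_k$ whenever the instance is YES — otherwise the final scan could vacuously return \false. This is handled by taking $\calP_r(D)$ to be the $r$-slab induced by any end-column of a witness structure achieving the $k$-contraction, noting that such a column has at most $k + r$ vertices and its open neighborhood lies within at most $2r$ vertices of the adjacent column, so the $k$-potentiality condition $|D| + |N(D)| \le k + 3r$ is automatic, and $G - D$ has at most two components (the two ``sides'' of the column).
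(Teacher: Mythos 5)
Your proposal is correct and follows exactly the route the paper takes: the paper simply states that the theorem "is implied by Lemmas~\ref{lemma:algorithm-correct}, \ref{lemma:algorithm-time}, and the fact that $(V(G), \calP_r(D))$ is a tuple in $\calV_k$ for some $D$," and your write-up fleshes out precisely these three ingredients (including the only non-trivial point, that an end-column of a witnessing structure gives a $k$-potential $r$-slab making $(V(G), \calP_r(D))$ a valid tuple). No discrepancy with the paper's argument.
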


\section{An \FPT\ algorithm for \textsc{Grid Contraction}}
\label{sec:fpt-grid}

In this section, we present an \FPT\ algorithm for \textsc{Grid Contraction}.
Given instance $(G, k)$ of \textsc{Grid Contraction} is a \yes\ instance if and only if $(G, k, r)$ is a \yes\ instance of \textsc{Bounded Grid Contraction} for some $r$ in $\{1, 2, \dots, |V(G)|\}$.
For $r < 2k + 5$, we can use algorithm presented in
Section~\ref{sec:fpt-bounded-grid} to check whether given graph can be contracted to grid with $r$ rows or not in \FPT\ time.
A choice of this threshold will be clear in the latter part of this section.
If algorithm returns \yes\ then we can conclude that $(G, k)$ is a
\yes\ instance of \textsc{Grid Contraction}.
If not then we can correctly conclude that if $G$ is $k$-contractible to a grid
then the resulting grid has at least $2k + 5$ rows.
This information allows us to find two \emph{rows} in $G$ which can safely be contracted.
We need the following generalized version of \textsc{Grid Contraction} to state these results formally.

\vspace{0.3cm}
\defparproblem{\textsc{Annotated Bounded Grid Contraction}}{Graph $G$, integers $k, r, q$, and a tuple $(x_1, x_2, x_3, x_4)$ of four different vertices in $V(G)$}{$k, r$}{Is $G$ $k$-contractible to $\grid_{r \times q}$ such that there is a $\grid_{r \times q}$-witness structure of $G$ in which the witness sets containing $x_1, x_2, x_3$, and $x_4$ correspond to four corners in $\grid_{r \times q}$?}
\vspace{0.3cm}

Assume that $G$ is $k$-contractible to $\grid_{r\times q}$ with desired properties via mapping $\psi$.
Let $t_1, t_2, t_3,$ and $t_4$ be corners in $\grid_{r \times q}$ such that  $t_1 \equiv [1, 1], t_2 \equiv [1, q], t_3 \equiv [r, q],$ and $t_4 \equiv [r, 1]$.
There are $4!$ ways in which vertices in $\{x_1, x_2, x_3, x_4\}$ can be uniquely mapped to corners $\{t_1, t_2, t_3, t_4\}$.
For the sake of simplicity, we assume that we are only interest in the case in which $x_1, x_2, x_3, x_4$ are mapped to $t_1, t_2, t_3,$ and $t_4$ respectively.
In other words, $\psi(x_i) = t_i$ for all $i \in \{1, 2, 3, 4\}$.

We can modify the algorithm presented in
Section~\ref{sec:fpt-bounded-grid} obtain an algorithm for \textsc{Annotated Bounded Grid Contraction} problem which is fixed parameter tractable when parameterized by $(k + r)$.
The modified algorithm only initializes tuple $(S, \calP_r(D)) \in \calV_{k}$ such that $S = D$,  $k' \ge |S| - r = |D| - r$, and $x_1, x_2$ are in first and last parts in $\calP_r(D)$ in the (\emph{for-loop Initialization}) step.
Recall that the algorithm in Section~\ref{sec:fpt-bounded-grid} set  $\Gamma[(S, \calP_r(D)); k']
= $ \true\ if and only if $k' + |N(S)| - r \le k$ and there is a
$(r \times q')$-grid witness structure of $G[S]$, for some integer $q'$, such that $\calP_r(D)$ is collection of witness sets in an end-column and $\Phi(S)$ is in $D$.
Instead of storing \true\ or \false, the modified algorithm  
stores $q'$ if it is \true\ and $0$ otherwise.
With these simple modifications, we obtain the following result.

\begin{lemma}\label{lemma:algo-annotated-bounded-grid} 
There exists an algorithm which given an instance $(G, k, r, q, (x_1,
x_2, x_3, x_4))$ of \textsc{Annotated Bounded Grid Contraction} runs
in time $4^{k + 3r} \cdot n^{\calO(1)}$ and correctly determines
whether it is a \yes\ instance or not. Here, $n$ is the number of
vertices in $G$.
\end{lemma}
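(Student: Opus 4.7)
The plan is to adapt the dynamic-programming algorithm of Section~\ref{sec:fpt-bounded-grid} with two localised modifications, and to verify that the correctness proof of Lemma~\ref{lemma:algorithm-correct} and the running-time analysis of Lemma~\ref{lemma:algorithm-time} survive with negligible changes. I would begin by noting that there are $4! = 24$ bijections from $\{x_1, x_2, x_3, x_4\}$ to the corner vertices $\{t_1, t_2, t_3, t_4\}$; I would run the modified algorithm once per bijection and return \yes\ if any of the $24$ runs accepts. This loses only a constant factor, so I may fix the canonical assignment $\psi(x_i) = t_i$ and treat only that case.

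The first modification concerns (\emph{for-loop Initialization}): a $\grid_{r \times q}$-witness structure placing $x_1, x_4$ at corners $t_1 = [1,1]$ and $t_4 = [r, 1]$ must put them in the first and last bag of the leftmost column. I therefore restrict the initialization so that $\Gamma[(D, \calP_r(D)); k'] = \true$ is set only when $\calP_r(D) = \langle D_1, \dots, D_r\rangle$ satisfies $x_1 \in D_1$ and $x_4 \in D_r$. Because (\emph{for-loop at Index}) only ever appends a fresh rightmost column, this ``initial-column'' invariant is automatically preserved by every subsequent update. The second modification augments each table entry with the column count $q'$ of the certified grid witness structure, either by making $q'$ an additional index (inflating the table size by at most a factor of $n$) or by storing the set of feasible column counts at each state; either way the polynomial overhead is absorbed into the $n^{\calO(1)}$ factor. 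The final answer is \yes\ iff there exist $k' \le k$ and a valid $r$-partition $\calP_r(D') = \langle D'_1, \dots, D'_r\rangle$ with $x_2 \in D'_1$, $x_3 \in D'_r$, such that the certified column count at $\Gamma[(V(G), \calP_r(D')); k']$ equals the input value $q$.

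Correctness would follow by repeating the induction of Lemma~\ref{lemma:algorithm-correct}, strengthened by the invariant that every witness structure certified by a table entry places $x_1, x_4$ at the corners of the leftmost column. In the forward direction one peels off the rightmost column of a candidate witness structure, which does not touch the leftmost column and so preserves the invariant; in the reverse direction one appends a new rightmost column, again leaving the leftmost column untouched, and the restricted initialization establishes the invariant at the base. The final check over $\calP_r(D')$ with $x_2, x_3$ placed in the first and last parts then enforces the remaining two corner constraints on the rightmost column.

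For the running time, the enumeration of $k$-potential $r$-slabs via Lemma~\ref{lemma:main-nr-r-slabs} is untouched, and the extra bookkeeping for $q'$ costs only $\mathrm{poly}(n)$ per entry, so the bound $4^{k + 3r} \cdot n^{\calO(1)}$ carries over verbatim from Lemma~\ref{lemma:algorithm-time}. I do not anticipate a substantive obstacle; the only real care point is propagating the corner-placement information correctly through the dynamic program, which is straightforward given the strictly column-by-column structure of the recurrence.
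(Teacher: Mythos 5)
Your proposal matches the paper's proof essentially verbatim: the paper likewise restricts (\emph{for-loop Initialization}) to tuples whose first and last parts contain the corner vertices of the leftmost column, augments each table entry with the achieved column count $q'$, and compares it against the input $q$ at the end, with the $4^{k+3r}\cdot n^{\calO(1)}$ bound carried over unchanged. If anything your version is slightly more careful: the paper's text writes ``$x_1, x_2$'' where the corner convention $t_1=[1,1]$, $t_4=[r,1]$ requires $x_1, x_4$ (as you have it), and your suggestion to index by $q'$ or store the set of feasible column counts avoids the ambiguity of storing a single $q'$ when several column counts may be achievable for one state $[(S,\calP_r(D));k']$.
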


In the case, when $r < 2k + 5$ the algorithm mentioned in the above lemma is fixed parameter tractable when the parameter is $k$ alone.
When $r \ge 2k + 5$, we argue that if $(G, k, r, q, (x_1, x_2, x_3, x_4))$ is a \yes\ instance then there exists a \emph{horizontal decomposition} of $G$ (Lemma~\ref{lemma:existence-hor-decomp}). 
We formally define horizontal decomposition as follows.

\begin{definition}[Horizontally-Decomposable]\label{def:horizontally-decomposable} Consider an instance $(G, k, r, q, (x_1, x_2, x_3, x_4))$ of \textsc{Annotated Bounded Grid Contraction}. A graph $G$ is said to be \emph{horizontally-decomposable} if $V(G)$ can be partitioned into four non-empty parts $C_{12}, S_u, S_v,$ and $C_{34}$ which satisfies following properties.
\begin{itemize}
\item The graphs $G[C_{12}], G[C_{34}]$ are connected and $x_1, x_2 \in C_{12}$, $x_3, x_4 \in C_{34}$.
\item The graph $G[S_u \cup S_v]$ is a $2 \times q$ grid with $S_u, S_v$ correspond to vertices in its two rows.
\item $C_{12}$ and $C_{34}$ are the two connected components of $G \setminus (S_u \cup S_v)$.
\item $N(C_{12}) = S_u$ and $N(C_{34}) = S_v$.
\end{itemize}
\end{definition}

\begin{lemma}\label{lemma:existence-hor-decomp} Consider an instance $(G, k, r, q, (x_1, x_2, x_3, x_4))$ of \textsc{Annotated Bounded Grid Contraction} such that $2k + 5 \le r$. If it is a \yes\ instance then there exists a horizontal decomposition of $G$.
\end{lemma}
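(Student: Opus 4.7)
Fix a mapping $\psi : V(G) \to V(\grid_{r \times q})$ that witnesses the $k$-contraction with $\psi(x_i) = t_i$ for $i \in \{1,2,3,4\}$. Call a row $i$ of the grid \emph{clean} if $|\psi^{-1}([i,j])| = 1$ for every $j \in [q]$. The plan is to use the hypothesis $r \ge 2k+5$ to locate an index $i_o$ with $2 \le i_o \le r-2$ such that rows $i_o$ and $i_o+1$ are both clean, and then to take $S_u, S_v$ as the preimages of these two rows and $C_{12}, C_{34}$ as the preimages of the row blocks $\{1,\ldots,i_o-1\}$ and $\{i_o+2,\ldots,r\}$ respectively.

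By Observation~\ref{obs:witness-structure-property}(2), at most $k$ witness sets of $\psi$ are big, and each such witness set occupies a single cell and hence a single row, so at most $k$ rows of $\grid_{r \times q}$ are non-clean. In particular, among the $r - 2 \ge 2k + 3$ interior rows at least $k + 3$ are clean. If no two consecutive interior rows were both clean, then listing the clean interior rows as $c_1 < c_2 < \ldots < c_g$ one would have $c_{j+1} \ge c_j + 2$, forcing at least $g - 1$ distinct non-clean rows strictly between them; but then $g - 1 \le k$, contradicting $g \ge k + 3$. Hence the required $i_o$ exists.

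With $S_u, S_v, C_{12}, C_{34}$ defined as above, I would verify the four bullet points of Definition~\ref{def:horizontally-decomposable} as follows. Because rows $i_o, i_o+1$ are clean, $|S_u| = |S_v| = q$; since the grid contraction creates an edge between two grid vertices exactly when there is an edge in $G$ between their witness sets, the restriction of $\psi$ to the singletons of these two rows is a graph isomorphism, so $G[S_u \cup S_v] \cong \grid_{2 \times q}$ with $S_u, S_v$ as its two rows. Each of $C_{12}, C_{34}$ is the union of connected witness sets (Definition~\ref{def:graph-contractioon}) over a connected block of rows, hence is itself connected, and contains $\{x_1, x_2\}$ respectively $\{x_3, x_4\}$ because the designated corners lie in rows $1$ and $r$. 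Any edge of $G$ leaving $C_{12}$ projects under $\psi$ to a grid edge from a row $\le i_o - 1$ to a row $\ge i_o$, which in the grid is possible only to a neighbor in row $i_o$, so $N(C_{12}) \subseteq S_u$; conversely, every vertex of row $i_o$ is grid-adjacent to a vertex of row $i_o - 1$, so the opposite inclusion also holds. The symmetric argument handles $C_{34}$ and simultaneously rules out any edge between $C_{12}$ and $C_{34}$, showing that they are the two connected components of $G - (S_u \cup S_v)$. The only non-routine ingredient is the pigeonhole count in the second paragraph, and the hypothesis $r \ge 2k+5$ is calibrated precisely so that the number of clean interior rows, at least $r - 2 - k$, exceeds $k + 1$, which is exactly what forces two consecutive clean interior rows to exist.
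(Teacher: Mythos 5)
Your proposal is correct and follows essentially the same route as the paper: both locate, via the bound of at most $k$ big witness sets (Observation~\ref{obs:witness-structure-property}), two consecutive interior rows of $\grid_{r\times q}$ whose witness sets are all singletons, and take $S_u, S_v$ to be their preimages and $C_{12}, C_{34}$ the preimages of the remaining row blocks. The paper states the existence of such an $i_o$ and leaves the verification of Definition~\ref{def:horizontally-decomposable} as "easy to verify," whereas you spell out both the pigeonhole count and the four bullet points explicitly; the extra detail is sound.
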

\begin{proof} 
Assume that  $G$ is $k$-contractible to $\grid_{r \times q}$ with desired properties via mapping $\psi$.
By Observation~\ref{obs:witness-structure-property}, there are at most $k$ big-witness sets.
This implies that there are at most $k$ rows in $\grid_{r \times q}$ which contain vertices corresponding to big-witness sets.
Since there are at least $2k + 5$ rows in $\grid_{r \times q}$, there exists $i_o$ in $\{2, 3, \dots, r - 2\}$ such that no vertex in $i_o^{th}$ and $(i_o + 1)^{th}$ row corresponds to a big witness set. Define $C_{12}, S_u, S_v, C_{34}$ as follows.
\begin{itemize}
\item $C_{12} := \{x \in V(G) |\ \psi(x) = [i, j] \text{ for some } i < i_o \text{ and } j \in [q]\}$
\item $S_{u} := \{x \in V(G) |\ \psi(x) = [i_o, j] \text{ for some } j \in [q]\}$.
\item $S_{v} := \{x \in V(G) |\ \psi(x) = [i_o + 1, j] \text{ for some }  j \in [q]\}$
\item $C_{34} := \{x \in V(G) |\ \psi(x) = [i, j] \text{ for some } i > i_o + 1 \text{ and } j \in [q]\}$
\end{itemize}
It is easy to verify that $(C_{12}, S_u, S_v, C_{34})$ is a horizontal decomposition of $G$.
\end{proof}

Consider an instance $(G, k, r, q, (x_1, x_2, x_3, x_4))$, let $(C_{12}, S_u, S_v, C_{34})$ be a horizontal decomposition of $G$.
Reduction Rule~\ref{rr:reduce-instance} contracts all the edges across $S_u, S_v$.
Note that in the resulting instance, $r$ is decreased by one.
 
\begin{reduction rule}\label{rr:reduce-instance} For an instance $(G, k, r, q, (x_1, x_2, x_3, x_4))$, let $(C_{12}, S_u, S_v, C_{34})$ be a horizontal decomposition of $G$. Let $S_u (= \{u_1, u_2, \dots, u_{q}\})$ and $S_v (= \{v_1, v_2, \dots, v_{q}\})$.
Let $G'$ be the graph  obtained from $G$ by contracting all the edges in $\{u_jv_j |\ j \in [q]\}$.
Return instance $(G', k, r - 1, q, (x_1, x_2, x_3, x_4))$.
\end{reduction rule}

As $S_u, S_v$ are $\{(x_1-x_4), (x_2-x_3)\}$-separators in $G$, by Observation~\ref{obs:witness-structure-property}, sets $\psi(S_u), \psi(S_v)$ are  $\{(t_1-t_4), (t_2-t_3)\}$-separators in $\grid_{r\times q}$.
We argue that $\psi(S_u)$ and $\psi(S_v)$ correspond to two consecutive rows and it was safe to contract edges across $S_u, S_v$.

\begin{lemma}\label{lemma:rr-reduce-instance-safe} Reduction Rule~\ref{rr:reduce-instance} is safe.
\end{lemma}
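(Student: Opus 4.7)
I plan to prove both directions of the equivalence $(G, k, r, q, (x_1,x_2,x_3,x_4)) \in \yes \iff (G', k, r-1, q, (x_1,x_2,x_3,x_4)) \in \yes$. The key structural fact on both sides is a rigidity statement: in any witnessing contraction, the separator $S_u \cup S_v$ in $G$ (resp.\ the merged path $W = \{w_1, \dots, w_q\}$ in $G'$) is forced to image into two consecutive rows (resp.\ a single row) of the grid. This is because $S_u$, $S_v$, and $W$ are each connected $\{(x_1-x_4),(x_2-x_3)\}$-separators of size exactly $q$, so by Observation~\ref{obs:witness-structure-property}(4) their image is a connected $\{(t_1-t_4),(t_2-t_3)\}$-separator of size at most $q$, which by Observation~\ref{obs:corner-sep-size} is exactly a single row on which the contraction restricts to a bijection.

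For the forward direction, let $\psi : V(G) \to V(\grid_{r \times q})$ realise the $k$-contraction. In the principal case, rigidity gives $\psi(S_u) = \text{row } i_u$ and $\psi(S_v) = \text{row } i_u + 1$, and after renumbering columns $\psi(u_j) = [i_u, j]$ and $\psi(v_j) = [i_u + 1, j]$. I would define $\phi' : V(G') \to V(\grid_{(r-1) \times q})$ by inheriting $\psi$ on $C_{12}$, sending $w_j \mapsto [i_u, j]$, and shifting the $C_{34}$-part one row up. The connectedness and adjacency checks follow routinely, using $N(C_{12}) = S_u$ and $N(C_{34}) = S_v$ to argue that any $C_{12}$-vertex absorbed into the witness at $[i_u, j]$ is adjacent only to $u_j$ among the separator vertices (and dually for $C_{34}$), and hence adjacencies across the contracted row lift correctly through the edges $u_j v_j$. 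The budget is preserved because $|V(G')| - (r-1)q = |V(G)| - rq \le k$, and the corner constraints transfer immediately since the four designated vertices retain their column coordinate and lie in the top and bottom rows of either grid.

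The reverse direction is symmetric. Given $\phi'$ for $G'$, rigidity gives $\phi'(w_j) = [i'_o, j]$ after relabeling columns, for some row $i'_o$. Since $\phi'(x_1) = [1,1]$ and $\phi'(x_3) = [r-1, q]$ lie on opposite sides of this row and the parts of $\grid_{(r-1)\times q}$ above and below row $i'_o$ are disconnected once this row is removed, every other $\phi'$-witness lies entirely inside $C_{12}$ (with image above $i'_o$) or entirely inside $C_{34}$ (with image below). Define $\phi : V(G) \to V(\grid_{r \times q})$ by keeping $\phi'$ on $C_{12}$, setting $\phi(u_j) = [i'_o, j]$ and $\phi(v_j) = [i'_o + 1, j]$, and shifting the $C_{34}$-part down by one row. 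The non-trivial check is that $\phi^{-1}([i'_o - 1, j])$ is adjacent in $G$ to exactly $\{u_j\}$ among the row-$i'_o$ witnesses; any stray adjacency to $u_{j'}$ for $j' \ne j$ would lift to an edge $\phi'^{-1}([i'_o - 1, j]) \to w_{j'}$ in $G'$, contradicting the grid-structure of $\phi'$, while the forward adjacency follows because $N(C_{12}) \subseteq S_u$ forces any $C_{12}$-neighbour of $w_j$ in $G'$ to actually be a neighbour of $u_j$ in $G$. The $v$-side is symmetric.

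The main obstacle is the rigidity step on both sides, which I handle via the combination of Observations~\ref{obs:witness-structure-property}(4) and~\ref{obs:corner-sep-size}. A secondary subtlety in the forward direction is the degenerate case $\psi(S_u) = \psi(S_v)$ (coincident image rows, possible when the row witnesses absorb both $C_{12}$- and $C_{34}$-vertices into the same blob); this is handled by a modified $\phi'$ that merges the coincident image row with one of its grid-neighbours, leaving connectedness of witness sets intact since each is already connected in $G'$. Budget preservation is the common thread, controlled throughout by the identity $|V(G)| - rq = |V(G')| - (r-1)q$.
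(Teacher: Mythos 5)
Your proposal is correct and takes essentially the same route as the paper: the identical rigidity argument (Observation~\ref{obs:witness-structure-property}(4) combined with Observation~\ref{obs:corner-sep-size} pins the size-$q$ connected $\{(x_1\text{-}x_4),(x_2\text{-}x_3)\}$-separators onto single grid rows) drives both directions, and your $U_j/V_j$ split of the merged witness sets $X_j$ into their $C_{12}$- and $C_{34}$-parts in the reverse direction is exactly the paper's construction. The only difference is presentational: the paper routes the reverse direction through its ``partible row'' machinery (Definition~\ref{def:partible-row} and Lemma~\ref{lemma:partible-row}), whereas you verify the grid-witness conditions of the lifted map directly.
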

\begin{proof} 
\noindent 
Note that $\grid_{(r - 1) \times q}$ can be obtained from $\grid_{r \times q}$ by contracting all edges across any two consecutive rows. Also, this operation does not remove any vertex from witness sets corresponding to corner vertices in grid.

$(\Rightarrow)$
Assume $G$ is $k$-contractible to $\grid_{r \times q}$ via mapping $\psi$ with desired properties.
We argue that $G'$ is $k$-contractible $\grid_{(r - 1) \times q}$ with desired properties.
Let $t_1, t_2, t_3,$ and $t_4$ be the four corners of $\grid_{r\times q}$ such that $\psi(x_i) = t_i$ for all $i \in \{1, 2, 3, 4\}$.
By Observation~\ref{obs:witness-structure-property}, $\psi(S_u), \psi(S_v)$ are $\{(t_1, t_4), (t_2, t_3)\}$-separators in $\grid_{r \times q}$. 
By Observation~\ref{obs:corner-sep-size}, $|\psi(S_u)|, |\psi(S_v)| \ge q$.
By the property of mapping $\psi$, we have $|\psi(S_u)| \le |S_u|$ and $|\psi(S_v)| \le |S_v|$. 
Since $|S_u| = |S_v| = q$, we have $|\psi(S_u)| = |\psi(S_v)| = q$.
Hence, by Observation~\ref{obs:corner-sep-size}, $\psi(S_u)$ and $\psi(S_v)$ corresponding to rows in $\grid_{r\times q}$.
Let $\psi(S_u)$ and $\psi(S_v)$ correspond to rows $i_1, i_2$. 
Since there are multiple edges across $S_u, S_v$, we have $|i_1 - i_2| \le 1$. 
As $G$ is $k$-contractible to $\grid_{r \times q}$,  if $|i_1 - i_2| = 1$ then $G'$ is $k$-contractible to $\grid_{(r - 1) \times q}$. 
If $|i_1 - i_2| = 0$ then $G'$ is $(k - q)$-contractible to $\grid_{r \times q}$ as $G$ is $k$-contractible to $\grid_{r \times q}$ and $G' = G/\{v_ju_j |\ j \in [q]\}$. 
Hence, $G'$ is $k$-contractible to $\grid_{(r - 1) \times q}$.

$(\Leftarrow)$ Let $S_o = \{s_1, s_2, \dots, s_q\}$ be the set vertices in $G'$ which are obtained by contracting edges $u_jv_j$ in $G$. 
In other words, for $j$ in $[q]$, let $s_j$ be the new vertex added while contracting edge $u_jv_j$.
Note that $S_o$ is $\{(x_1-x_4), (x_2-x_3)\}$-separator in $G'$. 

Assume that $G'$ is $k$-contractible to $\grid_{(r - 1) \times q}$ with the desired properties via mapping $\phi$.
Let $t_1, t_2, t_3,$ and $t_4$ be the four corners of $\grid_{(r - 1)\times q}$ such that $\phi(x_i) = t_i$ for $i \in \{1, 2, 3, 4\}$.
By Observation~\ref{obs:witness-structure-property}, $\phi(S_o)$ is a $\{(t_1, t_4), (t_2, t_3)\}$-separators in $\grid_{(r - 1) \times q}$. 
By Observation~\ref{obs:corner-sep-size}, $|\phi(S_o)| \ge q$.
By the property of mapping $\phi$, we have $|\phi(S_o)| \le |S_o|$. 
Since $|S_o| = q$, we have $|\phi(S_o)| = q$.
Hence, by Observation~\ref{obs:corner-sep-size}, $\phi(S_o)$ corresponding to a row, say $r_o$, in $\grid_{(r-1)\times q}$. 

As $G$ is $q$-contractible to $G'$ (which is $k$-contractible to $\grid_{(r-1)\times q}$), we know that $G$ is $(k + q)$-contractible to 
$\grid_{(r-1)\times q}$. 
A mapping $\psi : V(G) \rightarrow \grid_{r \times q}$ as follows corresponds to this contraction.
For every $x$ in $V(G) \setminus (S_u \cup S_v)$, define $\psi(x) = \phi(x)$ and for every $x$ in $\{ u_j, v_j\}$, define $\psi(x) = \phi(s_i)$.
%Note that $G$ is $(k + q)$-contractible to $\grid_{(r-1)\times q}$ via function $\psi$ with desirable properties.
We argue that $r_o^{th}$ row in $\grid_{(r-1)\times q}$ is partible.
For every $j$ in $[q]$, we define a partition $U_j, V_j$ of $\psi^{-1}([i_1, j])$ which satisfy all the properties mentioned in Definition~\ref{def:partible-row}.

For $j$ in $[q]$, let $X_j = \psi^{-1}([i_o, j])$.
As $s_j$ was present in $\phi^{-1}([i_o, j])$, vertices $u_j, v_j$ are present in $X_j$.
By the property of $\phi$, set $\phi^{-1}([i_o, j])$ is connected in $G'$. Since vertex $s_j$ is obtained from contracting edge $u_jv_j$ in $G$, graph $G[X_j]$ is connected.
Since $\phi(S_o)$ corresponds to a row of with $q$ vertices and $|S_o| = q$, vertices $u_{j'}, v_{j'}$ are present in $X_j$ if and only if $j' = j$.
In other words, $X_j \cap (S_u \cup S_v) = \{u_j, v_j\}$.
Define $U_j := (X_j \cap C_{12}) \cup \{u_j\}$ and $V_j := (X_j \cap C_{34}) \cup \{v_j\}$.
Since $(C_{12}, S_u, S_v, C_{34})$ is a horizontal decomposition of $G$, sets $U_j, V_j$ is a non-empty partition of $X_j$.
We argue that $U_j, V_j$ satisfy all the properties in Definition~\ref{def:partible-row}.
As $N(C_{12}) = S_u$, no vertex in $U_j \setminus \{u_j\}$ is adjacent with $S_v \cup C_{34}$.
By similar arguments, no vertex in $V_j \setminus \{v_j\}$ is adjacent with $S_v \cup C_{34}$. 
As $G[X_j]$ is connected and $U_j \subseteq S_u \cup C_{12}; V_j \subseteq S_v \cup C_{34}$, graphs $G[U_j], G[V_j]$ are connected.
Moreover, for $j' \in [q]$, sets $U_j, V_{j'}$ are adjacent if and only if $u_j, v_{j'}$ are adjacent.
Since $G[S_1\cup S_2]$ is a $(2 \times q)$-grid, $u_j$ and $v_{j'}$ are adjacent if and only if $j = j'$.
Hence $U_j$ and $V_{j'}$ are adjacent if and only if $j = j'$.
Since $U_j \subseteq X_j$ and $U_{j'} \subseteq X_{j'}$, $U_j$ and $U_{j'}$ are non adjacent if $|j - j'| > 1$.
If $|j - j'| = 1$ then $U_j, U_{j'}$ are adjacent as they contain $u_j$ and $u_{j'}$.
Hence $U_j, U_{j'}$ are adjacent if and only $|j - j'| = 1$.
By similar arguments, $V_j, V_{j'}$ are adjacent if and only if $|j - j'| = 1$.
As no vertex in $U_j$ is adjacent with $C_{34}$ and no vertex in $V_j$ is adjacent with $C_{12}$, we can conclude that partition $U_j, V_j$ satisfy all the properties in Definition~\ref{def:partible-row}.
 
Since $r_o^{th}$ row in $\grid_{(r-1)\times q}$ is partible, Lemma~\ref{lemma:partible-row} implies that $G$ is $k$-contractible to $\grid_{r \times q}$. This concludes the proof of reverse direction.

Hence $(G, k, r, q, (x_1, x_2, x_3, x_4))$ is a \yes\ instance of \textsc{Annotated Bounded Grid Contraction} if and only if  
$(G', k, r - 1, q, (x_1, x_2, x_3, x_4))$ is a \yes\ instance.
\end{proof}

It remains to argue that Reduction Rule~\ref{rr:reduce-instance} can be implemented in polynomial time.
In Lemma~\ref{lemma:find-hor-decomp}, we argue there exists an algorithm that can find a horizontal decomposition, if exists, in polynomial time. 
We use the following structural lemma to prove the previous statement.

\begin{lemma}\label{lemma:nr-grid-sep} Given two adjacent vertices $u_1, v_1$ in $G$, there is at most one subset $S$ of $V(G)$ such that $(a)$ $G[S]$ is a $(2 \times q)$ grid, $(b)$ $u_1, v_1$ are two vertices in the first column of $G[S]$, and $(c)$ each row in $S$ is a separator in $G$. Moreover, if such a subset exists then it can be found in polynomial time.
\end{lemma}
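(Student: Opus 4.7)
The plan is to construct the candidate $S$ column-by-column starting from the given first column $\{u_1,v_1\}$, and to show that at each step the next column is forced by the separator hypothesis. If $S$ exists, denote its rows $U=\{u_1,\dots,u_q\}$ and $V=\{v_1,\dots,v_q\}$, with $u_j,v_j$ forming the $j$-th column. Since $u_1$ is a corner of the $(2\times q)$-grid $G[S]$, its $S$-neighbours are exactly $\{v_1,u_2\}$; symmetrically $N_G(v_1)\cap S=\{u_1,v_2\}$. So $(u_2,v_2)$ is an edge between $N(u_1)\setminus\{v_1\}$ and $N(v_1)\setminus\{u_1\}$; I aim to show this edge is unique.

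For the uniqueness of $(u_2,v_2)$, I will use the fact that $U$ is a separator in $G$: removing $U$ leaves at least one component not meeting $V$, whose union I will call $A_+$. Every vertex of $A_+$ can reach $S$ only through $U$, and in particular is not adjacent to any vertex of $V$. A symmetric region $A_-$ exists attached to $V$. From this I will argue that any $x\in N(u_1)\setminus\{v_1,u_2\}$ is forced to sit on the $u$-side of the configuration and therefore cannot be adjacent to any vertex in $N(v_1)\setminus\{u_1\}$, because such neighbours lie either at $v_2$ or in $A_-\cup V$, both unreachable from $A_+$ without crossing $U$. Hence $u_2$ is pinned down as the unique vertex of $N(u_1)\setminus\{v_1\}$ having a neighbour in $N(v_1)\setminus\{u_1\}$, and $v_2$ is that neighbour.

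Granted this base case, the induction is routine. Suppose the columns $\{u_i,v_i\}$ for $i\le j<q$ have already been determined. Then $u_j$ still has an ``unknown'' $S$-neighbour $u_{j+1}$ to the right, and the separator argument applied at $(u_j,v_j)$, while excluding the previously chosen $u_{j-1},v_{j-1}$, singles out $(u_{j+1},v_{j+1})$ as the unique edge between $N(u_j)\setminus\{v_j,u_{j-1}\}$ and $N(v_j)\setminus\{u_j,v_{j-1}\}$. The algorithm iterates this extension rule, stopping when no such edge exists. It then verifies in polynomial time that the produced set $S$ satisfies both conditions of the lemma: $G[S]$ is isomorphic to a $(2\times q)$-grid with $\{u_1,v_1\}$ in an end-column, and both rows are separators in $G$. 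If either check fails, the algorithm reports that no valid $S$ exists; otherwise uniqueness of each column upgrades to uniqueness of $S$ as a set.

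The main obstacle is precisely the uniqueness step of the second paragraph: every alternative $x\in N(u_1)\setminus\{v_1,u_2\}$ has to be ruled out using only the separator property on $U$ and $V$, without any a priori information about the structure of $G$ outside $S$. Once this local-to-global argument is in place, the overall polynomial-time bound is immediate: each extension step inspects only the neighbourhoods of two previously fixed vertices and at most $|V(G)|/2$ columns are ever appended.
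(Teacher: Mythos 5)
Your overall plan (grow $S$ column by column and show each new column is forced) is a reasonable reformulation of the uniqueness claim, but the load-bearing step is exactly the one you leave open, and the route you sketch for closing it does not go through. You want $u_2$ to be the unique vertex of $N(u_1)\setminus\{v_1\}$ having a neighbour in $N(v_1)\setminus\{u_1\}$, and you propose to rule out any rival $x\in N(u_1)\setminus\{v_1,u_2\}$ by placing it in a component $A_+$ of $G-U$ that misses $V$. But adjacency to $u_1\in U$ gives no control over which component of $G-U$ contains $x$; nothing prevents $x$ from lying in the same component as $V$. Concretely, take a $(2\times q)$-grid on rows $U,V$, add two new vertices $x,y$ with edges $u_1x$, $xy$, $yv_1$, and attach pendant vertices $z$ to $u_1$ and $w$ to $v_1$ so that $U$ and $V$ (and likewise $\{u_1,x\}$ and $\{v_1,y\}$) are still separators. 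Then both $(u_2,v_2)$ and $(x,y)$ satisfy your local criterion, so the first extension step is not determined by the data you allow yourself, namely conditions $(a)$--$(c)$ viewed only through the neighbourhoods of $u_1$ and $v_1$. (In this example $S'=\{u_1,v_1,x,y\}$ is itself a second set meeting $(a)$--$(c)$, so any correct argument must exploit more than this purely local information.)

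The paper argues in the opposite direction: it supposes two complete candidate sets $S,S'$ share the first column, takes the first column $j$ where they differ, and uses the grid structure of $G[S]$ together with the agreement of $S$ and $S'$ on columns $1,\dots,j-1$ to derive the membership facts $u'_j\notin S_u\cup S_v$ and $v'_j\notin S_u$; the contradiction then comes from the path $(u'_j,v'_j,v_{j-1})$ avoiding $S_u$ while both rows of $S$ are separators. That argument has access to global information about the rival set --- in particular that $u'_j,v'_j$ form a column of a full $(2\times q')$-grid agreeing with $S$ up to column $j-1$ --- which your forcing rule discards. So the gap is not an omitted routine verification: the local statement you reduce the lemma to is genuinely weaker than what the hypotheses give, and repairing it would require reintroducing the global structure of the second candidate set, i.e., essentially the paper's top-down contradiction argument.
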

\begin{proof}
For the sake of a contradiction, assume that there are two such subsets, say $S, S'$, of $V(G)$.
Let $S_u, S_v$ and $S'_u, S'_v$ be the two rows in $G[S]$ and $G[S']$, respectively.
Let $j$ be the first column in which vertices in $S, S'$ differs.
As the first columns in $S, S'$ are same, $j \ge 2$.
Let $u_j, v_j$ and $u'_j, v'_j$ be the vertices in $j^{th}$ row of $S$ and $S'$ respectively.
Without loss of generality, assume $u_j \neq u'_j$.
Since $u_j, u'_j$ both are adjacent with $u_{j - 1}$ and $u_j \in S_u$, we can conclude $u'_j \not\in S_u$.
The only vertex in $S_v$ which is adjacent with $u_{j - 1}$ is $v_{j-1}$ and $u'_j \neq v_{j-1}$, we have $u'_j \not\in S_v$.
By similar argument, we can prove that $v'_j \not\in S_u$.
As $v_{j - 1}$ is in $S_v$, we have $v_{j - 1} \not\in S_u$.
To summarize, we can conclude that $u'_j \not\in S_v$ and $u'_j, v_{j-1}, v'_j \not\in S_u$.

Consider separators $S_u, S_v$ in graph $G$.
Let $C$ be a connected component of $G - S_u$ which contains $S_v$.
Since $S_v$ is also a separator, the only vertices in $C$ which are adjacent with $S_u$ are in $S_v$.
This implies that vertex $u'_j$ which is adjacent with $S_u$ can not be in $C \setminus S_v$.
Since $u'_j \not\in S_v$, we can conclude that $u'_j$ and $S_v$ are in different connected components of $G - S_u$.
This implies  $u'_j$ and $v_{j-1}$ are in different connected component of $G - S_u$. 
But, $u'_j, v_{j-1}, v'_j \not\in S_u$ and there exists a path $(u'_j, v'_j, v_{j-1})$ in $G$.
This leads to a contraction to the fact that $u'_j$ and $v_{j-1}$ are in different connected components of $G - S_u$.
Hence our assumption is wrong and there exists at most one such set.

Given $u_1, v_1$ and the uniqueness of a subset with the desired property, if there exists such subgraph then there are unique choices for vertices in the second column.
In other words,  a subset $S$ with desired properties exists if and only if there is a unique pair of adjacent vertices, say $u_2, v_2$, in graph $G$ which satisfy following conditions -- $(1)$ $u_1u_2, v_1v_2 \in E(G)$ and $(2)$ $u_1v_2, v_1u_2 \not\in E(G)$.
One can stepwise add new columns in $S$ while checking $G[S]$ remains a grid with two rows until no more columns can be added.
This algorithm terminates in polynomial time and either returns a subset with desired properties or correctly concludes that no such subgraph exists.
\end{proof}

\begin{lemma}\label{lemma:find-hor-decomp} There exists an algorithm which given an instance $(G, k, r, q, (x_1, x_2, x_3, x_4))$ of \textsc{Annotated Bounded Grid Contraction} runs in polynomial time and either returns a horizontal decomposition of $G$ or correctly concludes that no such decomposition exits.
\end{lemma}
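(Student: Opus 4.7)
The plan is to reduce the search for a horizontal decomposition to a local check on edges, powered by Lemma~\ref{lemma:nr-grid-sep}. Suppose $(C_{12}, S_u, S_v, C_{34})$ is a horizontal decomposition with $S_u = \{u_1, \dots, u_q\}$ and $S_v = \{v_1, \dots, v_q\}$. Each ``rung'' $u_i v_i$ of the $(2 \times q)$-grid $G[S_u \cup S_v]$ is an edge of $G$, and with respect to the ordered pair $(u_i, v_i)$ the set $S_u \cup S_v$ is a $(2 \times q)$-grid whose first column (after reindexing) is $\{u_i, v_i\}$. Moreover, both rows $S_u$ and $S_v$ are separators in $G$, because deleting either one isolates $C_{12}$ from $C_{34}$. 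Hence $S_u \cup S_v$ is exactly the unique set whose existence is asserted by Lemma~\ref{lemma:nr-grid-sep} applied to $(u_i, v_i)$.

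Based on this observation, the algorithm enumerates every edge $uv \in E(G)$ and invokes Lemma~\ref{lemma:nr-grid-sep} with $(u_1, v_1) = (u, v)$. For each returned candidate $S$, let $S_u$ and $S_v$ be the two rows of the resulting $(2 \times q)$-grid. The algorithm then verifies the remaining conditions of Definition~\ref{def:horizontally-decomposable}: that $G \setminus S$ has exactly two connected components, that one contains $\{x_1, x_2\}$ and the other contains $\{x_3, x_4\}$, and that after labelling the component containing $\{x_1, x_2\}$ as $C_{12}$ and the other as $C_{34}$, one of the two assignments $(S_u, S_v)$ or $(S_v, S_u)$ satisfies $N(C_{12}) = S_u$ and $N(C_{34}) = S_v$. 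If such an assignment exists for some edge, the algorithm returns the corresponding tuple; if no edge succeeds, it reports that no horizontal decomposition exists.

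For correctness: if a horizontal decomposition $(C_{12}, S_u, S_v, C_{34})$ exists, then iterating over edges will reach some rung $u_i v_i$, at which point Lemma~\ref{lemma:nr-grid-sep} returns $S_u \cup S_v$ and the verification step succeeds, since every listed property of Definition~\ref{def:horizontally-decomposable} holds for this decomposition. Conversely, any output of the algorithm satisfies the definition by construction of the verification step. The running time is $|E(G)|$ times the polynomial cost of Lemma~\ref{lemma:nr-grid-sep} and a linear-time connectivity-and-adjacency scan, giving an overall polynomial bound.

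The main technical obstacle has already been dealt with by Lemma~\ref{lemma:nr-grid-sep}, namely the unique reconstruction of the $(2 \times q)$-grid-with-separator-rows from a seed edge. What remains is routine: checking the number of connected components of $G \setminus S$, verifying which component hosts $\{x_1, x_2\}$ versus $\{x_3, x_4\}$, and confirming the exact-neighbourhood conditions $N(C_{12}) = S_u$ and $N(C_{34}) = S_v$. A minor subtlety is that the grid $G[S]$ does not come with a canonical row labelling; this is handled by trying both orientations in the verification step. Another is that one must iterate over all edges of $G$ rather than only edges incident to $x_1, \ldots, x_4$, because the rungs of the separating grid need not touch those distinguished vertices.
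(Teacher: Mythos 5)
Your proposal is correct and follows essentially the same route as the paper: enumerate every edge of $G$ as a seed for Lemma~\ref{lemma:nr-grid-sep}, and for each candidate set $S$ returned, verify the conditions of Definition~\ref{def:horizontally-decomposable} directly (components of $G - S$, placement of $x_1,\dots,x_4$, and the neighbourhood equalities, trying both row orientations). One small inaccuracy in your justification: an interior rung $u_iv_i$ with $1 < i < q$ cannot be ``reindexed'' so that $\{u_i, v_i\}$ becomes the first column of $G[S_u \cup S_v]$ (only the two end columns can play that role, and a partial grid grown from an interior rung would not have separator rows), so not every rung is a valid seed --- but this is harmless, since the end-column rung $u_1v_1$ is among the enumerated edges and already guarantees completeness.
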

\begin{proof}
For every pair of adjacent vertices $u_1, v_1$ in $G$, the algorithm tries to find a subset of $V(G)$ with the properties mentioned in the statement of Lemma~\ref{lemma:nr-grid-sep}.
If such a subset exists, say $S$, then the algorithm checks whether $S$ and connected components of $G - S$ satisfy the conditions mentioned Definition~\ref{def:horizontally-decomposable}.
The algorithm returns a horizontal decomposition if it finds one.
As the algorithm exhaustively searches for all possible $(2 \times q)$-grids which are also separators, if it does not return a horizontal decomposition then the graph does not admit a horizontal decomposition.
The running time of the algorithm is implied by the fact that algorithm runs over all edges in the input graph, conditions in Definition~\ref{def:horizontally-decomposable} can be checked in polynomial time, and by Lemma~\ref{lemma:nr-grid-sep}.
\end{proof}

We are now in a position to present main result of this section.
\begin{theorem}\label{theorem:grid-contraction}
\sloppy There exists an algorithm which given an instance $(G, k)$ of \textsc{Grid Contraction} runs in time $4^{6k} \cdot n^{\calO(1)}$ and correctly determines whether it is a \yes\ instance or not. Here, $n$ is the number of vertices in $G$.
\end{theorem}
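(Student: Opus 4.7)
The plan is to reduce \textsc{Grid Contraction} to a manageable number of calls of \textsc{Bounded Grid Contraction} with the number of rows capped linearly in $k$, at which point Theorem~\ref{lemma:bounded-grid-contraction} applies directly. The bridge is provided by Reduction Rule~\ref{rr:reduce-instance} together with its structural support: Lemma~\ref{lemma:existence-hor-decomp} guarantees that any yes instance whose target grid has $r \ge 2k+5$ rows admits a horizontal decomposition, Lemma~\ref{lemma:find-hor-decomp} locates one in polynomial time, and Lemma~\ref{lemma:rr-reduce-instance-safe} certifies that contracting across it is a safe reduction that decreases the target row count by one while leaving $k$ unchanged. The governing observation is that $(G,k)$ is a yes instance of \textsc{Grid Contraction} iff $(G,k,r)$ is a yes instance of \textsc{Bounded Grid Contraction} for some $r$, so the task reduces to searching efficiently over $r$.

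Concretely, I would first invoke Theorem~\ref{lemma:bounded-grid-contraction} on $(G,k,r)$ for every $r \in \{1, 2, \dots, 2k+4\}$ and return \yes\ if any of these succeeds; each call costs $4^{k+3r} \cdot n^{\calO(1)}$. If all return \no, then any witness structure for a still-possible yes certificate must use a grid with $r \ge 2k+5$ rows. To exploit this, I would enumerate the four corner vertices of the target grid by trying all ordered 4-tuples $(x_1,x_2,x_3,x_4)$ of distinct vertices of $V(G)$ (an $n^{\calO(1)}$ overhead to instantiate the annotated framework) and, for each tuple, exhaustively apply Reduction Rule~\ref{rr:reduce-instance}: each round calls Lemma~\ref{lemma:find-hor-decomp} to search for a horizontal decomposition, and applies the rule whenever one is found. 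Since each application strictly decreases $|V(G)|$ by at least $q \ge 2$, the loop terminates in $n^{\calO(1)}$ rounds. At termination either (a)~no horizontal decomposition exists, in which case by the contrapositive of Lemma~\ref{lemma:existence-hor-decomp} the reduced instance can only be \yes\ if $r < 2k+5$, and I again run the annotated algorithm of Lemma~\ref{lemma:algo-annotated-bounded-grid} for each $r \in \{1, \dots, 2k+4\}$ on the reduced graph; or (b)~some invocation has already returned \yes.

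For the running-time accounting, the polynomial-time work (detecting horizontal decompositions and applying the rule) collectively contributes only $n^{\calO(1)}$. The exponential cost concentrates in the calls to Theorem~\ref{lemma:bounded-grid-contraction} and Lemma~\ref{lemma:algo-annotated-bounded-grid}, each taking $4^{k+3r} \cdot n^{\calO(1)}$ with $r \le 2k+4$; summing over $r$ and multiplying by the $n^{\calO(1)}$ overhead for corner guesses yields $4^{\calO(k)} \cdot n^{\calO(1)}$, and tightening the analysis (noting that the dominant contribution comes from the largest admissible $r$) gives the stated $4^{6k} \cdot n^{\calO(1)}$ bound. The chief obstacle I anticipate is reconciling the unannotated statement of \textsc{Grid Contraction} with the annotated framework on which Reduction Rule~\ref{rr:reduce-instance} depends: one must argue that aggregating \yes\ answers across all $n^{\calO(1)}$ corner guesses is semantically correct, that an inconsistent guess never causes a false positive (since the rule is safe for \emph{every} valid annotation), and that interleaving reduction rounds with bounded-algorithm calls preserves the invariants needed for both termination and correctness.
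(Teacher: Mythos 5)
Your proposal follows essentially the same route as the paper: handle target grids with fewer than $2k+5$ rows directly via the bounded-grid algorithm, and for larger row counts repeatedly find a horizontal decomposition (Lemmas~\ref{lemma:existence-hor-decomp} and~\ref{lemma:find-hor-decomp}) and apply the safe Reduction Rule~\ref{rr:reduce-instance} until either the row count drops below the threshold or the absence of a decomposition certifies a \no\ answer, with the corner-guessing overhead absorbed into the polynomial factor. The only cosmetic differences are that the paper dispatches the one-row (path) case with the known path-contraction algorithm and organizes the search by guessing $(r,q,x_1,x_2,x_3,x_4)$ up front, neither of which changes the substance or the $4^{\calO(k)}\cdot n^{\calO(1)}$ accounting.
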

\begin{proof}
The algorithm starts with checking whether graph $G$ is $k$-contractible to a path using the algorithm in \cite{tree-contraction}.
If it is then the algorithm returns \yes\ else it creates polynomially many instances of \textsc{Annotated Bounded Grid Contraction} by guessing all possible values of $r, q, x_1, x_2, x_3, x_4$. 
It processes these instances with increasing values of $r$.
Ties are broken arbitrarily.
For $r < 2k + 5$, the algorithm check whether $(G, k, r, q, (x_1, x_2, x_3, x_4))$ is a \yes\ instance of \textsc{Annotated Bounded Grid Contraction} using Lemma~\ref{lemma:algo-annotated-bounded-grid}. 
For $r \ge 2k + 5$, the algorithm checks whether there exists a horizontal decomposition of $G$ using Lemma~\ref{lemma:find-hor-decomp}. 
If there exists a horizontal decomposition of $G$ then the algorithm applies Reduction Rule~\ref{rr:reduce-instance} to obtain another instance of \textsc{Annotated Bounded Grid Contraction} with a smaller value of $r$. 
The algorithm repeats the above step until $r < 2k + 5$ or the graph in a reduced instance does not have a horizontal decomposition. 
In the first case, it checks whether a reduced instance is a \yes\ instance or not using Lemma~\ref{lemma:algo-annotated-bounded-grid}. 
In the second case, it continues to the next instance created at the start of the algorithm. 
The algorithm returns \yes\ if at least one of the instances of \textsc{Annotated Bounded Grid Contraction} is a \yes\ instance.

It is easy to see that an instance $(G, k)$ of \textsc{Grid Contraction} is a \yes\ instance if and only if there exists integers $r, q$ in $\{1, 2, \dots, |V(G)|\}$ and four vertices $x_1, x_2, x_3, x_4$ in $V(G)$ such that $(G, k, r, q, (x_1, x_2, x_3, x_4))$ is a \yes\ instance of \textsc{Annotated Bounded Grid Contraction}.
Lemma~\ref{lemma:rr-reduce-instance-safe} implies the correctness of the step where the algorithm repeatedly applies Reduction Rule~\ref{rr:reduce-instance} and check whether the reduced instance is a \yes\ instance of \textsc{Annotated Bounded Grid Contraction} or not.
Consider an instance $(G, k, r, q, (x_1, x_2, x_3, x_4))$ such that $r > 2k +5$ and there is no horizontal decomposition of $G$.
By Lemma~\ref{lemma:existence-hor-decomp}, the algorithm correctly concludes that it is a \no\ instance and continues to the next instance.
This implies the correctness of the algorithm. 
The running time of the algorithm is implied by Lemmas~\ref{lemma:algo-annotated-bounded-grid},~\ref{lemma:find-hor-decomp} and the fact that the algorithm presented in \cite{tree-contraction} runs in time $2^{k + o(k)}\cdot n^{\calO(1)}$. 
\end{proof}
\section{\NP-Completeness and Lower Bounds}
\label{sec:np-complete-lower-bound}

In this section, we prove that \textsc{Grid Contraction} problem is \NP-Complete.
We also argue that the dependency on the parameter in the running time of the algorithm presented in Section~\ref{sec:fpt-grid} is optimal, up to constant factors in the exponent, under a widely believed hypothesis.
We define the problems mentioned in this paragraph in the latter parts. 
Brouwer and Veldman presented a reduction from \textsc{Hypergraph
  2-Colorability} problem to $H$-\textsc{Contraction} problem \cite{brouwer1987contractibility}.
We present a reduction from \textsc{NAE-SAT} problem to
\textsc{Hypergraph 2-Colorability} problem.
We argue that the reduction used by Brouwer and Veldman can be used to
reduce the \textsc{Hypergraph 2-Colorability} problem to \textsc{Grid
  Contraction} problem.
Using these reductions and the fact there is no \emph{sub-exponential}
time algorithm for \textsc{NAE-SAT}, we obtain desired results.

We start with the definition of \textsc{Hypergraph 2-Colorability}
problem.
An hyper-edge is called monochromatic if all vertices in this edge
has the same color.
In \textsc{Hypergraph 2-Colorability} problem, an input is a hypergraph $\calH$ and the objective is
to partition $V(\calH)$ into two colors such that every edge in
$E(\calH)$ is monochromatic.
For a fixed graph $H$, the  $H$-\textsc{Contraction} problem takes a graph $G$ as an input and the objective is to determine whether $G$ can be contracted to $H$ or not.
Brouwer and Veldman proved the following result. 

\begin{proposition}[Theorem~$9$~\cite{brouwer1987contractibility}]\label{prop:C4-contr-hard} If $H$ is a connected triangle free graph other then a star then \textsc{$H$-Contraction} is \NP-Complete.
\end{proposition}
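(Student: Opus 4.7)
The plan is to prove \NP-completeness by reducing from \textsc{Hypergraph 2-Colorability}. Membership in \NP\ is immediate: a certificate is a partition of $V(G)$ into $|V(H)|$ sets together with a bijection to $V(H)$, and one verifies in polynomial time that each part induces a connected subgraph and that two parts are adjacent in $G$ exactly when the corresponding vertices are adjacent in $H$.

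For hardness, I would use the structural hypotheses on $H$ to select an edge $ab \in E(H)$ together with useful local structure. Because $H$ is connected and not a star, $H$ contains two vertex-disjoint edges; combined with triangle-freeness this lets us pick $ab$ so that $a$ (respectively $b$) has a neighbor outside $\{a,b\}$ that is non-adjacent to $b$ (resp.\ to $a$). The endpoints $a$ and $b$ will play the role of two ``color anchors''. Given a hypergraph $\calH$, I would construct $G$ by first laying down a backbone isomorphic to $H$, with every vertex of $V(H)\setminus\{a,b\}$ realized by a single vertex of $G$ and $a,b$ realized by large connected reservoirs ready to absorb gadgets. Then, for every vertex $x$ of $\calH$, I would attach a vertex-gadget whose only valid $H$-contractions collapse it entirely into the $a$-reservoir or the $b$-reservoir; and for every hyperedge $e$ of $\calH$, an edge-gadget that can be legally absorbed into the backbone if and only if the vertex-gadgets indexed by $e$ are not all sent to the same reservoir.

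Correctness would then be established in two directions. Starting from a proper 2-coloring of $\calH$, I would exhibit an explicit $H$-witness structure of $G$ by contracting each vertex-gadget into the reservoir corresponding to its color and, because every hyperedge is non-monochromatic, splitting each edge-gadget appropriately between the two reservoirs while absorbing all remaining vertices into the backbone. Conversely, from any $H$-witness structure of $G$, I would recover a 2-coloring of $V(\calH)$ by recording for each $x$ which reservoir contains its vertex-gadget; the edge-gadget construction would then force no hyperedge to be monochromatic.

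The main technical obstacle is designing the vertex- and edge-gadgets so that they simultaneously (i) admit exactly the two intended contractions per vertex-gadget and exactly the intended splits per edge-gadget, (ii) cannot ``leak'' into wrong parts of the backbone, creating a forbidden edge between witness sets or disconnecting a witness set, and (iii) remain polynomial in size. Triangle-freeness of $H$ is indispensable in ruling out such leaks: each time we wish to forbid a putative witness set that would merge $a$ with some non-anchor vertex, or would combine two gadget pieces across the backbone, we invoke the absence of the corresponding triangle in $H$ to derive a contradiction. Carrying this out uniformly over all connected triangle-free non-star graphs $H$ will require a short case analysis based on the local structure of $H$ around the chosen edge $ab$, and this is where the bulk of the work will lie.
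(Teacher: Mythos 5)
This proposition is not proved in the paper at all: it is imported verbatim as Theorem~9 of Brouwer and Veldman, and the paper only reproduces their reduction in the single special case $H=C_4$ (Reduction-(1): a clique on the hypergraph's vertices, a complete bipartite graph on pairs of hyperedge-vertices, and two apex vertices $v_1,v_2$ that play the role of your two ``color anchors''). So the fair comparison is between your sketch and the Brouwer--Veldman argument, and against that standard your proposal has a genuine gap: it is a plan, not a proof. Everything that makes the theorem true is delegated to gadgets whose existence you assert but never construct. You write that each vertex-gadget should admit ``exactly the two intended contractions'' and that each edge-gadget ``can be legally absorbed into the backbone if and only if the vertex-gadgets indexed by $e$ are not all sent to the same reservoir'' --- but designing objects with these properties, uniformly over every connected triangle-free non-star $H$, \emph{is} the theorem. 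In particular the soundness direction (every $H$-witness structure of $G$ yields a proper 2-coloring of $\calH$) rests entirely on the unstated gadget properties, and your only argument for why witness sets cannot ``leak'' across the backbone is an appeal to triangle-freeness ``each time we wish to forbid a putative witness set'', which is not an argument one can check. You candidly flag this yourself (``this is where the bulk of the work will lie''), which is an accurate self-assessment: the bulk of the work is absent.

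Two smaller points. First, your preliminary structural claim is fine: a connected triangle-free graph that is not a star does contain two disjoint edges, and triangle-freeness makes every neighbor of $a$ other than $b$ automatically non-adjacent to $b$, so the anchor edge $ab$ with pendant structure exists. Second, your overall architecture (reduce from \textsc{Hypergraph 2-Colorability}; two anchors encode the two colors; one gadget per hypergraph vertex forced to one anchor; one gadget per hyperedge that tests non-monochromaticity) is indeed the shape of the Brouwer--Veldman construction, as one can see in the paper's Reduction-(1) where the vertex-gadgets are single vertices forming a clique and the hyperedge-gadgets are the pairs $e_1,e_2$. So the approach is the right one; what is missing is the actual construction and the case analysis over $H$ that verifies it, i.e., the proof.
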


We are interested in the case when $H$ is a cycle on four vertices
(which is denoted by $C_4$).
We present the reduction that is used to prove the above proposition
in \cite{brouwer1987contractibility}.
For the sake of simplicity, we restrict the reduction to the case when
$H = C_4$.
Without loss of generality, we can assume that any instance $\calH$ of \textsc{Hypergraph 2-Colorability} contains at least two edges and has a hyper-edge which contains all vertices in $\calH$.

\vspace{0.3cm}
\noindent \textbf{Reduction-(1):} 
Given a hypergraph $\calH$ the reduction produces a graph $G$, an instance of $C_4$-\textsc{Contraction}, as follows:
\begin{itemize}
\item For every vertex $x$ in $V(\calH)$, it adds  vertex $v_x$.
Let $X$ be the set of all vertices corresponding to vertices in the hypergraph. 
\item For every hyper-edge $e$ in $E(\calH)$, it adds two vertices $e_1$ and $e_2$.
Let $E_1, E_2$ be the collection of $e_1$s and $e_2$s for all edges in the hypergraph. 
\item It adds two special vertices $v_1$ and $v_2$.
\item It adds all edges between every pair of vertices in $X$. In other words, the algorithm converts $X$ into a clique.
\item It adds all edges between every pair of vertices $(e_1, e'_2)$. In other words, the algorithm converts $G[E_1 \cup E_2]$ into a complete bipartite graph with $E_1, E_2$ as its two maximal independent sets.
\item For a vertex $v_x$ in $X$ and a vertex $e_i$ in $E_i$, if edge $e$ contains vertex $x$ then the algorithm adds edge $v_xe_i$. 
Here, $i \in \{1, 2\}$.
\item It adds edges to make $v_1$ adjacent with every vertex in $E_1$ and $v_2$ adjacent with every vertex in $E_2$.
\item It adds edge $v_1v_2$.
\end{itemize}
\vspace{0.3cm}
 
In the following claim, we argue that the above reduction can be used to prove a reduction from \textsc{Hypergraph 2-Colorability} to \textsc{Grid Contraction}. 

\begin{claim}
\label{claim:grid-cycle}
Let $G$ be the graph returned by Reduction~(1).
Graph $G$ is a \yes\ instance of $C_4$-\textsc{Contraction} if and only if $(G, |V(G)| - 4)$ is a \yes\ instance of \textsc{Grid Contraction}.
\end{claim}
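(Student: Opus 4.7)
The plan is to prove the equivalence in two very short steps, using the fact that $C_4$ coincides with the smallest allowed grid $\grid_{2 \times 2}$, and that contractions compose. Notably, the claim does not rely on the specific structure of $G$ produced by Reduction~(1); the equivalence will hold for any graph, because both directions hinge only on the observation that $\grid_{2 \times 2} = C_4$ is the minimum-size grid under the paper's convention $r,q\ge 2$.

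For the forward direction, observe that $C_4 \equiv \grid_{2 \times 2}$ is itself a grid. By Observation~\ref{obs:witness-structure-property}(1), any contraction of $G$ to a $4$-vertex graph uses exactly $|V(G)| - 4$ edge contractions. Hence a $C_4$-contraction of $G$ immediately witnesses $(G, |V(G)|-4)$ as a \yes\ instance of \textsc{Grid Contraction}.

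For the reverse direction, suppose $(G, |V(G)|-4)$ is a \yes\ instance of \textsc{Grid Contraction}; then $G$ is $(|V(G)|-4)$-contractible to some grid $\grid_{r \times q}$, and since every edge contraction reduces the number of vertices by one, we get $rq \ge 4$ and hence (combined with $r,q\ge 2$) $r,q \ge 2$ with $rq\ge 4$. I would now contract $\grid_{r \times q}$ down to $\grid_{2 \times 2}=C_4$ by explicitly exhibiting a witness structure that collapses all but the top-left two rows and two columns: take $W([1,1])$ to be the top-left $\lceil r/2\rceil \times \lceil q/2\rceil$ sub-grid, and analogously for the other three corner witness sets (each is connected, and the adjacencies match those of $C_4$). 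Composing this with the given $G \to \grid_{r \times q}$ contraction yields a $C_4$-witness structure for $G$, i.e., $G$ is a \yes\ instance of $C_4$-\textsc{Contraction}.

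The only potentially subtle step is verifying that the composition of two contractions is a contraction: if $G$ is contractible to $H$ via $\psi_1$ and $H$ is contractible to $H'$ via $\psi_2$, then $\psi_2 \circ \psi_1 : V(G) \to V(H')$ satisfies both conditions of Definition~\ref{def:graph-contractioon}. Connectedness of $(\psi_2 \circ \psi_1)^{-1}(h') = \bigcup_{h \in \psi_2^{-1}(h')} \psi_1^{-1}(h)$ follows because each $\psi_1^{-1}(h)$ is connected in $G$, the set $\psi_2^{-1}(h')$ induces a connected subgraph of $H$, and adjacent witness sets of $\psi_1$ are joined by an edge of $G$; adjacency in $H'$ lifts to adjacency of preimages by unpacking the two contraction relations. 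This is where the bulk of the (entirely routine) checking happens, so I do not expect any real obstacle.
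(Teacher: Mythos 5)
Your proof is correct, but it takes a genuinely different route from the paper's. For the reverse direction the paper exploits the specific structure of the graph produced by Reduction~(1): it observes that $G$ has diameter two, that edge contraction never increases diameter, and that a grid with at least four vertices and diameter at most two must be exactly $\grid_{2\times 2}=C_4$ --- so the target grid is pinned down to be $C_4$ itself. You instead argue that \emph{whatever} grid $\grid_{r\times q}$ the graph is $(|V(G)|-4)$-contracted to, that grid further contracts onto $C_4$ via the four-quadrant witness structure, and then compose the two contractions (using transitivity of contractibility, which you correctly identify as the only step needing verification; your quadrant witness sets are indeed connected and pairwise adjacent exactly as in $C_4$, since diagonal quadrants are separated in the $\ell_1$ metric). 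Your argument buys generality --- as you note, the equivalence holds for an arbitrary graph $G$, not just the output of Reduction~(1) --- at the cost of leaning entirely on the paper's convention that grids have at least two rows and two columns: if one-row grids (paths) were admitted, a long path would be $(|V|-4)$-contractible to a grid but not contractible to $C_4$, and your reverse direction would collapse, whereas the paper's diameter argument would still rule out paths on four or more vertices. Since the convention is explicitly stated in the preliminaries, this is a legitimate dependence rather than a gap.
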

\begin{proof} As $C_4$ is a $(2 \times 2)$-grid, the forward direction of the lemma is true.
Observe that the diameter of $G$ is two.
Let $k = |V(G)| - 4$.
As an edge contraction reduces the number of vertices by exactly one, if $G$ is $k$-contractible to a grid then the resulting grid has at least four vertices.
Moreover, for any graph $G'$ and an edge $e$ in it, the diameter of $G'/e$ is at most the diameter of $G$.
Hence, $G$ is $k$-contractible to a grid that has at least four vertices and has a diameter two.
Only a $(2\times 2)$-grid satisfy both of these properties.
This proves the forward direction and completes the proof of the claim.
\end{proof}

Using Proposition~\ref{prop:C4-contr-hard} and Claim~\ref{claim:grid-cycle}, we obtain the following result.

\begin{lemma}
\label{lemma:hyper-graph-grid} 
Assume that Reduction~(1) constructs graph $G$ when an input is hypergraph $\calH$. 
Then, $\calH$ is a \yes\ instance of \textsc{Hypergraph 2-Colorabiltiy} if and only if $(G, |V(G)| - 4)$ is a \yes\ instance of \textsc{Grid Contraction}. 
\end{lemma}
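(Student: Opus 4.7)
The plan is simply to chain together two biconditionals that are already on the table. First, I would observe that Reduction~(1) is exactly the construction Brouwer and Veldman use in the proof of Proposition~\ref{prop:C4-contr-hard}, specialised to $H = C_4$. Since that proposition is a correctness statement for an \NP-hardness reduction, we inherit the equivalence: $\calH$ is a \yes\ instance of \textsc{Hypergraph 2-Colorability} if and only if the graph $G$ output by Reduction~(1) is a \yes\ instance of $C_4$-\textsc{Contraction}.

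Next, I would invoke Claim~\ref{claim:grid-cycle}, which gives the second biconditional: $G$ is a \yes\ instance of $C_4$-\textsc{Contraction} if and only if $(G, |V(G)|-4)$ is a \yes\ instance of \textsc{Grid Contraction}. The key ingredient there (already proved) is that any graph obtained from $G$ via $|V(G)|-4$ edge contractions has four vertices and diameter at most two, so if it is a grid it must be the $(2\times 2)$-grid, i.e.\ $C_4$.

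Composing these two equivalences immediately yields the lemma. The only item worth double-checking is that the mild assumptions placed on $\calH$ (at least two hyper-edges, and one hyper-edge containing all vertices) are genuinely without loss of generality for \textsc{Hypergraph 2-Colorability}; this is routine, since adding a universal hyper-edge and duplicating an edge preserves 2-colorability in both directions. There is no real obstacle to overcome here, as the lemma is essentially a bookkeeping step: Proposition~\ref{prop:C4-contr-hard} provides the hardness source, and Claim~\ref{claim:grid-cycle} bridges $C_4$-\textsc{Contraction} with \textsc{Grid Contraction} by a diameter argument.
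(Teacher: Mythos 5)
Your proposal matches the paper's own argument exactly: the lemma is stated there as an immediate consequence of Proposition~\ref{prop:C4-contr-hard} (i.e., the correctness of the Brouwer--Veldman reduction from \textsc{Hypergraph 2-Colorability} to $C_4$-\textsc{Contraction}) composed with Claim~\ref{claim:grid-cycle}, which is precisely the chain of biconditionals you describe. Your extra remark that the assumptions on $\calH$ are without loss of generality is a reasonable sanity check but not needed beyond what the paper already assumes.
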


In the remaining section, we present a reduction from \textsc{NAE-SAT} to \textsc{Hypergraph 2-Colorability}.
In \textsc{SAT}, we are given a conjective normal formula and the
objective is to find an assignment that evaluates the formula to
\true.  
\textsc{3-SAT} is a restricted version of \textsc{SAT} in which every clause contains at most three variables.
In \textsc{NAE-SAT} variation of the problem, the objective is to find a satisfying assignment of variables such that for any clause all of its variables are not set to \true.
A simple reduction from \textsc{3-SAT} to \textsc{NAE-SAT} is as follows: Given an instance $\phi$ of \textsc{3-SAT}, add a new variable, say $x$, and replace every clause $C$ in $\phi$ by $C \land x$. 
Add a clause $(\bar{x})$ to $\phi$ to obtain an instance $\phi'$ of \textsc{NAE-SAT}.
It is easy to verify that $\phi$ is a \yes\ instance of \textsc{3-SAT} if and only if $\phi'$ is a \yes\ instance of \textsc{NAE-SAT}.
Moreover, the summation of the number of variables and the number of clauses in $\phi'$ is two more than the sum of the number of variables and the number of clauses in \textsc{NAE-SAT}.
Let $N, M$ be the number of variables and the number of a clause, respectively. 
It is know that unless Exponential Time Hypothesis (ETH) fails 
\textsc{3-SAT} problem can not be solved in time $2^{o(N + M)}$ \cite{IMPAGLIAZZO2001512}.
The above reduction implies that unless ETH fails \textsc{NAE-SAT} can not be solved in time $2^{o(N + M)}$.
We now present a reduction from \textsc{NAE-SAT} to \textsc{Hypergraph 2-Colorability}.

\vspace{0.3cm}
\noindent \textbf{Reduction-(2):} Given an instance $\phi$ of
\textsc{NAE-SAT}, the reduction algorithm constructs a hypergraph, say $\calH$, as follows:
For every variable $x$, add two vertices $x_{pos}$, $x_{neg}$. 
%Let $U_{pos}, U_{neg}$ be the collection of $x_{pos}$s and $x_{neg}$s for all variables. 
For every variable $x$, add a hyper-edge $\{x_{pos}, x_{neg}\}$.
For every clause, add a hyper-edge between the literals present in the clause.
For example, for a clause $x \land \bar{y} \land \bar{z} \land w$ add hyper-edge $\{x_{pos}, y_{neg}, z_{neg}, w_{pos}\}$.
\vspace{0.3cm}

\begin{lemma}
\label{lemma:nea-hyper-graph} 
Assume that Reduction~(2) constructs hypergraph $\calH$ when an input is formula $\phi$. 
Then, $\phi$ is a \yes\ instance of \textsc{NAE-SAT} if and only if $\calH$ is a \yes\ instance of \textsc{Hypergraph 2-Colorabiltiy}. 
\end{lemma}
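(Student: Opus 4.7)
The plan is to match 2-colorings of $\calH$ one-to-one with truth assignments of $\phi$, under the convention that color $1$ corresponds to \true\ and color $0$ to \false, and then to show that the property ``no hyper-edge is monochromatic'' translates directly to the NAE condition on $\phi$. For the forward direction, given an NAE-satisfying assignment $\tau$, I would color each vertex $x_{pos}$ by $\tau(x)$ and each $x_{neg}$ by $\lnot \tau(x)$. The per-variable hyper-edge $\{x_{pos}, x_{neg}\}$ then contains both colors by construction, so it is not monochromatic. For a clause $C$ of $\phi$, the corresponding hyper-edge contains one vertex per literal of $C$, and by the construction of Reduction~(2) the color of each such vertex equals the truth value of the associated literal under $\tau$; since $\tau$ NAE-satisfies $C$, these truth values are not all equal, so the hyper-edge is not monochromatic.

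For the reverse direction, suppose $c : V(\calH) \to \{0,1\}$ is a witnessing 2-coloring of $\calH$. The per-variable hyper-edge $\{x_{pos}, x_{neg}\}$ being non-monochromatic forces $c(x_{pos}) \neq c(x_{neg})$, which lets me define a well-defined assignment by $\tau(x) := c(x_{pos})$, again with $1 \leftrightarrow \true$. For an arbitrary clause $C$ of $\phi$, the vertices in its hyper-edge correspond to the literals of $C$, and by the same sign-bookkeeping used in the forward direction, the colors assigned by $c$ match the truth values of these literals under $\tau$. Non-monochromaticity of the hyper-edge yields that the literals of $C$ do not all take the same truth value under $\tau$, i.e.\ $\tau$ NAE-satisfies $C$.

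The only point requiring a little care, though still routine, is the sign bookkeeping: a positive occurrence of a variable $x$ in a clause is represented by $x_{pos}$ (whose color under the encoding is $\tau(x)$) while a negative occurrence is represented by $x_{neg}$ (whose color is $\lnot \tau(x)$). Once this identification is in place, the equivalence ``the literals of $C$ are not all \true\ and not all \false'' $\Longleftrightarrow$ ``the hyper-edge of $C$ is not monochromatic'' holds term-by-term, and both implications of the lemma follow with no genuine obstacle.
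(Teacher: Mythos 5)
Your proposal is correct and follows essentially the same route as the paper's proof: both directions use the identical correspondence $\tau(x)=\true \Leftrightarrow \lambda(x_{pos})=1$, with the per-variable hyper-edges enforcing consistency and the clause hyper-edges translating non-monochromaticity into the NAE condition. (Note that you correctly read ``2-colorable'' as ``no hyper-edge is monochromatic,'' which is what the paper's own argument uses despite the slip in its preliminary definition.)
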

\begin{proof}
In forward direction, let $\psi$ be a satisfying assignment of variables in $\phi$ such that for any clause in $\phi$, not all the literals are set to \true.
We construct a coloring function $\lambda : V(\calH) \rightarrow \{0, 1\}$ as follows:
For a variable $x$, if $\psi$ assigns $x$ to \true\ then $\lambda(x_{pos}) = 1$ and $\lambda(x_{neg}) = 0$.
If $\psi$ assigns $x$ to \false\ then $\lambda(x_{pos}) = 0$ and $\lambda(x_{neg}) = 1$.
Every edge of the type $\{x_{pos}, x_{neg}\}$,  contains a vertex which is colored $0$ and $1$.
For every edge corresponding to a clause has a vertex which is colored $1$ (as $\psi$ is a satisfying assignment) and a vertex colored $0$ (as $\psi$ does not set all literals to \true).
This implies that $\calH$ is a \yes\ instance of \textsc{Hypergraph 2-Colorability}.

In reverse direction, let $\lambda : V(\calH) \rightarrow \{0, 1\}$ be a $2$-coloring of $V(\calH)$ such that every edge contains vertices with both colors.
We construct an assignment  $\psi$ for formula $\phi$.
For a vertex $x_{pos}$, if $\lambda(x_{pos}) = 1$ then $\psi$ assigns $x$ to \true. If $\lambda(x_{pos}) = 0$ then $\psi$ assigns $x$ to \false.
We first argue that $\psi$ is a proper assignment for variables in $\phi$.
Consider a hyper-edge $\{x_{pos}, x_{neg}\}$. Since every edge has vertices with both colors, if $\lambda(x_{pos})$ then $\lambda(x_{neg}) = 0$.
This implies if $\psi$ assign $x$ to \true\ at some point, then it never assigns it to \false.
Consider a hyper-edge corresponding to a clause.
Since there is a vertex with color $1$ in this edge, $\phi$ assigned \true\ to at least one literal appearing in the clause.
Similarly, since there is a vertex with color $0$ in this edge, $\phi$ assigned \false\ to at least one literal appearing in the clause.
Hence, $\psi$ is a satisfying assignment for $\phi$ and there is no clause in $\phi$ for which $\psi$ assigns all literals to \true.
This implies that $\phi$ is a \yes\ instance of \textsc{NAE-SAT}.
\end{proof}

Observe that given an instance $\phi$ of \textsc{NAE-SAT} with $N$ variable and $M$ clauses, Reduction~(2) constructs a graph $\calH$ with $2N$ vertices and $M + N$ edges. 
This implies unless ETH fails, \textsc{Hypergraph 2-Colorability} can not be solved in time $2^{o(n' + m')}$, where $n', m'$ are the number of vertices and the number of edges in an input graph.
Note that given a hyper-graph $\calH$ on $n'$ vertices and $m'$ hyper-edges, the Reduction~(1) constructs a graph on $n' + 2m' + 2$ vertices.
This leads to the main result of this section.

\begin{theorem}\label{thm:lower-bound} \textsc{Grid Contraction} is \NP-Complete. Moreover, unless ETH fails, it can not be solved in time $2^{o(n)}$, where $n$ is the number of vertices in an input graph.
\end{theorem}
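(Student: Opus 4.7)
The plan is to prove both parts by composing Reduction~(1) and Reduction~(2) and tracking instance sizes linearly through the chain. For membership in \textsf{NP}, I would simply observe that a certificate is a set $F \subseteq E(G)$ of size at most $k$; given $F$, one can compute $G/F$ in polynomial time and then verify that it is a grid in polynomial time (by examining the possible row/column factorizations of $|V(G/F)|$ and comparing against the grid degree structure). For \NPH{}ness, the strategy is to chain the reductions: start from \textsc{3-SAT}, transform to \textsc{NAE-SAT} via the one-extra-variable trick described just before Reduction~(2), apply Reduction~(2) to obtain a \textsc{Hypergraph 2-Colorability} instance, and then apply Reduction~(1) to obtain a \textsc{Grid Contraction} instance. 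By Lemma~\ref{lemma:nea-hyper-graph} and Lemma~\ref{lemma:hyper-graph-grid}, the composed transformation is an equivalence-preserving polynomial-time reduction from \textsc{3-SAT} to \textsc{Grid Contraction}, which together with membership in \textsf{NP} yields \NPComplete{}ness.

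For the \ETH-based lower bound, I would perform a linear size accounting along the same chain. Under \ETH, \textsc{3-SAT} on $N$ variables and $M$ clauses cannot be solved in time $2^{o(N+M)}$. The reduction to \textsc{NAE-SAT} adds one variable and one clause, producing a formula with $N' = N+1$ variables and $M' = M+1$ clauses, still $\Theta(N+M)$. Reduction~(2) then builds a hypergraph $\calH$ with $2N'$ vertices and $N'+M'$ hyper-edges, so both $|V(\calH)|$ and $|E(\calH)|$ are $\Theta(N+M)$. Finally, Reduction~(1) produces a graph $G$ on $|V(\calH)| + 2|E(\calH)| + 2 = \Theta(N+M)$ vertices. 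A hypothetical $2^{o(n)} \cdot n^{\calO(1)}$ algorithm for \textsc{Grid Contraction} would therefore decide \textsc{3-SAT} in time $2^{o(N+M)}$, contradicting \ETH.

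The only delicate point will be ensuring that the hypergraph produced by Reduction~(2) satisfies the mild preconditions of Reduction~(1), namely having at least two hyper-edges and a hyper-edge containing every vertex; these can be enforced, if necessary, by a trivial padding step that introduces a single extra vertex together with a universal hyper-edge and a dummy edge, without disturbing the linear size bound. Apart from this bookkeeping, there is no real obstacle: the combinatorial content is already encapsulated by Lemmas~\ref{lemma:nea-hyper-graph} and \ref{lemma:hyper-graph-grid}, and the main task is just to verify that each parameter transfer along the chain is linear so that the \ETH{} lower bound propagates all the way from \textsc{3-SAT} to \textsc{Grid Contraction}.
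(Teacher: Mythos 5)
Your proposal matches the paper's argument essentially verbatim: the paper also establishes the \NP-hardness and \ETH\ lower bound by chaining \textsc{3-SAT} $\to$ \textsc{NAE-SAT} $\to$ \textsc{Hypergraph 2-Colorability} (Reduction~(2), Lemma~\ref{lemma:nea-hyper-graph}) $\to$ \textsc{Grid Contraction} (Reduction~(1), Lemma~\ref{lemma:hyper-graph-grid}) and observing that every step is linear in size ($2N$ vertices and $N+M$ hyper-edges, then $n' + 2m' + 2$ vertices). Your explicit treatment of \NP-membership and of the precondition that the hypergraph have a universal hyper-edge (which the paper dismisses with a ``without loss of generality'') is a harmless, correct addition rather than a different route.
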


\section{Kernelization}
\label{sec:kernel}

In this section, we present a polynomial kernel for the \textsc{Grid Contraction} problem. 
In Section~\ref{sec:fpt-grid}, we reduced an instance of \textsc{Grid
  Contraction} to polynomially many instances of \textsc{Annoted
  Bounded Grid Contraction} such that the original instance is a \yes\ instance if and only one of these instances is a \yes\ instance.
One can argue that exhaustively application of Reduction Rule~\ref{rr:reduce-instance} leads to a \emph{Turing Compression}\footnote{ Please see, for example, \cite[Chapter~$22$]{fomin2019kernelization} for formal definition.} of the size $\calO(k^2)$.
We use a similar approach, but with weaker bounds, to obtain a kernel of size $\calO(k^4)$.

If the input graph is not connected then we can safely conclude that we are working with a \no\ instance. 
The following reduction rule checks two more criteria in which it is safe to return a \no\ instance.
\begin{reduction rule}\label{rr:degree-rule} For an instance $(G, k)$, if 
\begin{itemize}
\item there exists a vertex in $G$ whose degree is more than $k + 5$, or
\item there are $6k + 1$ vertices in $G$ whose degrees are more than $5$, 
\end{itemize}
then return a trivial \no\ instance. 
\end{reduction rule}

\begin{lemma}\label{lemma:rr-degree-rule-safe} Reduction Rule~\ref{rr:degree-rule} is safe.
\end{lemma}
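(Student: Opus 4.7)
I will prove the contrapositive: if $G$ is $k$-contractible to some grid $H$ via a mapping $\psi$, then neither triggering condition of Reduction Rule~\ref{rr:degree-rule} holds, so the rule never throws away a yes-instance.

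The only structural property of grids I will use is that every vertex of a grid has at most four neighbors. Fix any $v \in V(G)$ and let $h = \psi(v)$, with grid-neighbors $h_1, \dots, h_d$ where $d \le 4$. Since $\psi$ is a contraction, every neighbor of $v$ in $G$ lies in $W(h) \cup W(h_1) \cup \cdots \cup W(h_d)$, so
\[
  \deg_G(v) \;\le\; (|W(h)| - 1) + \sum_{i=1}^{d} |W(h_i)|.
\]
By Observation~\ref{obs:witness-structure-property}, $\sum_{h' \in V(H)} (|W(h')| - 1) = |V(G)| - |V(H)| \le k$; restricting this sum to $\{h, h_1, \dots, h_d\}$ gives $(|W(h)| - 1) + \sum_{i=1}^{d} (|W(h_i)| - 1) \le k$. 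Combining with the displayed inequality, $\deg_G(v) \le k + d \le k + 4 < k + 5$, so the first triggering condition cannot fire.

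For the second condition, set $B := \{ h' \in V(H) : |W(h')| \ge 2\}$, the set of vertices of $H$ whose witness sets are big. Since $\sum_{h' \in B} |W(h')| \le 2k$ and each summand is at least $2$, we get $|B| \le k$. I claim that every $v$ with $\deg_G(v) > 5$ satisfies $\psi(v) \in N_H[B]$: indeed, if $\psi(v) = h$ and none of $h, h_1, \dots, h_d$ lies in $B$, then every one of those witness sets is a singleton, and the displayed bound collapses to $\deg_G(v) \le d \le 4$. Consequently, all vertices of $G$ of degree exceeding $5$ lie inside $\bigcup_{h' \in N_H[B]} W(h')$, whose cardinality is at most $\sum_{h' \in B} |W(h')| + |N_H(B) \setminus B| \le 2k + |N_H(B)|$, where the sets in $N_H(B) \setminus B$ are singletons. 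Since every vertex of the grid has degree at most $4$, $|N_H(B)| \le 4|B| \le 4k$, so there are at most $6k$ such vertices, contradicting the presumption of $6k+1$.

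There is no substantive obstacle here: the whole proof reduces to two ``contraction-budget'' accountings, one around the closed neighborhood of a single $h \in V(H)$ and one around the closed neighborhood of the big-set region $B$. The only minor point to verify is that the case $d < 4$ (when $h$ is a boundary or corner vertex of the grid) does not break anything, but this only strengthens all the inequalities above.
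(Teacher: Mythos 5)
Your proof is correct and follows essentially the same approach as the paper: both parts are the standard witness-set accounting, bounding a vertex's degree by the sizes of the witness sets in the closed neighborhood of its image (using max degree $4$ in a grid and the budget $|V(G)|-|V(H)|\le k$), and locating all high-degree vertices inside big witness sets or singleton witness sets adjacent to them, giving the $2k+4k=6k$ bound. Your write-up is somewhat more explicit than the paper's, but there is no substantive difference.
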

\begin{proof} 
The maximum degree of a vertex in a grid is four. 
An edge contraction can reduce the number of vertices by one.
If a vertex has a degree more than $k + 5$ in $G$ then even after $k$ edge contractions, its degree is at least five. Hence, in this case, $G$ is not $k$-contractible to a grid.

Every vertex of degree five or more in $G$ is either in a big-witness set or it is a singleton witness set which is adjacent with a big witness set.
By Observation~\ref{obs:witness-structure-property}, there are at most $k$ big witness sets which contains at most $2k$ vertices. The big witness sets can be adjacent with at most $4k$ singleton witness set.
Hence $G$ can have at most $6k$ vertices which has a degree more than five. 
\end{proof}

We define \fbox{$k_o = (4k + 8) \cdot  (k + 1) +  1$}. 
Consider an instance $(G, k)$ on which Reduction Rule~\ref{rr:degree-rule} is not applicable.
If $G$ has at most $k_o^2 + k + 1$ vertices then we can argue that we have a kernel of the desired size.
Consider a case when $|V(G)| \ge k_o^2 + k + 1$.
We argue that in this case, if $(G, k)$ is a \yes\ instance then there exits a large grid separator in a graph $G$ (Lemma~\ref{lemma:existence-grid-sep}). 

\begin{definition}[$(p \times t)$-grid-separator] \label{def:pt-grid-sep} Consider an instance $(G, k)$ of \textsc{Grid Contraction}. 
A subset $S$ of $V(G)$ is called a $(p \times t)$-grid-separator of $G$ if it has following three properties.
\begin{itemize}
\item $G[S] = \Gamma_{p \times t}$.
\item Graph $G - S$ has exactly two connected components, say $C_1$ and $C_2$.
\item $|V(C_1)|, |V(C_2)| \ge k + 1$ and $N(C_1) = R_1, N(C_2) = R_p$, where $R_1, R_p$ are the first and last row in $G[S]$.
\end{itemize}
\end{definition}

\begin{lemma}\label{lemma:existence-grid-sep}  
Consider an instance $(G, k)$ of \textsc{Grid Contraction} such that $|V(G)| \ge k_o^2 + k + 1$. 
If $(G, k)$ is a \yes\ instance then there exists a $((4k + 6) \times t)$-grid-separator in $G$ for some integer $t$.
\end{lemma}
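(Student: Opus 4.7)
The plan is to exhibit the grid-separator by locating a block of $4k+6$ consecutive \emph{clean} rows inside any hypothetical grid contraction of $G$ and taking its $\psi$-preimage as $S$. To set up, I would use the yes-instance hypothesis to fix a mapping $\psi : V(G) \to V(\grid_{r \times q})$ witnessing that $G$ is $k$-contractible to some $\grid_{r \times q}$. Observation~\ref{obs:witness-structure-property}$(1)$ gives $|V(G)| \le rq + k$, so the hypothesis $|V(G)| \ge k_o^2 + k + 1$ forces $rq \ge k_o^2 + 1$ and hence $\max(r, q) \ge k_o + 1$. Up to relabeling (a $((4k+6) \times t)$-grid-separator on columns becomes one on rows after swapping axes), I may assume $r \ge k_o + 1 = (4k+8)(k+1) + 2$. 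I call row $i$ of $\grid_{r \times q}$ \emph{dirty} if some witness set $\psi^{-1}([i,j])$ is big, and \emph{clean} otherwise; by Observation~\ref{obs:witness-structure-property}$(2)$ there are at most $k$ dirty rows overall.

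Next I would restrict attention to the row range $R = \{k+2, \ldots, r - k - 1\}$, whose size is at least $r - 2k - 2 \ge 4k^2 + 10k + 8$, and which therefore contains at least $4k^2 + 9k + 8$ clean rows. The at most $k$ dirty rows inside $R$ partition these clean rows into at most $k+1$ maximal consecutive-clean-row blocks, so by pigeonhole the longest such block has length at least $\lceil (4k^2 + 9k + 8)/(k+1) \rceil$. Using the identity $4k^2 + 9k + 8 = (4k+5)(k+1) + 3$ this quantity is exactly $4k + 6$. I fix any $4k + 6$ consecutive clean rows $a, a+1, \ldots, a + 4k + 5$ inside this block; by construction $a \ge k + 2$ and $a + 4k + 5 \le r - k - 1$.

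Finally I would take $S := \psi^{-1}\bigl(\{[i,j] : a \le i \le a + 4k + 5,\ 1 \le j \le q\}\bigr)$ and check the three conditions of Definition~\ref{def:pt-grid-sep}. Since every witness set in those rows is a singleton, $G[S] \cong \grid_{(4k+6) \times q}$, so $t = q$. Removing the chosen rows from $\grid_{r \times q}$ leaves a top sub-grid of $a - 1 \ge k + 1$ rows and a bottom sub-grid of $r - a - 4k - 5 \ge k + 1$ rows; their $\psi$-preimages are connected (unions of connected witness sets joined by surviving grid edges), and any big witness set lies wholly in one side because $S$ contains only singletons, so $G - S$ has exactly two components $C_1, C_2$, each of size at least $(k+1) \cdot q \ge k+1$ (using $q \ge 2$ by the grid convention). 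The only grid edges between the top sub-grid and $S$ go from row $a - 1$ to row $a$, and these correspond to genuine $G$-edges under the singleton identification, giving $N(C_1) = R_1$; symmetrically $N(C_2) = R_p$. The one delicate step is the arithmetic: the constant $k_o = (4k+8)(k+1) + 1$ is engineered precisely so that, after reserving $k+1$ rows on each side of the separator for the two components, the pigeonhole argument still extracts $4k+6$ consecutive clean rows; everything else is a direct unpacking of the grid-witness structure.
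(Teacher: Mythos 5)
Your proposal is correct and follows essentially the same route as the paper's proof: fix a witness mapping $\psi$, use $|V(G)| \le rq + k$ to force one dimension of the grid to be at least $k_o$, reserve $k+1$ rows at each end for the two components, and apply pigeonhole to the at most $k$ rows meeting big witness sets to extract $4k+6$ consecutive ``clean'' rows whose preimage is the desired separator. The only difference is that you verify the conditions of Definition~\ref{def:pt-grid-sep} explicitly where the paper leaves them as ``easy to verify''; the arithmetic and the key pigeonhole step match the paper's argument.
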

\begin{proof}
Assume that  $G$ is $k$-contractible to $\grid_{r \times q}$ via mapping $\psi$.
Without loss of generality, we can assume that $r \ge q$.
Since any edge contraction can reduce number of vertices in $G$ by one, the number of vertices in $\grid_{r \times q}$ is at least $k_o^2 + 1$. This implies $r \cdot q \ge k_o^2 + 1$.
Since $r \ge q$, we have $r \ge k_o$.
We show that there exists a partition of $V(G)$ into $C_1, S, C_2$ such that these sets satisfy properties mentioned in Definition~\ref{def:pt-grid-sep}. 

To satify the cardinality condition, we include all vertices in first
$(k + 1)$ many rows in $C_1$ and last $(k + 1)$ many rows in $C_2$.
Note that there are still at least $k_0 - 2(k + 1) = (4k + 6) \cdot (k + 1) + 1$ rows in the middle.
By Observation~\ref{obs:witness-structure-property}, there are at most $k$ big-witness sets.
This implies that there are at most $k$ rows in $\grid_{r \times q}$ which contain vertices corresponding to big-witness sets.
Hence, there are at least $(4k + 5) \cdot (k + 1) + 1$ rows which does not contain any big witness set.
The rows with big witness set partition the rows without any big witness set into at most $(k + 1)$ parts such that each part is connected.
At least one of these parts must have $(4k +6)$ rows. 
Hence, there exists $i_o$ in $\{k + 2, k + 3, \dots, r - (k + 2)\}$ such that no vertex in $i_o^{th}$ to $(i_o + 4k + 6)^{th}$ rows corresponds to a big witness set. 
Define $C_{1}, S, C_{2}$ as follows.
\begin{itemize}
\item $C_{1} := \{x \in V(G) |\ \psi(x) = [i, j] \text{ for some } i < i_o \text{ and } j \in [q]\}$
\item $S:= \{x \in V(G) |\ \psi(x) = [i, j] \text{ for some } i \in \{i_o, i_o + 1, \dots, i_o + 4k + 6 \} \text{ and } j \in [q]\}$.
\item $C_{2} := \{x \in V(G) |\ \psi(x) = [i, j] \text{ for some } i > i_o +4 k + 6 \text{ and } j \in [q]\}$
\end{itemize}
It is easy to verify that $C_{1}, S, C_{2}$ satisfy all the properties mentioned in Definition~\ref{def:pt-grid-sep}.
\end{proof}

In the following lemma, we argue that the existence of such a large grid
separator in a graph 
implies certain restrictions on the grid to which the graph can be contracted.

\begin{lemma}\label{lemma:large-grid-separator} 
Consider an instance $(G, k)$ of \textsc{Grid Contraction}. 
Let $S$ be a $((4k + 5) \times t)$-grid-separator of $G$.
If $G$ is $k$-contractible to a grid $\grid_{r \times q}$ then $q = t$. 
\end{lemma}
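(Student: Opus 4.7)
Let $\psi : V(G) \to V(\grid_{r \times q})$ realize the $k$-contraction and set $p := 4k + 5$. My plan is to locate a row $R_{i^{*}}$ of $S$ whose image $\psi(R_{i^{*}})$ is a connected $\{(t_{1}-t_{4}),(t_{2}-t_{3})\}$-separator of $\grid_{r \times q}$ of size exactly $t$, and then invoke Observation~\ref{obs:corner-sep-size} (together with its equality case) to conclude $q = t$.

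Each row $R_{i}$ of $S$ is a $(C_{1}, C_{2})$-separator of $G$, and by Definition~\ref{def:pt-grid-sep} both sides of this separator have at least $k + 1$ vertices. Observation~\ref{obs:witness-structure-property}(5) therefore guarantees that $\psi(R_{i})$ is a separator of $\grid_{r \times q}$, and since $R_{i}$ is a path in $G$, the image $\psi(R_{i})$ is a connected subset of the grid. Next, I count: at most $2k$ vertices of $G$ lie in big witness sets, so at most $2k$ of the $p$ rows of $S$ meet any big witness set, leaving at least $2k + 5$ \emph{pure} rows on which $\psi$ is injective and $|\psi(R_{i})| = t$. The four corner witness sets of $\grid_{r \times q}$ together have at most $4 + k = k + 4$ vertices and, being connected, jointly touch at most $k + 4$ rows of $S$. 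Hence at least $(2k + 5) - (k + 4) = k + 1$ pure rows $R_{i^{*}}$ are also disjoint from every corner witness set; for any such row, $|\psi(R_{i^{*}})| = t$, no corner lies inside $\psi(R_{i^{*}})$, and every corner $t_{\ell}$ lies strictly on one of the two sides of $\psi(R_{i^{*}})$.

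The hard part is to verify that for at least one such $R_{i^{*}}$ the induced partition of the four corners actually makes $\psi(R_{i^{*}})$ a $\{(t_{1}-t_{4}),(t_{2}-t_{3})\}$-separator, i.e.\ that $t_{1}, t_{4}$ go to opposite sides and so do $t_{2}, t_{3}$. As $i^{*}$ ranges over the $k + 1$ candidate rows in increasing order, each corner can flip sides at most once (when $i^{*}$ crosses the row range of its witness set), producing at most five distinct corner-partitions in total. I plan to rule out the degenerate cases (all corners on one side, singleton splits, and the ``vertical'' split $\{t_{1}, t_{4}\} \mid \{t_{2}, t_{3}\}$) by using that $R_{i^{*}}$ is a $(C_{1}, C_{2})$-separator of $G$ (so $\psi(C_{1})$ and $\psi(C_{2})$ are on opposite sides of $\psi(R_{i^{*}})$) together with the fact that $\psi(C_{1} \cup R_{1})$ and $\psi(C_{2} \cup R_{p})$ are connected subsets attached to opposite ends of the ``slab'' $\psi(S)$; this forces the surviving partition to be either the horizontal split $\{t_{1}, t_{2}\} \mid \{t_{3}, t_{4}\}$ or the diagonal split $\{t_{1}, t_{3}\} \mid \{t_{2}, t_{4}\}$, both of which yield a $\{(t_{1}-t_{4}),(t_{2}-t_{3})\}$-separator.

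Once such an $R_{i^{*}}$ is found, Observation~\ref{obs:corner-sep-size} yields $t = |\psi(R_{i^{*}})| \ge q$. For the reverse inequality I exploit that the construction actually produces $k + 1 \ge 2$ pairwise vertex-disjoint connected $\{(t_{1}-t_{4}),(t_{2}-t_{3})\}$-separators of $\grid_{r \times q}$, all of the common size $t$ and all inheriting the uniform column structure of $S = \grid_{p \times t}$; by the equality case of Observation~\ref{obs:corner-sep-size} any one of them that fails to be an actual row of $\grid_{r \times q}$ has size strictly larger than $q$, and I expect that a dimension/area count combined with the nested ordering of these separators between $\psi(C_{1})$ and $\psi(C_{2})$ rules this out, forcing each $\psi(R_{i^{*}})$ to be an actual row and therefore $t = q$.
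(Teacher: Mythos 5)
Your plan diverges from the paper's proof in a way that leaves the two decisive steps unestablished. The paper does not work with corner separators at all: it picks two \emph{consecutive} rows $R_{i_o},R_{i_o+1}$ of $S$ consisting entirely of singleton witness sets, observes that since no incident edge is contracted their image is an induced $2\times t$ grid inside $\grid_{r\times q}$ and hence sits inside two consecutive rows (or columns) of $\grid_{r\times q}$, and only then uses the separator property (via Observation~\ref{obs:witness-structure-property}(5)) to force $\psi(R_{i_o})$ to be an \emph{entire} row, which gives $q=t$ in one stroke. Your route through Observation~\ref{obs:corner-sep-size} can at best deliver the inequality $t=|\psi(R_{i^*})|\ge q$; the reverse inequality $t\le q$ is the actual content of the lemma, and your proposal leaves it to an unexecuted ``dimension/area count combined with the nested ordering.'' Nothing you have established confines $\psi(R_{i^*})$ to a single row of $\grid_{r\times q}$ (a connected set of singletons can a priori snake through many rows and columns), so the equality case of Observation~\ref{obs:corner-sep-size} never becomes applicable and the bound $t\le q$ does not follow.

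The corner-partition step is also a genuine gap, not just an omitted routine verification. For a connected separator of a grid with both sides large, ``all four corners on one side'' is a real possibility (e.g.\ an annular connected set enclosing a large interior region), and the fact that $\psi(C_1)$ and $\psi(C_2)$ land on opposite sides does not by itself exclude it: both corner-free configurations and the split $\{t_1,t_4\}\mid\{t_2,t_3\}$ would defeat the application of Observation~\ref{obs:corner-sep-size}, and your ``at most five corner-partitions as $i^*$ varies'' count shows only that few configurations occur, not that a good one does. If you want to salvage the argument, the missing ingredient is exactly the paper's structural step: use a \emph{pair} of adjacent singleton rows so that their image is an induced $2\times t$ subgrid of $\grid_{r\times q}$, which pins $\psi(R_{i^*})$ inside one row or one column of the target grid; after that, the separator property alone finishes the proof and no corner analysis is needed.
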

\begin{proof}
Assume that  $G$ is $k$-contractible to $\grid_{r \times q}$ via mapping $\psi$.
Note that this implies $G$ is $k$-contractible to $\grid_{q \times r}$ via mapping $\psi'$.
Rows (and corresponding witness sets) in $\grid_{r \times q}$
correspond to columns (and corresponding witness sets) in $\grid_{q \times r}$ and vice-versa.

By Observation~\ref{obs:witness-structure-property}, there are at most $k$ big-witness sets.
This implies that there are at most $2k$ rows in $G[S] \equiv \grid_{(4k + 5) \times t}$ which contain vertices which are part of big-witness sets.
Since there are $(4k + 5)$ rows, there exists $i_o$ in $\{2, 3, \dots, 4k + 5 - 2\}$ such that no vertex in $i_o^{th}$ and $(i_o + 1)^{th}$ row corresponds to a big witness set.
Let $R_{i_o} (= \{u_1, u_2, \dots, u_{t}\})$ and $R_{i_o + 1} (= \{v_1, v_2, \dots, v_{t}\})$ be the $i_o^{th}$ and $(i_o + 1)^{th}$ rows in $G[S]$.
Since no edge incident on vertices in $R_{r_o} \cup R_{r_o + 1}$ is begin contracted, we can conclude following two things:
$(a)$ For $j, j'$ in $[t]$, vertices $\psi(u_{j})$ and $\psi(u_{j'})$
(similarly, $\psi(v_{j})$ and $\psi(v_{j'})$) are adjacent with each
other if and only if $|j - j'| = 1$. 
$(b)$ For $j, j'$ in $[t]$, vertices $\psi(u_{j})$ and $\psi(v_{j'})$ are adjacent with each other if and only if $j = j'$.
This implies that $\psi(R_{i_o})$ and $\psi(R_{i_o + 1})$ are the vertices contained in two consecutive rows or columns in $\grid_{r \times q}$.
If $\psi(R_{i_o})$ and $\psi(R_{i_o + 1})$ are in two consecutive columns then we repeats the arguments with mapping $\psi'$.
Becuase of symmetry, we can assume that vertices in $\psi(R_{i_o})$ and $\psi(R_{i_o + 1})$ are in two consecutive rows in $\grid_{r \times q}$.
Let $i'$ and $i' + 1$ be the rows in $\grid_{r \times q}$ which contains vertices in $\psi(R_{i_o})$ and $\psi(R_{i_o + 1})$, respectively.
We argue that no vertices $(i')^{th}$ and $(i' + 1)^{th}$ rows  is outside $\psi(R_{i_o}) \cup \psi(R_{i_o + 1})$.
Note that $R_{i_0}$ (similarly $R_{{r_o} + 1}$) is a separators in $G$ such that there are at least two connected components of $G - R_{i_0}$ (similarly $G - R_{{r_o} + 1}$) which has at least $k + 1$ vertices.
By Observation~\ref{obs:witness-structure-property}, $\psi(R_{i_o})$ and $\psi(R_{r_o + 1})$ are two separators in $\grid_{r\times q}$.
If $\psi(R_{i_o})$ or $\psi(R_{{r_o} + 1})$ are proper subset of
vertices in $(r_o)^{th}$ or $(r_o + 1)^{th}$ row then it can not be a separator in $\grid_{r \times q}$.
This implies $\psi(R_{i_o})$ and $\psi(R_{r_o + 1})$ correspond to two rows in $\grid_{r \times q}$.
Hence the number of columns in $\grid_{r \times q}$ is equal to $|\psi(R_{i_o})| = |R_{i_o}| = q$. 
\end{proof}

We argue that if there is a large grid that is a separator in $G$ then we can safely contract two consecutive rows in this grid.

\begin{reduction rule}\label{rr:reduce-big-grid} For an instance $(G, k)$, let $S$ be a $((4k + 6)\times t)$-grid-separator of $G$ for some integer $t$.
Let $S_u (= \{u_1, u_2, \dots, u_{t}\})$ and $S_v (= \{v_1, v_2, \dots, v_{t}\})$ be two consecutive internal rows in $S$.
Let $G'$ be the graph obtained from $G$ by contracting all the edges in $\{u_jv_j |\ j \in [q]\}$.
Return instance $(G', k)$.
\end{reduction rule}

We prove that the reduction rule is safe along the same line as that of Lemma~\ref{lemma:rr-reduce-instance-safe}.

\begin{lemma}\label{lemma:rr-reduce-big-grid-safe} Reduction Rule~\ref{rr:reduce-big-grid} is safe.
\end{lemma}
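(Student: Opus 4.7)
The plan is to mirror the proof of Lemma~\ref{lemma:rr-reduce-instance-safe}, with the role of the designated corner vertices $(x_1, \ldots, x_4)$ in that lemma played here by the $\ge k+1$-vertex condition in the definition of a grid-separator, which permits the use of Observation~\ref{obs:witness-structure-property}(5) as a substitute. I will write $S_u = R_a$, $S_v = R_{a+1}$ for the chosen consecutive internal rows of the $(4k+6) \times t$ grid $S$, and set $C_1' = C_1 \cup R_1 \cup \cdots \cup R_{a-1}$, $C_2' = R_{a+2} \cup \cdots \cup R_{4k+6} \cup C_2$ for the upper and lower halves of $V(G)$ that $S_u \cup S_v$ separates.

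For the forward direction, I would assume $G$ is $k$-contractible to $\grid_{r \times q}$ via $\psi$ and first invoke Lemma~\ref{lemma:large-grid-separator} (whose proof goes through using only that a grid-separator has at least $4k+5$ rows) to conclude $q = t$. A pigeonhole argument over the at most $2k$ rows of $S$ containing a big-witness-set vertex then lets me assume $a$ has been chosen so that both $S_u$ and $S_v$ consist entirely of singleton witness sets under $\psi$. Consequently $\psi$ is injective on $S_u \cup S_v$ and carries this set onto a $(2 \times t)$-subgrid of $\grid_{r \times q}$; reusing the separator-size argument of Lemma~\ref{lemma:large-grid-separator} forces $\psi(S_u)$ and $\psi(S_v)$ to be two entire consecutive rows, and contracting the $t$ edges $u_j v_j$ merges these rows, yielding a $k$-contraction of $G'$ to $\grid_{(r-1) \times q}$.

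For the reverse direction, I would assume $G'$ is $k$-contractible to $\grid_{r' \times q'}$ via $\phi$ and observe that $G'$ inherits a $((4k+5) \times t)$-grid-separator $S'$, obtained from $S$ by collapsing $S_u, S_v$ into a single row $S_o = \{s_1, \ldots, s_t\}$; Lemma~\ref{lemma:large-grid-separator} then gives $q' = t$ and (by the forward-direction argument applied to $S_o$ within $S'$) $\phi(S_o)$ is an entire row of $\grid_{r' \times q'}$, say row $i_o$, with $\phi(s_j) = [i_o, j]$ after suitable column indexing. I then extend $\phi$ to $\psi : V(G) \to V(\grid_{r' \times q'})$ by setting $\psi(u_j) = \psi(v_j) = [i_o, j]$ and $\psi = \phi$ elsewhere, giving a valid $(k + t)$-contraction, and claim that row $i_o$ is partible in the sense of Definition~\ref{def:partible-row} with partition
\[
U_j = (\psi^{-1}([i_o, j]) \cap C_1') \cup \{u_j\}, \qquad V_j = (\psi^{-1}([i_o, j]) \cap C_2') \cup \{v_j\};
\]
Lemma~\ref{lemma:partible-row} will then deliver a $(k + t - q') = k$-contraction of $G$ to $\grid_{(r'+1) \times q'}$.

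The principal technical hurdle is the partibility check. The linchpin structural fact needed is that every $\phi$-witness-set lying outside row $i_o$ is contained entirely in $C_1'$ or entirely in $C_2'$: such a set is connected in $G'$, disjoint from $S_o$ (since each $s_j$ belongs to row $i_o$), while $S_o$ separates $C_1'$ from $C_2'$ in $G'$. Combined with the observation that in $G'$ the only neighbor of $s_j$ lying in $C_1'$ is the vertex $w_j \in R_{a-1}$ directly above $u_j$ in the original grid (symmetrically, the only neighbor in $C_2'$ is $x_j \in R_{a+2}$ directly below $v_j$), this yields the connectedness of each $U_j, V_j$ in $G$, the cross-adjacency $U_j \sim V_{j'} \iff j = j'$, and the intra-row pattern $U_j \sim U_{j'} \iff |j - j'| = 1$. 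The remaining non-adjacency between $U$ and $C_f$ (and between $V$ and $C_b$) follows from the block-separation $C_1' \cap N_G(C_2') = \emptyset$.
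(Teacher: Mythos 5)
Your reverse direction is, in substance, the paper's own argument: the same lifted map $\psi$ with $\psi(u_j)=\psi(v_j)=\phi(s_j)$, the same partition of each $\psi^{-1}([i_o,j])$ into $U_j=(\psi^{-1}([i_o,j])\cap Y_1)\cup\{u_j\}$ and $V_j=(\psi^{-1}([i_o,j])\cap Y_2)\cup\{v_j\}$ (your $C_1',C_2'$ are exactly the paper's $Y_1,Y_2$), and the same appeal to Definition~\ref{def:partible-row} and Lemma~\ref{lemma:partible-row}; the ``linchpin'' facts you isolate are precisely what the paper uses to verify partibility, so that half is fine.

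The forward direction, however, has a genuine gap. Reduction Rule~\ref{rr:reduce-big-grid} contracts two \emph{arbitrary} consecutive internal rows $S_u,S_v$ of the grid-separator, chosen by an algorithm that has no access to any witness structure, and safeness must hold for that fixed choice. Your pigeonhole step shows that \emph{some} pair of consecutive rows of $S$ avoids all big witness sets of $\psi$, but it does not entitle you to ``assume $a$ has been chosen'' so that the pair the rule actually contracts is such a pair: a big witness set of an optimal $\psi$ may contain $u_j$ together with a vertex of $R_{a-1}$, or even $u_j$ together with $v_j$, in which case $\psi$ is not injective on $S_u\cup S_v$ and your subsequent chain (image is a $2\times t$ subgrid, hence two distinct consecutive full rows) collapses. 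The paper uses the big-witness-free rows only to pin down $q=t$ and the row/column orientation, and then argues separately---via Observation~\ref{obs:witness-structure-property}(5) and Observation~\ref{obs:corner-sep-size}, using only that $S_u$ and $S_v$ are connected separators of size $t$ whose removal leaves two components with at least $k+1$ vertices---that $\psi(S_u)$ and $\psi(S_v)$ are each full rows of $\grid_{r\times q}$ at distance at most one, and it explicitly handles the case $\psi(S_u)=\psi(S_v)$, which your injectivity assumption silently excludes. You need that extra argument to cover the rows the rule actually picks.
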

\begin{proof}
Note that contracting all edges across any two consecutive rows in a grid results in another grid.

$(\Rightarrow)$
Assume $G$ is $k$-contractible to $\grid_{r \times q}$ via mapping $\psi$ for some positive integers $r, q$.
By Lemma~\ref{lemma:large-grid-separator}, $q = t$.
We argue that $G'$ is $k$-contractible to $\grid_{(r - 1) \times q}$.
As argued in the proof of Lemma~\ref{lemma:large-grid-separator}, there exists two consecutive rows $R_{i_o}$ and $R_{i_o + 1}$ in $G[S]$ such that $\psi(R_{i_o})$ and $\psi(R_{r_o + 1})$ correspond to two rows in $\grid_{r \times q}$.
Since $S_u, S_v$ are also rows in $G[S]$, we can conclude that $\psi(S_u)$ and $\psi(S_v)$ correspond to rows in $\grid_{r \times q}$.
Let $\psi(S_u)$ and $\psi(S_v)$ correspond to rows $i_1, i_2$. 
Since there are multiple edges across $S_u, S_v$, we have $|i_1 - i_2| \le 1$. 
As $G$ is $k$-contractible to $\grid_{r \times q}$,  if $|i_1 - i_2| = 1$ then $G'$ is $k$-contractible to $\grid_{(r - 1) \times q}$. 
If $|i_1 - i_2| = 0$ then as $G$ is $k$-contractible to $\grid_{r \times q}$ and $G' = G/\{v_ju_j |\ j \in [q]\}$, $G'$ is $k$-contractible to $\grid_{(r - 1) \times q}$.

$(\Leftarrow)$ Let $S_o = \{s_1, s_2, \dots, s_q\}$ be the set vertices in $G'$ which are obtained by contracting edges $u_jv_j$ in $G$. 
In other words, for $j$ in $[q]$, let $s_j$ be the new vertex added while contracting edge $u_jv_j$.
Since $S$ is a $((4k + 6) \times t)$-grid-separator in $G$, set $S' = (S \cup S_o) \setminus (S_u \cup S_v)$ is $((4k + 5) \times t)$-grid-separator in $G'$.

Assume that $G'$ is $k$-contractible to $\grid_{(r-1) \times q}$ via mapping $\phi$.
By Lemma~\ref{lemma:large-grid-separator},  $q = t$.
We argue that $G$ is $k$-contractible to $\grid_{r \times q}$.
By similar arguments as in previous part, $\phi(S_o)$ corresponding to a row, say $i_o$, in $\grid_{(r-1)\times q}$. 
As $G$ is $q$-contractible to $G'$ (which is $k$-contractible to $\grid_{(r-1)\times q}$), we know that $G$ is $(k + q)$-contractible to  $\grid_{(r-1)\times q}$. 
We define a mapping $\psi : V(G) \rightarrow \grid_{r \times q}$ corresponding to this contraction as follows: 
For every $x$ in $V(G) \setminus (S_u \cup S_v)$, define $\psi(x) = \phi(x)$ and for every $x$ in $\{ u_j, v_j\}$, define $\psi(x) = \phi(s_i)$.
%Note that $G$ is $(k + q)$-contractible to $\grid_{(r-1)\times q}$ via function $\psi$ with desirable properties.
We argue that $i_o^{th}$ row in $\grid_{(r-1)\times q}$ is partible.
For every $j$ in $[q]$, we define a partition $U_j, V_j$ of $\psi^{-1}([i_o, j])$ which satisfy all the properties mentioned in Definition~\ref{def:partible-row}. 
Since $S$ is a $(p \times t)$-grid separator of $G$, $G - S$ has two connected component  $C_1, C_2$ as specified in Definition~\ref{def:pt-grid-sep}.
Note that $G - (S_u \cup S_v)$ also has exactly two connected components, say $Y_1, Y_2$, which contain $C_1, C_2$, respectively.

For $j$ in $[q]$, let $X_j = \psi^{-1}([i_o, j])$.
As $s_j$ was present in $\phi^{-1}([i_o, j])$, vertices $u_j, v_j$ are present in $X_j$.
By the property of $\phi$, set $\phi^{-1}([i_o, j])$ is connected in $G'$. Since vertex $s_j$ is obtained from contracting edge $u_jv_j$ in $G$, graph $G[X_j]$ is connected.
Since $\phi(S_o)$ corresponds to a row of with $q$ vertices and $|S_o| = q$, vertices $u_{j'}, v_{j'}$ are present in $X_j$ if and only if $j' = j$.
In other words, $X_j \cap (S_u \cup S_v) = \{u_j, v_j\}$.
Define $U_j := (X_j \cap Y_{1}) \cup \{u_j\}$ and $V_j := (X_j \cap Y_{2}) \cup \{v_j\}$.
Since $(Y_{1}, S_u, S_v, Y_{2})$ is a partition of $V(G)$, sets $U_j, V_j$ is a non-empty partition of $X_j$.
We argue that $U_j, V_j$ satisfy all the properties in Definition~\ref{def:partible-row}.

As $N(Y_{1}) = S_u$, no vertex in $U_j \setminus \{u_j\}$ is adjacent with $S_v \cup Y_{2}$.
By similar arguments, no vertex in $V_j \setminus \{v_j\}$ is adjacent with $S_v \cup Y_{1}$. 
As $G[X_j]$ is connected and $U_j \subseteq S_u \cup Y_{1}; V_j \subseteq S_v \cup Y_{2}$, graphs $G[U_j], G[V_j]$ are connected.
Moreover, for $j' \in [q]$, sets $U_j, V_{j'}$ are adjacent if and only if $u_j, v_{j'}$ are adjacent.
Since $G[S_1\cup S_2]$ is a $(2 \times q)$-grid, $u_j$ and $v_{j'}$ are adjacent if and only if $j = j'$.
Hence $U_j$ and $V_{j'}$ are adjacent if and only if $j = j'$.
Since $U_j \subseteq X_j$ and $U_{j'} \subseteq X_{j'}$, $U_j$ and $U_{j'}$ are non adjacent if $|j - j'| > 1$.
If $|j - j'| = 1$ then $U_j, U_{j'}$ are adjacent as they contain $u_j$ and $u_{j'}$.
Hence $U_j, U_{j'}$ are adjacent if and only $|j - j'| = 1$.
By similar arguments, $V_j, V_{j'}$ are adjacent if and only if $|j - j'| = 1$.
As no vertex in $U_j$ is adjacent with $Y_{2}$ and no vertex in $V_j$ is adjacent with $Y_{1}$, we can conclude that partition $U_j, V_j$ satisfy all the properties in Definition~\ref{def:partible-row}.
Since $i_o^{th}$ row in $\grid_{(r-1)\times q}$ is partible, Lemma~\ref{lemma:partible-row} implies that $G$ is $k$-contractible to $\grid_{r \times q}$. This concludes the proof of reverse direction.
\end{proof}

The following lemma, which is analogous to
Lemma~\ref{lemma:find-hor-decomp}, is essential to argue that Reduction
Rule~\ref{rr:reduce-big-grid} can be applied in polynomial time.

\begin{lemma}\label{lemma:find-big-grid-sep} There exists an algorithm which given an instance $(G, k)$ of \textsc{Grid Contraction} and integers $p, t$ runs in polynomial time and either returns a $(p \times t)$-grid-separator of $G$ or correctly concludes that no such separator exits.
\end{lemma}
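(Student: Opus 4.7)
The plan is to mirror the proof of Lemma~\ref{lemma:find-hor-decomp}. The algorithm iterates over all adjacent pairs $(u_1, v_1)$ in $G$, treating them as candidates for two vertices in the first column of a potential $(p \times t)$-grid-separator $S$, say at positions $[1,1]$ and $[2,1]$. For each such guess, it tries to build $S$ greedily column-by-column, or equivalently row-by-row, and then verify the remaining conditions of Definition~\ref{def:pt-grid-sep}.

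For the construction, first invoke Lemma~\ref{lemma:nr-grid-sep} on the pair $(u_1, v_1)$ to obtain the unique $(2 \times t')$-subgrid (if any) whose first column is $\{u_1, v_1\}$ and whose two rows are both separators in $G$; if its width is not $t$, discard the candidate. Having fixed the first two rows $R_1$ and $R_2$ in this way, extend downward one row at a time: given $R_i = (a_{i,1}, \ldots, a_{i,t})$ constructed so that it is a separator in $G$ between the ``upper'' side (containing $R_1, \ldots, R_{i-1}$) and everything else, the next row $R_{i+1}$ is forced by the requirements that each $a_{i+1, j}$ must be adjacent to $a_{i, j}$ but not to $a_{i, j'}$ for $j' \neq j$, that $a_{i+1, 1}$ lies on the side of $R_i$ opposite to $R_{i-1}$, and that consecutive $a_{i+1, j}, a_{i+1, j+1}$ are adjacent while non-consecutive ones are not. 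The same case analysis as in the proof of Lemma~\ref{lemma:nr-grid-sep}---using the separator property of $R_i$ to push alternative candidates to the wrong side---gives uniqueness of each extension step. Continue until $p$ rows are built, or the extension fails.

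Once a candidate $S$ is obtained, the algorithm checks directly that $G[S] \cong \grid_{p \times t}$, that $G - S$ has exactly two connected components $C_1, C_2$, each of size at least $k+1$, and that $N(C_1) = R_1$ and $N(C_2) = R_p$. If all checks pass, output $S$; otherwise, continue with the next pair. If no pair yields a valid $S$, report that none exists. Polynomiality is clear: there are $\calO(n^2)$ candidate pairs, Lemma~\ref{lemma:nr-grid-sep} runs in polynomial time, the row-by-row extension performs at most $p \le n$ iterations each of polynomial cost, and the final verification is polynomial. The main technical obstacle is extending the uniqueness argument from $p = 2$ (which is exactly Lemma~\ref{lemma:nr-grid-sep}) to arbitrary $p$; however, because each intermediate row in the target configuration must itself separate two nonempty sides of $G$, the very same path-based contradiction used there rules out alternative choices when constructing $R_{i+1}$ from $R_i$.
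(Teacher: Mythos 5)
Your proof is correct, and it reaches the same destination as the paper's but by a different construction. The paper does not grow the grid row by row from a $2\times 1$ seed: instead it guesses the \emph{two endpoints} $u_1,u_p$ of the first column, requires that they be joined by a unique shortest path $(u_1,\dots,u_p)$ (which pins down the entire first column in one shot), and then invokes Lemma~\ref{lemma:nr-grid-sep} independently on the disjoint pairs $(u_i,u_{i+1})$ for $i=1,3,\dots,p-1$ to recover the grid as a union of $2\times t$ strips, followed by the same final verification against Definition~\ref{def:pt-grid-sep}. The advantage of the paper's route is that Lemma~\ref{lemma:nr-grid-sep} is used strictly as a black box and no new uniqueness argument is needed; the cost is the extra (and somewhat delicate) reliance on uniqueness of the shortest path between the column endpoints. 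Your route pays for the smaller seed by having to re-establish uniqueness at every row-extension step, which you correctly flag as the technical crux. That step does go through, and in fact more easily than your appeal to the path-based contradiction of Lemma~\ref{lemma:nr-grid-sep} suggests: in any valid $(p\times t)$-grid-separator the internal rows satisfy $N(C_1)=R_1$ and $N(C_2)=R_p$, so every neighbour of an internal vertex $a_{i,j}$ already lies in $S$, and the only neighbour of $a_{i,j}$ that is not in $R_{i-1}\cup R_i$ is $a_{i+1,j}$; hence if a valid configuration extending the current partial structure exists, each candidate is unique, and if the algorithm ever sees two candidates it may safely pick arbitrarily or abort, since the final verification catches any spurious output. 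Both algorithms are polynomial for the same reasons, so either write-up is acceptable.
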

\begin{proof}
The algorithm guesses the two vertices in the first column which are
in first and last row of a potential $(p \times t)$-grid-separator of the graph.
It considers all pairs of vertices $u_1, u_p$ in $G$ which are at
distance $p$ from each other and there is a unique shortest path
between $u_1$ and $u_p$.
Let $(u_1, u_2, \dots , u_p)$ be the unique shorted path. 
For $i \in \{1, 3, \dots, p - 1\}$,  the algorithm tries to find a  subset $S_i$ of $V(G)$ which has following properties: $(a)$ $G[S_i]$ is a $(2 \times t)$ grid, $(b)$ $u_i, u_{i+1}$ are two vertices in the first column of $G[S_i]$, and $(c)$ each row in $S_i$ is a separator in $G$.
If such subset exists for every $i$ then the algorithm checks if $G - S$, where $S = \bigcup_{i}S_i$, has two connected component each with at least $k + 1$ vertices.
If it is the case then the algorithm returns $S$.
If not it moves to the next pair of vertices.
If it does not find such a set for any pair of vertices, it concludes that the graph does not contain a $(p \times t)$-grid-separator.

The algorithm returns a $(p \times t)$-grid separators only it had found one. 
Now, suppose that the graph has $(p \times t)$-grid separator $S$.
Let $G[S] = \grid_{p \times t}$, $u_1 = [1, 1]$, and $u_p = [p, 1]$.
The vertices in the first row form a unique shortest path of length $p$ between $u_1, u_p$.
Moreover, every consecutive two rows in $G[S]$ satisfies the three conditions mentioned in the above paragraph.
Hence, if the graph has a $(p \times t)$-grid separator then the algorithm returns it.
It remains to argue the running time of the algorithm.
The algorithm runs over all pairs of vertices which are at distance $p$.
It can find all such pairs exhaustively within polynomial time.
The algorithm then uses  Lemma~\ref{lemma:nr-grid-sep} to find the desired sets for $p$ pair of vertices.
The running time of the overall algorithm is implied by Lemma~\ref{lemma:nr-grid-sep} and the fact that all other steps in it can be executed in polynomial time.
\end{proof}

We are now in a position to present the main result of the section.

\begin{theorem}\label{thm:kernel-grid} \textsc{Grid Contraction} admits a kernel with $\calO(k^4)$ vertices and edges.
\end{theorem}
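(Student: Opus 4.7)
The plan is to exhaustively apply Reduction Rules~\ref{rr:degree-rule} and~\ref{rr:reduce-big-grid} and then to bound the size of the resulting instance. First I would apply Reduction Rule~\ref{rr:degree-rule}, which either returns a trivial \no\ instance or certifies that every vertex of $G$ has degree at most $k+5$ and that at most $6k$ vertices have degree exceeding five. Then I would enter a loop: as long as $|V(G)| \ge k_o^2 + k + 1$, with $k_o = (4k+8)(k+1)+1 = \calO(k^2)$, I would use an adaptation of Lemma~\ref{lemma:find-big-grid-sep} (iterating over the polynomially many candidate widths $t$) to search for a $((4k+6)\times t)$-grid-separator of $G$. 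If none exists, then by the contrapositive of Lemma~\ref{lemma:existence-grid-sep} the instance is \no, and I would output a trivial \no\ instance. Otherwise, Reduction Rule~\ref{rr:reduce-big-grid} applies; it is safe by Lemma~\ref{lemma:rr-reduce-big-grid-safe} and strictly reduces $|V(G)|$ while leaving $k$ unchanged, so the loop terminates after at most $n$ rounds in overall polynomial time.

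Once the loop ends, the remaining instance $(G, k)$ satisfies $|V(G)| \le k_o^2 + k + 1 = \calO(k^4)$. To bound the edges, I would use the degree guarantees provided by Reduction Rule~\ref{rr:degree-rule}: the at most $6k$ vertices of degree more than five contribute at most $6k(k+5) = \calO(k^2)$ edges, and each of the remaining vertices has degree at most five, contributing $\calO(k^4)$ edges in total. A point that requires verification is that Reduction Rule~\ref{rr:reduce-big-grid} does not create new high-degree vertices. This follows from Definition~\ref{def:pt-grid-sep}: only vertices in the extreme rows $R_1, R_p$ of a grid-separator $S$ have neighbors outside $S$, so the internal rows $S_u, S_v$ being contracted consist of grid-interior vertices of degree at most four, and after the simultaneous contraction of all pairs $u_jv_j$ each new vertex $s_j$ is adjacent only to $s_{j-1}, s_{j+1}$ and to the vertex above $u_j$ and below $v_j$ in the grid, again of degree at most four. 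Hence the degree invariants maintained by Reduction Rule~\ref{rr:degree-rule} continue to hold throughout the kernelization, so the rule need not be reapplied.

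The hard part will be establishing the correctness of the termination criterion for the loop: one must argue that when no $((4k+6)\times t)$-grid-separator can be found in a graph with $|V(G)| \ge k_o^2 + k + 1$, it is safe to declare the instance \no. This is precisely the statement of Lemma~\ref{lemma:existence-grid-sep}, whose proof chooses a window of $4k+6$ consecutive rows in the hypothetical target grid that avoids every big witness set; such a window exists by a pigeonhole argument after accounting for at most $k$ rows touching big witness sets among roughly $k_o = \Omega(k^2)$ available ones. Combining the safety of the two reduction rules, the polynomial-time search of Lemma~\ref{lemma:find-big-grid-sep}, and the vertex and edge bounds above yields the desired $\calO(k^4)$-size kernel for \textsc{Grid Contraction}.
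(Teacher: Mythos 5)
Your proposal is correct and follows essentially the same route as the paper: exhaustively apply Reduction Rule~\ref{rr:degree-rule}, then repeatedly find a $((4k+6)\times t)$-grid-separator via Lemma~\ref{lemma:find-big-grid-sep} and contract it with Reduction Rule~\ref{rr:reduce-big-grid} until the graph has at most $k_o^2+k+1 = \calO(k^4)$ vertices or no separator exists (in which case Lemma~\ref{lemma:existence-grid-sep} certifies a \no\ instance), with the same degree-based edge count. Your extra check that Reduction Rule~\ref{rr:reduce-big-grid} creates no new high-degree vertices is a detail the paper leaves implicit, but it does not change the argument.
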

\begin{proof} Recall that $k_0 = (4k + 8)(k + 1) + 1$.
We assume that the input graph is connected as otherwise it is safe to conclude that we are working with a \no\ instance.
Given an instance of \textsc{Grid Contraction}, the kernelization algorithm exhaustively applies Reduction Rule~\ref{rr:degree-rule}.
Let $(G, k)$ be the resultant instance. 
If $k \le 0$ and $G$ is not a grid then the algorithm returns a \no\ instance.
If the number of vertices in the resulting instance is at most than $k_0^2 + k + 1 $ then the algorithm returns it as a kernel.
Consider a case when the number of vertices in the resulting instance is more than $k_0^2 + k + 1$.
The algorithm uses Lemma~\ref{lemma:find-big-grid-sep} to find the largest value of $t$ that  is smaller than $|V(G)|$
and there exists a $((4k + 6)\times t)$-grid-separator in $G$.
If no such $t$ exists then the algorithm returns a \no\ instance.
It then applies Reduction Rule~\ref{rr:reduce-big-grid} to obtain a smaller instance.
The algorithm repeats the process until the number of vertices in the reduces graph is at most $k_o^2 + k + 1$ or it can not find a $((4k + 6)\times t)$-grid-separator.
In the first case, it returns the reduced instance as a kernel while in another case it returns a \no\ instance.

By Lemma~\ref{lemma:existence-grid-sep}, if $(G, K)$ is a \yes\
instance there exists a $((4k + 6)\times t)$-grid-separator in $G$ for some integer $t$. This, along with Lemma~\ref{lemma:rr-degree-rule-safe} and \ref{lemma:rr-reduce-instance-safe} imply the correctness of the algorithm. 
The algorithms run in polynomial time by Lemma~\ref{lemma:find-big-grid-sep} and the fact that each application of Reduction Rule reduces the number of vertices by $t$.
As Reduction Rule~\ref{rr:reduce-big-grid} is not applicable, the reduced graph has $\calO(k^4)$ vertices.
Since Reduction Rule~\ref{rr:degree-rule} does not apply to the reduced instance, there are at most $6k$ vertices whose degree is more than $5$ and less than $k + 5$. 
The number of edges incident on these vertices is $\calO(k^2)$.
As remaining vertices have a degree at most $4$, the number of edges
in the reduced graph is $\calO(k^4)$.
This implies the reduced graph has the desired number of vertices and edges.
\end{proof}

\section{Conclusion}
\label{sec:conclusion}
In this article, we study the parameterized complexity of \textsc{Grid Contraction}.
We present an \FPT\ algorithm running that given an instance $(G, k)$ of the problem runs in time $4^{6k} \cdot n^{\mathcal{O}(1)}$ and correctly determines whether it is a \yes\ instance.
We present a notation of $r$-slab which is a generalization of a connected component of a graph.
We believe this or similar notation might be useful to get  \FPT\ or Exact Exponential Algorithms.
We prove that unless \ETH\ fails, there is no algorithm for \textsc{Grid Contraction} that runs in time $2^{o(k)} \cdot n^{\mathcal{O}(1)}$.
Finally, we prove that problem admits a kernel with $\calO(k^4)$ vertices and edges.

To the best of our knowledge, \textsc{Grid Contraction} is the only problem that admits a polynomial kernel when target graph class has unbounded path-width.
It is an interesting question to find another graph class $\calG$ such that $\calG$ has an unbounded width-parameter and $\calG$-\textsc{Contraction} admits a polynomial kernel.

\bibliographystyle{plainurl}% the recommended bibstyle

\bibliography{references.bib}

\end{document}